\def\UseBibLatex{1}
\newcommand{\SubmitVer}[1]{}
\newcommand{\RegVer}[1]{#1}
\newcommand{\SubmitVer}[1]{#1}
\newcommand{\RegVer}[1]{}
\newcommand{\SarielComp}[1]{}
\newcommand{\NotSarielComp}[1]{#1}%
\newcommand{\SarielComp}[1]{#1}%
\newcommand{\NotSarielComp}[1]{}%
\newcommand{\IfPrinterVer}[2]{#2}%
\DeclareFontFamily{U}{BOONDOX-calo}{\skewchar\font=45 }
\DeclareFontShape{U}{BOONDOX-calo}{m}{n}{
  <-> s*[1.05] BOONDOX-r-calo}{}
\DeclareFontShape{U}{BOONDOX-calo}{b}{n}{
  <-> s*[1.05] BOONDOX-b-calo}{}
\DeclareMathAlphabet{\mathcalb}{U}{BOONDOX-calo}{m}{n}
\SetMathAlphabet{\mathcalb}{bold}{U}{BOONDOX-calo}{b}{n}
\DeclareMathAlphabet{\mathbcalb}{U}{BOONDOX-calo}{b}{n}
\numberwithin{figure}{section}%
\numberwithin{table}{section}%
\numberwithin{equation}{section}%
  \spnewtheorem*{remarks}{Remarks}{\it}{\rm}
\theoremstyle{plain}%
\newtheorem{theorem}{Theorem}[section]
\newtheorem{lemma}[theorem]{Lemma}
\theoremstyle{plain}%
\newtheorem*{remark:unnumbered}[theorem]{Remark}%
\newtheorem*{remarks}[theorem]{Remarks}%
\newtheorem{remark}[theorem]{Remark}%
\newtheorem{definition}[theorem]{Definition}
\newcommand{\myqedsymbol}{\rule{2mm}{2mm}}
\theoremstyle{nonumberplain}%
\newtheorem{proof}{Proof:}%
\newcommand{\atgen}{\symbol{'100}}
\newcommand{\SarielThanks}[1]{\thanks{Department of Computer Science;
      University of Illinois; 201 N. Goodwin Avenue; Urbana, IL,
      61801, USA; {\tt sariel\atgen{}illinois.edu}; {\tt
         \url{http://sarielhp.org/}.} #1}}
\newcommand{\HLink}[2]{\hyperref[#2]{#1~\ref*{#2}}}
\newcommand{\HLinkSuffix}[3]{\hyperref[#2]{#1\ref*{#2}{#3}}}
\newcommand{\figlab}[1]{\label{fig:#1}}
\newcommand{\figref}[1]{\HLink{Figure}{fig:#1}}
\newcommand{\thmlab}[1]{{\label{theo:#1}}}
\newcommand{\thmref}[1]{\HLink{Theorem}{theo:#1}}
\newcommand{\apndlab}[1]{\label{apnd:#1}}
\newcommand{\apndref}[1]{\HLink{Appendix}{apnd:#1}}
\newcommand{\lemlab}[1]{\label{lemma:#1}}
\newcommand{\lemref}[1]{\HLink{Lemma}{lemma:#1}}%
\newcommand{\seclab}[1]{\label{sec:#1}}
\newcommand{\secref}[1]{\HLink{Section}{sec:#1}}
\providecommand{\deflab}[1]{\label{def:#1}}
\newcommand{\defref}[1]{\HLink{Definition}{def:#1}}
\newcommand{\defrefY}[2]{\hyperref[def:#2]{#1}}
\newcommand{\MitchellThanks}[1]{%
   \thanks{%
      Department of Computer Science;
      University of Illinois; 201 N. Goodwin Avenue; Urbana, IL,
      61801, USA; {\tt mfjones2\atgen{}illinois.edu}; {\tt
         \url{http://mfjones2.web.engr.illinois.edu/}.} #1}}
\newcommand{\RahulThanks}[1]{%
   \thanks{%
      Department of Computer Science and Automation;
      Indian Institute of Science; 
      Mathikere, Bengaluru, Karnataka 560012, India; 
      {\tt saladi\atgen{}iisc.ac.in}; {\tt
         \url{https://www.csa.iisc.ac.in/\~saladi/}.} #1}}
\providecommand{\eqlab}[1]{}%
\renewcommand{\eqlab}[1]{\label{equation:#1}}
\newcommand{\Eqref}[1]{\HLinkSuffix{Eq.~(}{equation:#1}{)}}
\newcommand{\remove}[1]{}%
\newcommand{\Set}[2]{\left\{ #1 \;\middle\vert\; #2 \right\}}
\newcommand{\pth}[2][\!]{\mleft({#2}\mright)}%
\newcommand{\pbrcx}[1]{\left[ {#1} \right]}%
\newcommand{\Prob}[1]{\mathop{\mathbf{Pr}}\!\pbrcx{#1}}
\newcommand{\Ex}[2][\!]{\mathop{\mathbf{E}}#1\pbrcx{#2}}
\newcommand{\ceil}[1]{\left\lceil {#1} \right\rceil}
\newcommand{\floor}[1]{\left\lfloor {#1} \right\rfloor}
\newcommand{\brc}[1]{\left\{ {#1} \right\}}
\newcommand{\cardin}[1]{\left| {#1} \right|}%
\newcommand{\tldO}{\widetilde{O}}
\renewcommand{\th}{th\xspace}
\newlist{compactenumA}{enumerate}{5}%
\setlist[compactenumA]{topsep=0pt,itemsep=-1ex,partopsep=1ex,parsep=1ex,%
   label=(\Alph*)}%
\newlist{compactenuma}{enumerate}{5}%
\setlist[compactenuma]{topsep=0pt,itemsep=-1ex,partopsep=1ex,parsep=1ex,%
   label=(\alph*)}%
\newlist{compactenumi}{enumerate}{5}%
\setlist[compactenumi]{topsep=0pt,itemsep=-1ex,partopsep=1ex,parsep=1ex,%
   label=(\roman*)}%
\newlist{compactenumi*}{enumerate*}{5}%
\setlist[compactenumi*]{topsep=0pt,itemsep=-1ex,partopsep=1ex,parsep=1ex,%
   label=(\roman*)}%
\definecolor{blue25emph}{rgb}{0, 0, 11}
\providecommand{\emphic}[2]{%
   \textcolor{blue25emph}{%
      \textbf{\emph{#1}}}%
   \index{#2}}
\providecommand{\emphi}[1]{\emphic{#1}{#1}}
\newcommand{\Caratheodory}{Carath\'eodory\xspace}
\newcommand{\Turan}{Tur\'an\xspace}
\newcommand{\etal}{et~al.\xspace~}
\providecommand{\Matousek}{Matou{\v s}ek\xspace}
\providecommand{\Mh}[1]{#1}%
\newcommand{\Interval}{\Mh{I}}%
\newcommand{\IntervalA}{\Mh{J}}%
\renewcommand{\Re}{\mathbb{R}}%
\newcommand{\eps}{\varepsilon}%
\newcommand{\inapx}{\Mh{B}}
\newcommand{\PS}{\Mh{P}}%
\newcommand{\USet}{\Mh{U}}%
\newcommand{\USetin}[1]{\Mh{U^{#1}_{\mathrm{in}}}}
\newcommand{\USetout}[1]{\Mh{U^{#1}_{\mathrm{out}}}}
\newcommand{\Sample}{\Mh{R}}%
\newcommand{\SSet}{\Mh{S}}%
\newcommand{\TSet}{T}
\newcommand{\PSA}{\Mh{T}}%
\newcommand{\PSB}{\Mh{U}}%
\renewcommand{\th}{th\xspace}
\newcommand{\body}{\Mh{C}}%
\newcommand{\indexX}[1]{\Mh{\varhexagon}^{}_{#1}}%
\newcommand{\pdisk}{\Mh{\mathcalb{d}}}%
\newcommand{\pnt}{\Mh{p}}%
\newcommand{\pntB}{\Mh{r}}
\newcommand{\pntC}{\Mh{u}}
\newcommand{\pntD}{\Mh{v}}
\newcommand{\pntq}{\Mh{q}}
\newcommand{\pa}{p}
\newcommand{\pb}{q}
\newcommand{\ps}{s}
\newcommand{\query}{\Mh{q}}%
\newcommand{\pntx}{\Mh{x}}%
\newcommand{\cpnt}{\Mh{\mathcalb{c}}}%
\newcommand{\indep}{\Mh{\alpha}}%
\newcommand{\clique}{\Mh{\omega}}%
\newcommand{\RSet}{\Mh{R}}%
\newcommand{\IY}[2]{\Mh{I}_{#1}(\pnt)} \newcommand{\IX}[1]{\Mh{I}(#1)}
\newcommand{\depthX}[1]{\mathrm{depth}\pth{#1}}
\newcommand{\SegSet}{\Mh{S}}%
\newcommand{\AlgorithmI}[1]{{%
      \textcolor[named]{AlgorithmColor}{\texttt{\bf{#1}}}%
   }}
\newcommand{\intX}[1]{\Mh{\mathrm{int}}\pth{#1}}%
\newcommand{\Line}{\Mh{\mathcalb{l}}}%
\newcommand{\LineA}{\Mh{\mathcalb{h}}}%
\newcommand{\LineB}{\Mh{\mathcalb{e}}}%
\newcommand{\ExpandOp}{\AlgorithmI{expand}\xspace}%
\newcommand{\RemoveOp}{\AlgorithmI{remove}\xspace}%
\newcommand{\avgdeg}{{d}_\mathrm{avg}}%
\newcommand{\arc}{\Mh{\gamma}}%
\newcommand{\IntSet}{\Mh{\mathcal{V}}}%
\newcommand{\vertices}{\mathcal{V}}
\newcommand{\Graph}{\Mh{G}}%
\newcommand{\Hgraph}{\Mh{H}}
\newcommand{\hedges}{\Mh{\mathcal{E}}}
\newcommand{\union}{\mathcal{U}}
\newcommand{\unionX}[1]{\union\pth{#1}}
\newcommand{\Arr}{\mathcal{A}}%
\newcommand{\ArrX}[1]{\Arr\pth{#1}}%
\newcommand{\sphereC}{\Mh{\mathbb{S}}}%
\newcommand{\simplex}{K}
\newcommand{\vDY}[2]{\Mh{V}_{\!\leq #1}\pth{#2}}
\newcommand{\depthk}{\Mh{k}}%
\newcommand{\event}{\mathcal{E}}
\newcommand{\constA}{c_1}%
\newcommand{\Edges}{\Mh{E}}%
\newcommand{\EdgesX}[1]{\Edges\pth{#1}}%
\newcommand{\nEdgesX}[1]{\Mh{m}_{#1}}%
\newcommand{\tuplesZ}[3]{L\pth{ #1, #2, #3}}%
\newcommand{\depthY}[2]{\mathrm{depth\pth{#1, #2}}}%
\newcommand{\PDSet}{\Mh{\mathcal{P}}}%
\newcommand{\Pocket}{\Mh{\Upsilon}}%
\newcommand{\nVX}[1]{\cardin{#1}}%
\providecommand{\BibLatexMode}[1]{}
\providecommand{\BibTexMode}[1]{#1}
  \renewcommand{\BibLatexMode}[1]{}
  \renewcommand{\BibTexMode}[1]{#1}
  \renewcommand{\BibLatexMode}[1]{#1}
  \renewcommand{\BibTexMode}[1]{}
\newcommand{\UsePackage}[1]{%
  \IfFileExists{../styles/#1.sty}{%
      \usepackage{../styles/#1}%
   }{%
      \IfFileExists{./styles/#1.sty}{%
         \usepackage{styles/#1}%
      }{%
         \usepackage{#1}%
      }%
   }%
}
\newcommand{\Polygon}{\Mh{\sigma}}%
\newcommand{\PolygonA}{\Mh{\pi}}%
\newcommand{\LinesX}[1]{\Mh{L}\pth{#1}}%
\newcommand{\Qin}{\Mh{Q_{\mathrm{in}}}}%
\newcommand{\Qout}{\Mh{Q_{\mathrm{out}}}}%
\newcommand{\Fin}{\Mh{F_{\mathrm{in}}}}%
\newcommand{\Fout}{\Mh{F_{\mathrm{out}}}}%
\newcommand{\ch}{\mathcal{D}}
\newcommand{\ChUp}{\Mh{\ch^+}}
\newcommand{\ChRangeY}[2]{\Mh{\ChUp}_{[#1:#2]}}%
\newcommand{\priceC}{\Mh{\varocircle}}%
\newcommand{\priceY}[2]{\priceC\pth{#1, #2}}%
\newcommand{\edge}{\Mh{e}}%
\newcommand{\edgeB}{\Mh{f}}%
\newcommand{\startX}[1]{\Mh{s}\pth{#1}}%
\newcommand{\IPSet}{\Mh{\Xi}}%
\newcommand{\IPSetZ}[3]{\IPSet\pth{#1, #2, #3}}%
\newcommand{\DotProdY}[2]{\left\langle #1, #2 \right\rangle}
\newcommand{\SaveContent}[2]{%
   \expandafter\newcommand{#1}{#2}%
}
\newcommand{\Domain}{\Mh{{D}}}%
\newcommand{\iEmpty}{\Mh{z}}%
\newcommand{\VSet}{\Mh{{V}}}%
\newcommand{\CHX}[1]{{\mathcal{CH}}\pth{#1}}
\newcommand{\normX}[1]{\| #1 \|_2}
\newcommand{\fn}{\Mh{f}}
\newcommand{\LvSetY}[2]{\mathcal{L}_{#1}\pth{#2}}
\newcommand{\opt}{\alpha}
\newcommand{\TX}{T}
\newcommand{\Tree}{\mathcal{T}}
\newcommand{\VolX}[1]{\textsf{vol}(#1)}
\newcommand{\BX}[1]{\partial #1}
\newlength{\savedparindent}
   \newcommand{\myparagraph}[1]{%
      \paragraph{#1}%
    }%
   \newcommand{\myparagraph}[1]{%
      \paragraph*{#1}%
   }
\newcommand{\ts}{\hspace{0.6pt}}
\title{Active-Learning a Convex Body in Low Dimensions%
   \thanks{A preliminary version appeared in ICALP
      2020~\cite{hjr-alcbl-20}.}}%
\date{\today}
  \author{Sariel Har-Peled\thanks{Sariel Har-Peled is supported in part by NSF AF award CCF-1907400.} \and Mitchell Jones*\thanks{* Corresponding author.} \and Saladi Rahul}
  \institute{
    S. Har-Peled \at
    Dept.~of Computer Science, University of Illinois at
    Urbana-Champaign, USA\\
    \email{sariel@illinois.edu}
    \and
    M. Jones \at
    Dept.~of Computer Science, University of Illinois at
    Urbana-Champaign, USA\\
    \email{mfjones2@illinois.edu}
    \and
    S. Rahul
    \at
    Dept.~of Computer Science and Automation, Indian Institute
    of Science, India\\
    \email{saladi@iisc.ac.in}
  }
  \authorrunning{S. Har-Peled, M. Jones, and S. Rahul}
  \journalname{Algorithmica}
   \author{%
      Sariel Har-Peled%
      \SarielThanks{}%
      \and%
      Mitchell Jones%
      \MitchellThanks{}%
      \and%
      Saladi Rahul%
      \RahulThanks{}%
 }%
\begin{document}

\maketitle

\begin{abstract}
    Consider a set $\PS \subseteq \Re^d$ of $n$ points, and a convex
    body $\body$ provided via a separation oracle. The task at hand is
    to decide for each point of $\PS$ if it is in $\body$ using the
    fewest number of oracle queries. We show
    that one can solve this problem in two and three dimensions using
    $O( \indexX{\PS} \log n)$ queries, where $\indexX{\PS}$ is the
    size of the largest subset of points of $\PS$ in convex position.
    In 2D, we provide an algorithm that efficiently generates these
    adaptive queries.

    Furthermore, we show that in two dimensions one can
    solve this problem using $O( \priceY{\PS}{\body} \log^2 n )$
    oracle queries, where $\priceY{\PS}{\body}$ is a lower bound
    on the minimum number of queries that any algorithm for this
    specific instance requires.
    Finally, we consider other variations on the problem, such
    as using the fewest number of queries to decide if $\body$
    contains all points of $\PS$.

    As an application of the above, we show that the discrete
    geometric median of a point set $P$ in $\Re^2$ can be computed
    in $O(n\log^2 n\,(\log n \log\log n + \indexX{\PS}))$ expected
    time.
\end{abstract}

\SubmitVer{\keywords{Approximation algorithms, computational geometry, separation oracles, active learning}}

\section{Introduction}

\subsection{Background}
\paragraph{Active learning.}
Active learning is a subfield of machine learning. At any
time, the learning algorithm is able to query an oracle for the label
of a particular data point. One model for active learning is the
membership query synthesis model \cite{a-qcl-87}. Here, the learner
wants to minimize the number of oracle queries, as such queries are
expensive---they usually correspond to either consulting with a
specialist, or performing an expensive computation. In this setting,
the learning algorithm is allowed to query the oracle for the label of
any data point in the instance space. See \cite{s-alls-09} for a more
in-depth survey on the various active learning models.

\begin{figure}[t]
    \phantom{}%
    \hfill%
    \includegraphics%
    [page=4,width=0.35\linewidth]%
    {figs/linear_classifier_net}%
    \hfill%
    \includegraphics%
    [page=4,width=0.4\linewidth]%
    {figs/convex_body_net}%
    \hfill\phantom{}%
    \caption{The shaded region shows the symmetric difference between
       the hypothesis and true classifier.  (I) Learning
       halfspaces. (II) Learning arbitrary convex regions.}%
    \figlab{active:learning}
\end{figure}

\myparagraph{PAC learning.}
A classical approach for learning is using random sampling, where one
gets labels for the samples (i.e., in the above setting, the oracle is
asked for the labels of all items in the random sample). PAC learning
studies the size of the sample needed.  For example, consider the
problem of learning a halfplane for $n$ points $\PS \subset \Re^2$,
given a parameter $\eps \in (0,1)$. The first stage is to take a
labeled random sample $\RSet \subseteq \PS$.  The algorithm computes
any halfplane that classifies the sample correctly (i.e., the
hypothesis).  The misclassified points lie in the symmetric difference
between the learned halfplane, and the (unknown) true halfplane, see
\figref{active:learning}. In this case, the error region is a double
wedge, and it is well known that its VC-dimension \cite{vc-ucrfe-71}
is a constant (at most eight). As such, by the $\eps$-net Theorem
\cite{hw-ensrq-87}, a sample of size $O(\eps^{-1} \log\eps^{-1})$ is
an $\eps$-net for double wedges, which implies that this random
sampling algorithm has at most $\eps n$ error.

A classical example of a hypothesis class that cannot be learned is
the set of convex regions (even in the plane). Indeed, given a set of
points $\PS$ in the plane, any sample $\RSet \subseteq \PS$ cannot
distinguish between the true region being $\CHX{\RSet}$ or
$\CHX{\PS}$. Intuitively, this is because the hypothesis space in this
case grows exponentially in the size of the sample (instead of
polynomially). See \figref{convex:example}.

\begin{figure}[h]
    \centerline{\includegraphics{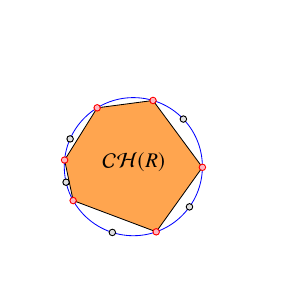}}
    \caption{}
    \figlab{convex:example}
\end{figure}

We stress that the above argument does not necessarily imply these
types of hypothesis classes are unlearnable in practice. In general,
there are other ways for learning algorithms to handle hypothesis
classes with high (or even infinite) VC-dimension (for example, using
regularization or assuming there is a large margin around the
decision boundary).

\myparagraph{Weak $\eps$-nets.}
Because $\eps$-nets for convex ranges do not exist, an interesting
direction to overcome this problem is to define weak $\eps$-nets
\cite{hw-ensrq-87}. A set of points $\RSet$ in the plane, not
necessarily a subset of $\PS$, is a \emph{weak $\eps$-net} for $\PS$
if for any convex body $\body$ containing at least $\eps n$ points of
$\PS$, it also contains a point of $\RSet$.  \Matousek and Wagner
\cite{mw-ncwe-03} gave a weak $\eps$-net construction of size
$O(\eps^{-d}(\log \eps^{-1})^{O(d^2 \log d)})$, which is doubly
exponential in the dimension. The state of the art is the recent
result of Rubin \cite{r-ibwen-18}, that shows a weak $\eps$-net
construction in the plane of size (roughly) $O(1/\eps^{3/2})$.
However, these weak $\eps$-nets cannot be used for learning such
concepts.  Indeed, the analysis above required an $\eps$-net for the
symmetric difference of two convex bodies of finite complexity, see
\figref{active:learning}.

\myparagraph{PAC learning with additional parameters.}
If one assumes the input instance obeys some additional structural
properties, then random sampling can be used. For example, suppose
that the point set $\PS$ has at most $k$ points in convex position.
For an arbitrary convex body $\body$, the convex hull
$\CHX{\PS \cap \body}$ has complexity at most $k$. Let
$\RSet \subseteq \PS$ be a random sample, and $\body'$ be the
learned classifier for $\RSet$. The region of error is the
symmetric difference between $\body$ and $\body'$.
In particular, since $k$-vertex polytopes in $\Re^d$
have VC-dimension bounded by $O(d^2 k\log k)$ \cite{k-vcdkvdp-20},
this implies that the error region also has VC-dimension at most
$O(d^2 k\log k)$. Hence if $\RSet$ is a random
sample of size $O(d^2 k\log k\eps^{-1} \log \eps^{-1})$, the
$\eps$-net Theorem \cite{hw-ensrq-87} implies that this sampling
algorithm has error at most $\eps n$. However, even for
a set of $n$ points chosen uniformly at random from the unit square
$[0,1]^2$, the expected number of points in convex position is
$O(n^{1/3})$ \cite{ab-lcc-09}. Since we want $\cardin{\RSet} < n$,
this  random sampling technique is only useful when $\eps$ is larger
than $\log^2 n /n^{2/3}$ (ignoring constants).

\smallskip\noindent%
To summarize the above discussions, random sampling on its own does
not seem powerful enough to learn arbitrary convex bodies, even if one
allows some error to be made. In this paper we focus on developing
algorithms for learning convex bodies in low dimensions, where the
algorithms are deterministic and do not make any errors.

\begin{figure}[t]
    \phantom{}%
    \hfill%
    \includegraphics[page=1,width=0.25\linewidth]{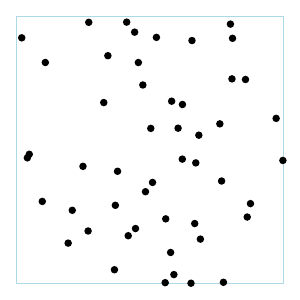}
    \hfill
    \includegraphics[page=2,width=0.25\linewidth]{figs/classify}
    \hfill
    \includegraphics[page=3,width=0.25\linewidth]{figs/classify}%
    \hfill\phantom{}%
    \caption{(I) A set of points $\PS$. (II) The unknown
    convex body $\body$. (III) Classifying all points of $\PS$ as
    either inside or outside $\body$.}%
    \figlab{classify}
\end{figure}

\subsection{Problem and motivation}

\myparagraph{The problem.}
In this paper, we consider a variation on the active learning problem,
in the membership query synthesis model. Suppose that the learner is
trying to learn an unknown convex body $\body$ in $\Re^d$.
Specifically, the learner is provided with a set $\PS$ of $n$
unlabelled points in $\Re^d$, and the task is to label each point as
either inside or outside $\body$, see \figref{classify}. For a query
$\query \in \Re^d$, the oracle either reports that $\query \in \body$,
or returns a hyperplane separating $\query$ and $\body$ (as a proof
that $\query \not\in \body$). Note that if the query is outside the
body, the oracle answer is significantly more informative than just
the label of the point. The problem is to minimize the overall number
of queries performed.

\myparagraph{Hard and easy instances.}
Note that in the worst case, an algorithm may have to query the oracle for
all input points---such a scenario happens when the input points are
in convex position, and any possible subset of the points can be the
points in the (appropriate) convex body. As such, the purpose here is
to develop algorithms that are \emph{instance sensitive}---if the
given instance is easy, they work well. If the given instance is hard,
they might deteriorate to the naive algorithm that queries all points.

Natural inputs where one can hope to do better, are when relatively
few points are in convex position. Such inputs are grid points, or
random point sets, among others.  However, there are natural instances
of the problem that are easy, despite the input having many points in
convex position. For example, consider when the convex body is a
triangle, with the input point set being $n/2$ points spread uniformly
on a tiny circle centered at the origin, while the remaining $n/2$
points are outside the convex body, spread uniformly on a circle of
radius $10$ centered at the origin, see \figref{easy}. Clearly, such a
point set can be fully classified using a sequence of a constant
number of oracle queries.  See \figref{easy:not:easy} for some related
examples.%

\begin{figure}[h]
    \centerline{\includegraphics{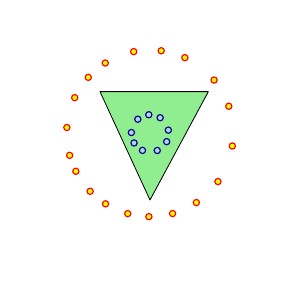}}%
    \caption{}
    \figlab{easy}
\end{figure}%

\myparagraph{Discretely optimizing convex functions.}

As an application of this particular query model, we explore the
connection between active-learning a convex body and minimizing a convex
function. Concretely, suppose we are given a set of $n$ points $\PS$ in the
plane and a convex function $\fn : \Re^2 \to \Re$ equipped with an oracle that
can evaluate $\fn$ or the derivative of $\fn$ at a given point. Our goal is to
compute the point in $\PS$ minimizing $\min_{\pnt \in \PS} \fn(\pnt)$ using
the fewest number of oracle queries (i.e., evaluations of $\fn$ or the
derivative). We discuss the result in full in \secref{applications}.

We show that there is a natural connection between the studied query model and
this problem. Namely, the level sets of a convex function are convex bodies,
and the gradient of $\fn$ can be used to construct separating lines for the
level set. Thus, developing algorithms for active-learning a convex body in the
membership query synthesis model in conjunction with the two aforementioned
facts leads to alterative methods for minimizing a convex function over a
discrete collection of points. Importantly, the running time of such algorithms
depend not only on how quickly we can evaluate $\fn$, but also on the structure
of the point set $\PS$, as we aim to develop instance sensitive algorithms.

\subsection{Additional motivation \& previous work}
\seclab{motiv:prev:work}

\myparagraph{Separation oracles.}
The use of separation oracles is a common tool in optimization
(e.g., solving exponentially large linear programs) and operations
research. It is natural to ask what other problems can be solved
efficiently when given access to this specific type of oracle.
For example, B{\'{a}}r{\'{a}}ny and F{\"{u}}redi \cite{bf-cvd-87}
study the problem of computing the volume of a convex body in
$\Re^d$ given access to a separation oracle.

\myparagraph{Other types of oracles.}
Various models of computation utilizing oracles have been previously studied within the computational geometry community.
Examples of other models include nearest-neighbor oracles (i.e.,
black-box access to nearest neighbor queries over a point set
$\PS$) \cite{hkmr-sevps-16}, proximity probes (in which given a
convex polygon $\body$ and a query $\query$, returns the distance
from $\query$ to $\body$) \cite{pavg-eppam-13}, and linear queries.
Recently, Ezra and Sharir \cite{es-nqbplha-19} gave an improved
algorithm for the problem of point location in an arrangement of
hyperplanes. Here, a \emph{linear query} consists of a point $x$ and a
hyperplane $h$, and outputs either that $x$ lies on $h$, or else the
side of $h$ contains $x$.  Alternatively, their problem can be
interpreted as querying whether or not a given point lies in a
halfspace $h^+$. Here, we study the more general problem as the
convex body can be the intersection of many halfspaces.

Furthermore, other types of active learning models
(in addition to the membership query model) have also been studied within
the learning community, see, for example, \cite{a-qcl-87}.

\myparagraph{Active learning.}
As discussed, the problem at hand can
be interpreted as active-learning a convex body in relation to a set
of points $\PS$ that need to be classified (as either inside or
outside the body), where the queries are via a separation oracle. We
are unaware of any work directly on this problem in the computational
geometry community, while there is some work in the learning
community that studies related active-learning classification
problems \cite{cal-igwal-94,gg-oalumi-07,s-alls-09,klmz-accq-17}.

For example, Kane \etal \cite{klmz-accq-17} study the problem of
actively learning halfspaces with access to \emph{comparison queries}.
Given a halfspace $h^+$ to learn, the model has two types
of queries:
\begin{compactenumi*}
  \item label queries (given $x \in \Re^d$,  is $x \in h^+$?), and
  \item comparison queries (given $x_1, x_2 \in \Re^d$, is $x_1$
  closer to the boundary of $h^+$ than $x_2$?).
\end{compactenumi*}
For example, they show that in the plane, one can classify all
points using $O(\log n)$ comparison/label queries in expectation.

\subsection{Our results}

\begin{compactenumA}
  \item We develop a greedy algorithm, for points in the plane,
  that solves the problem using $O( \indexX{\PS} \log n)$ oracle
  queries, where $\indexX{\PS}$ is the size of the largest subset of
  points of $\PS$ in convex position. See \thmref{greedy-method}.
  It is known that for a random set of $n$ points in the unit square,
  $\Ex{\indexX{\PS}} = \Theta( n^{1/3})$ \cite{ab-lcc-09}, which
  readily implies that classifying these points can be solved using
  $O( n^{1/3} \log n)$ oracle queries. A similar bound holds for the
  $\sqrt{n} \times \sqrt{n}$ grid. An animation of this algorithm is
  on YouTube \cite{j-agca-18}. We also show that this algorithm
  can be implemented efficiently, using dynamic segment trees,
  see \lemref{impl-greedy}.

  We remark that Kane \etal \cite{klmz-accq-17} develop a framework
  and randomized algorithm for learning a concept $\body$, where the
  expected number of queries depends near-linearly on a parameter they
  define as the \emph{inference dimension} \cite[Definition
  III.1]{klmz-accq-17} of the concept class.  For our problem, one can
  show that the inference dimension is $O(\indexX{\PS})$.  As a
  corollary of their framework, one can obtain a randomized algorithm
  that solves our problem where the expected number of queries is
  $O(\indexX{\PS} \log n)$, which matches the performance of our 
  deterministic algorithm. See \secref{inference} for details.

  \medskip%
  \item The above algorithm naturally extends to three dimensions,
  also using $O(\indexX{\PS} \log n)$ oracle queries. While the
  proof idea is similar to that of the algorithm in 2D, we believe
  the analysis in three dimensions is also technically interesting.
  See \thmref{greedy-method-3d}.

  \medskip%
  \item For a given point set $\PS$ and convex body $\body$, we
  define the separation price $\priceY{\PS}{\body}$ of an instance
  $(\PS, \body)$, and show that any algorithm classifying the points
  of $\PS$ in relation to $\body$ must make at least
  $\priceY{\PS}{\body}$ oracle queries (\lemref{lower:bound}).

  As an aside, we show that when $\PS$ is a set of $n$ points
  chosen uniformly at random from the unit square and $\body$
  is a (fixed) smooth convex body,
  $\Ex{\priceY{\PS}{\body}} = O(n^{1/3})$, and this bound is
  tight when $\body$ is a disk (our result also
  generalizes to higher dimensions, see \lemref{ex:sep:pr}).
  For randomly chosen points, the separation price is related
  to the expected size of the convex hull of $\PS \cap \body$,
  which is also known to be $\Theta(n^{1/3})$ \cite{b-rpcba-07}.
  We believe this result may be of independent interest, see
  \apndref{ex:sep:pr}.

  \medskip%
  \item In \secref{improved:2d} we present an improved algorithm
  for the 2D problem, and show that the number of queries made is
  $O(\priceY{\PS}{\body} \log^2 n)$. This result is $O(\log^2 n)$
  approximation to the optimal solution, see
  \thmref{improved:alg:2d}.

  \medskip%
  \item We consider the extreme scenarios of the problem: Verifying
  that all points are either inside or outside of $\body$. For each
  problem we present a $O(\log n)$ approximation algorithm to the
  optimal strategy. The results are presented in
  \secref{emptiness:2d}, see \lemref{greedy-method-empty} and
  \lemref{reverse:emptiness}.

  \medskip%
  \item \secref{applications} presents an application of the above
  results, we consider the problem of minimizing a \emph{convex} function
  $\fn : \Re^2\to \Re$ over a point set $\PS$. Specifically, the
  goal is to compute $\arg\min_{\pnt \in \PS} \fn(\pnt)$. If $\fn$
  and its derivative can be efficiently evaluated at a given query
  point, then $\fn$ can be minimized over $\PS$ using
  $O(\indexX{\PS} \log^2 n)$ queries to $\fn$ (or its derivative) in
  expectation. We refer the reader to \lemref{discrete:min}.

  Given a set of $n$ points $\PS$ in $\Re^d$, the discrete geometric
  median of $\PS$ is a point $\pnt \in \PS$ minimizing the function
  $\sum_{\pntq \in \PS} \normX{\pnt - \pntq}$.  As a corollary of
  \lemref{discrete:min}, we obtain an algorithm for computing the
  discrete geometric median for $n$ points in the plane. The algorithm
  runs in $O(n\log^2 n \cdot (\log n \log\log n + \indexX{\PS}))$
  expected time. See \lemref{discrete:med}. In particular, if $\PS$
  is a set of $n$ points chosen uniformly at random from the unit
  square, it is known that $\Ex{\indexX{\PS}} = \Theta(n^{1/3})$
  \cite{ab-lcc-09} and hence the discrete geometric median can be
  computed in $O(n^{4/3} \log^2 n)$ expected time.

  While there has been ample work on approximating the geometric
  median (recently, Cohen \etal \cite{clmps-gmnlt-16} gave a
  $(1+\eps)$-approximation algorithm to the geometric median in
  $O(dn \log^3 (1/\eps))$ time), we are unaware
  of any \emph{exact} sub-quadratic algorithm for the discrete
  case even in the plane.
\end{compactenumA}

Throughout this paper, the model of computation we have assumed is
unit-cost real RAM.

\section{The greedy algorithm in two and three dimensions}
\seclab{greedy-2d-3d}

\subsection{Preliminaries}

For a set of points $\PS \subseteq \Re^2$, let $\CHX{\PS}$ denote
the convex hull of $\PS$.  Given a convex body
$\body \subseteq \Re^d$, two points
$\pnt, \pntx \in \Re^d \setminus \intX{\body}$ are \emphi{mutually visible},
if the segment $\pnt \pntx$ does not intersect $\intX{\body}$, where
$\intX{\body}$ is the interior of $\body$.
We also use the notation
$\PS \cap \body = \Set{\pnt \in \PS}{\pnt \in \body}$.

For a point set $\PS \subseteq \Re^d$, a \emphi{centerpoint} of
$\PS$ is a point $\cpnt \in \Re^d$, such that for any closed
halfspace $h^+$ containing $\cpnt$, we have
$\cardin{h^+ \cap \PS} \geq \cardin{\PS}/(d+1)$.  A centerpoint
always exists, and it can be computed exactly in
$O(n^{d-1} + n \log n)$ time \cite{c-oramt-04}.

Let $\body$ be a convex body in $\Re^d$ and $\pb \in \Re^d$ be a point
such that $\pb$ lies outside $\body$. A hyperplane $h$
\emphi{separates} $\pb$ from $\body$ if $\pb$ lies in the \emph{closed}
halfspace $h^+$ bounded by $h$, and $\body$ is contained in the \emph{open}
halfspace $h^-$ bounded by $h$. This definition allows the separating
hyperplane to contain the point $\pb$, and will simplify the descriptions
of the algorithms.

\subsection{The greedy algorithm in 2D}

Given a set of points $\PS$ in $\Re^2$ and a convex body $\body$ specified
via a separation oracle, recall that the problem is to classify, for all the
points of $\PS$, whether or not they are in $\body$, using the fewest
oracle queries possible. We define some operations the algorithm will use
before stating the algorithm in full. Finally, we analyze the the number of
queries the algorithm performs.

\subsubsection{Operations}
\seclab{operations}

Initially, the algorithm copies $\PS$ into a set $\USet$ of
unclassified points.  The algorithm is going to maintain an inner
approximation $\inapx \subseteq \body$.  There are two types of
updates (\figref{operations} illustrates the two operations):
\begin{compactenumA}
    \smallskip%
    \item \ExpandOp{}$(\pnt)$: Given a point
    $\pnt \in \body \setminus \inapx$, the algorithm is going to:
    \begin{compactenumi}
        \item Update the inner approximation:
        $\inapx \leftarrow \CHX{\inapx \cup \brc{\pnt}}$.

        \item Remove (and mark) newly covered points:
        $\USet \leftarrow \USet \setminus \inapx$.
    \end{compactenumi}

    \medskip%
    \item \RemoveOp{}$(\Line)$: Given a closed halfplane $\Line^+$
    such that $\intX{\body} \cap \Line^+ = \emptyset$, the algorithm
    marks all the points of $\USet_\Line = \USet \cap \intX{\Line^+}$
    as being outside $\body$, and sets
    $\USet \leftarrow \USet \setminus \USet_\Line$.
\end{compactenumA}

\begin{figure}[h]
    \phantom{}\hfill%
    \includegraphics%
    [page=1,width=0.4\linewidth]%
    {figs/operations_expand}%
    \hfill%
    \includegraphics%
    [page=2,width=0.4\linewidth]%
    {figs/operations_expand}%
    \hfill\phantom{}%

    \caption{(I) Performing $\ExpandOp{}(\pnt)$, and marking points
       inside $\body$. (II) Performing $\RemoveOp{}(\Line)$, and
       marking points outside $\body$.}
    \figlab{operations}
\end{figure}

\subsubsection{The algorithm}
\seclab{round}%

The algorithm repeatedly performs rounds, as described next, until the
set of unclassified points is empty.

At every round, if the inner approximation $\inapx$ is empty, then the
algorithm sets $\USet^+ = \USet$. Otherwise, the algorithm picks a
line $\Line$ that is tangent to $\inapx$ with the largest number of
points of $\USet$ on the other side of $\Line$ than $\inapx$.  Let
$\Line^-$ and $\Line^+$ be the two closed halfspace bounded by
$\Line$, where $\inapx \subseteq \Line^-$. The algorithm computes the
point set $\USet^+ = \USet \cap \Line^+$.  We have two cases:
\medskip%
\begin{compactenumA}[label=\Alph*.]
    \item Suppose $\cardin{\USet^+}$ is of constant size. The
    algorithm queries the oracle for the status of each of these
    points.  For every point $\pnt \in \USet^+$, such that $\pnt \in
    \body$, the algorithm performs \ExpandOp{}$(\pnt)$. Otherwise, the
    oracle returned a separating line $\Line$, and the algorithm calls
    \RemoveOp{}$(\Line^+)$.

    \medskip%
    \item Otherwise, $\cardin{\USet+}$ does not have constant size.
    The algorithm computes a centerpoint
    $\cpnt \in \Re^2$ for $\USet^+$, and asks the oracle for the
    status of $\cpnt$. There are two possibilities:
    \begin{compactenumA}[label*=\Roman*.]
        \item If $\cpnt \in \body$, then the algorithm performs
        \ExpandOp{}$(\cpnt)$.

        \item If $\cpnt \notin \body$, then the oracle returned a
        separating line $\LineA$, and the algorithm performs
        \RemoveOp{}$(\LineA)$.
    \end{compactenumA}
\end{compactenumA}

\subsubsection{Analysis}

Let $\inapx_i$ be the inner approximation at the start of the $i$\th
iteration, and let $\iEmpty$ be the first index where $\inapx_\iEmpty$
is not an empty set. Similarly, let $\USet_i$ be the set of
unclassified points at the start of the $i$\th iteration, where
initially $\USet_1 = \USet$.

\begin{lemma}
    \lemlab{iterations:empty:inapx}%
    The number of (initial) iterations in which the inner
    approximation is empty is $\iEmpty = O(\log n)$.
\end{lemma}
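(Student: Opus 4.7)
The plan is to show that each iteration of the initial phase (where $\inapx = \emptyset$) either terminates the phase or shrinks $|\USet|$ by a constant factor. Since $\inapx$ is empty, the algorithm sets $\USet^+ = \USet$. If $|\USet| = O(1)$, the iteration falls into case A, and at most $O(1)$ further such iterations can occur in the initial phase (each either classifying the few remaining points outright or producing a non-empty $\inapx$). Otherwise we are in case B: the algorithm computes a centerpoint $\cpnt$ of $\USet$ and queries it. If $\cpnt \in \body$, then $\ExpandOp{}(\cpnt)$ makes $\inapx$ non-empty and the initial phase ends. Otherwise the oracle returns a separating line $\LineA$ with $\cpnt \in \intX{\LineA^+}$ and $\body \subseteq \LineA^-$, and $\RemoveOp{}(\LineA)$ discards $\USet \cap \intX{\LineA^+}$.

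The key step, and the one mild technical obstacle, is to show $|\USet \cap \intX{\LineA^+}| \geq |\USet|/3$ despite the open/closed halfplane mismatch in the definition of a centerpoint. Since $\cpnt$ lies strictly in $\intX{\LineA^+}$, I would slide $\LineA$ infinitesimally toward $\cpnt$, obtaining a parallel line $\LineA'$ that still separates $\cpnt$ from $\body$, with $\cpnt$ in the corresponding closed halfplane $(\LineA')^+$ and $(\LineA')^+ \subseteq \LineA' \cup \intX{\LineA^+}$. By the defining property of a centerpoint in $\Re^2$, the closed halfplane $(\LineA')^+$ (which contains $\cpnt$) satisfies $|\USet \cap (\LineA')^+| \geq |\USet|/3$. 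Under a standard general-position assumption (no points of $\USet$ lie on the shifted line $\LineA'$), every such point sits in $\intX{\LineA^+}$ and is therefore removed by $\RemoveOp{}(\LineA)$.

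Combining the two cases, each initial-phase iteration either ends the phase or reduces $|\USet|$ by a factor of at least $2/3$. Hence after $O(\log_{3/2} n) = O(\log n)$ such iterations either $\inapx$ has become non-empty or $|\USet|$ has dropped to a constant, and the latter regime contributes only $O(1)$ additional case-A iterations. Summing these contributions yields $\iEmpty = O(\log n)$.
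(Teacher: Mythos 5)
Your proof is correct and follows the same strategy as the paper's: a centerpoint query either makes $\inapx$ non-empty (ending the phase) or yields a separating line removing at least a third of $\USet$, giving $O(\log_{3/2} n)$ iterations. Your extra care about the open-versus-closed halfplane mismatch and the terminal $O(1)$-point case is sound (the paper elides both), but the underlying argument is identical.
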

\begin{proof}
    As soon as the oracle returns a point that is in $\body$, the
    inner approximation is no longer empty. As such, we need to bound
    the initial number of iterations where the oracle returns that the
    query point is outside $\body$.  Let $f_i = \cardin{\USet_i}$, and
    note that $\USet_1 = \PS$ and $f_1 = \cardin{\PS} = n$.  Let
    $\cpnt_i$ be the centerpoint of $\USet_i$, which is the query point
    in the $i$\th iteration ($\cpnt_i$ is outside $\body$). As such,
    the line separating $\cpnt_i$ from $\body$, returned by the
    oracle, has at least $f_i/3$ points of $\USet_i$ on the same side
    as $\cpnt_i$, by the centerpoint property.  All of these points get
    labeled in this iteration, and it follows that
    $f_{i+1} \leq (2/3) f_i$, which readily implies the claim, since
    $f_{\iEmpty} < 1$, for $\iEmpty = \ceil{\log_{3/2} n} +1$.
\end{proof}

\begin{definition}[Visibility graph]
    \deflab{visi:graph}%
    Consider the graph $\Graph_i$ over $\USet_i$, where two points
    $\pnt, \pntB \in \USet_i$ are connected $\iff$ the segment
    $\pnt \pntB$ does not intersect the interior of $\inapx_i$.
\end{definition}

\begin{figure}[h]
      \phantom{}\hfill%
      \begin{minipage}{0.3\linewidth}
          \includegraphics[page=1,width=0.99\linewidth]{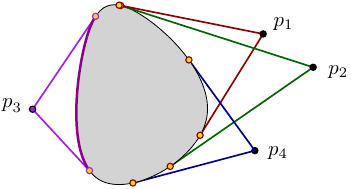}
      \end{minipage}
      \hfill%
      \begin{minipage}{0.28\linewidth}
          \includegraphics%
          [page=2,width=0.99\linewidth]{figs/intervals}%
      \end{minipage}
      \hfill%
      \begin{minipage}{0.28\linewidth}
          \includegraphics%
          [page=3,width=0.99\linewidth]%
          {figs/intervals}%
      \end{minipage}
      \hfill\phantom{}%

    \caption{Four points and a convex body with their associated circular
    intervals.}
    \figlab{intervals}
\end{figure}

\myparagraph{The visibility graph as an interval graph.}
For a point $\pnt \in \USet_i$, let $\IY{i}{\pnt}$ be the set of all
directions $v$ (i.e., vectors of length $1$) such that there is a line
perpendicular to $v$ that separates $\pnt$ from $\inapx_i$. Formally,
a line $\Line$ separates $\pnt$ from $\inapx_i$, if the interior of
$\inapx_i$ is on one side of $\Line$ and $\pnt$ is on the (closed)
other side of $\Line$ (if $\pnt \in \Line$, the line is still
considered to separate the two). Clearly, $\IY{i}{\pnt}$ is a circular
interval on the unit circle. See \figref{intervals}.
The resulting set of intervals is
$\IntSet_i = \Set{\IY{i}{\pnt}}{\pnt \in \USet_i}$. It is easy to
verify that the intersection graph of $\IntSet_i$ is $\Graph_i$.
Throughout the execution of the algorithm, the inner approximation
$\inapx_i$ grows monotonically, this in turn implies that the
visibility intervals shrink over time; that is,
$\IY{i}{\pnt} \subseteq \IY{i-1}{\pnt}$, for all $\pnt \in \PS$ and
$i$.  Intuitively, in each round, either many edges from $\Graph_i$
are removed (because intervals had shrunk and they no longer
intersect), or many vertices are removed (i.e., the associated points
are classified).

\begin{definition}
    Given a set $\IntSet$ of objects (e.g., intervals) in a domain
    $\Domain$ (e.g., unit circle), the \emphi{depth} of a point
    $\pnt \in D$, is the number of objects in $\IntSet$ that contain
    $\pnt$. Let $\depthX{\IntSet}$ be the maximum depth of any point
    in $\Domain$.
\end{definition}

When it is clear, we use $\depthX{\Graph}$ to denote
$\depthX{\IntSet}$, where $\Graph = (\IntSet, \Edges)$ is the
intersection graph of the intervals $\IntSet$ as defined above.
Throughout, we commonly refer to $\Graph$ as the
\emphi{intersection graph}.

First, we bound the number of edges in this visibility graph
$\Graph$ and then argue that in each iteration, either many edges
of $\Graph$ are discarded or vertices are removed (as they are
classified).

\begin{lemma}
    \lemlab{int-graphs}%
    Let $\IntSet$ be a set of $n$ intervals on the unit circle, and
    let $\Graph = (\IntSet, \Edges)$ be the associated intersection
    graph.  Then $\cardin{\Edges} =O(\indep \clique^2)$, where
    $\clique= \depthX{\IntSet}$ and $\indep = \indep(\Graph)$ is the
    size of the largest independent set in $\Graph$. Furthermore,
    the upper bound on $\cardin{\Edges}$ is tight.
\end{lemma}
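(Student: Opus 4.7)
The plan is to decompose the target bound $|\Edges| = O(\indep \clique^2)$ into two multiplicative pieces, $|\Edges| = O(n\clique)$ and $n = O(\indep\clique)$, and to derive each by reducing the circular setting to the familiar linear interval-graph setting with a single ``puncture'' of the circle. Concretely, I would fix a point $p$ on the unit circle that avoids all interval endpoints, and partition $\IntSet$ into the \emph{wrapping} intervals (those containing $p$) and the \emph{non-wrapping} intervals (those lying entirely in the arc obtained by cutting at $p$). Since depth at $p$ is bounded by $\clique$, there are at most $\clique$ wrapping intervals.

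For the edge count, the wrapping intervals contribute at most $\clique\cdot n$ edges in total, simply by counting incidences at these at most $\clique$ vertices. For the edges among non-wrapping intervals, I would apply the standard right-endpoint charging scheme: orient each edge $(I,J)$ from the interval with smaller right endpoint to the other. If $I$ has smaller right endpoint $b_I$, then $J$ meets $I$ and extends past $b_I$, so $J$ contains $b_I$; hence the out-degree of each vertex is at most $\depthX{\IntSet}-1 = \clique-1$. Summing gives at most $n(\clique-1)$ edges among non-wrapping intervals, for a total $|\Edges| = O(n\clique)$.

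For the vertex bound, I would again discard the at most $\clique$ wrapping intervals and run the textbook greedy maximum independent set procedure on the remaining linear family: repeatedly pick the interval with smallest right endpoint, add it to the independent set, and delete it together with every interval intersecting it. The deleted set in each iteration is contained in the intervals passing through that right endpoint, hence has size at most $\clique$ by the same argument as above. The greedy therefore terminates after at least $(n-\clique)/\clique$ iterations, producing an independent set of that size, which gives $\indep \geq (n-\clique)/\clique$, i.e.\ $n \leq (\indep+1)\clique$. Multiplying this with the edge bound yields $|\Edges| = O(\indep\clique^2)$.

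For tightness, I would take $\indep$ pairwise-disjoint sub-arcs of the unit circle and place $\clique$ mutually intersecting (e.g.\ nested) intervals inside each sub-arc. The resulting family has maximum depth exactly $\clique$, independence number exactly $\indep$ (one per sub-arc), and $\indep\binom{\clique}{2} = \Theta(\indep\clique^2)$ edges. The only genuine subtlety in the whole argument is the circular (rather than linear) topology, but the initial puncture at $p$ dispatches it cleanly; every remaining step is routine interval-graph bookkeeping, so I do not anticipate a significant obstacle.
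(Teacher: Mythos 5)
Your proof is correct, and it takes a genuinely different route from the paper's. The paper works directly from a maximum independent set $J$: the $2\cardin{J}$ endpoints of $J$ cut the circle into $O(\indep)$ atomic arcs, each arc meets at most $3\clique$ intervals (those contained in the arc are pairwise intersecting and hence $\leq \clique$ of them, plus $\leq 2\clique$ more that cross an endpoint of the arc), so each arc contributes $O(\clique^2)$ edges, and the bound follows by summing over arcs. You instead factor the bound through the intermediate quantity $n$: a right-endpoint charging scheme (after one puncture to linearize the circle) gives $\cardin{\Edges} = O(n\clique)$, essentially the statement that interval graphs are $(\omega-1)$-degenerate, and a greedy independent-set argument gives $n = O(\indep \clique)$, the $n \leq \indep \cdot \omega$ inequality; multiplying these yields the claim. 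Your decomposition is more modular and leans on two textbook interval-graph facts rather than one tailored geometric count, at the cost of needing two separate arguments and the puncture bookkeeping (the at-most-$\clique$ wrapping intervals) twice; the paper's single arc-based argument is shorter once one accepts that the atomic arcs induced by $J$ each carry $O(\clique)$ intervals. Both tightness constructions are essentially the same (disjoint groups of $\clique$ pairwise-intersecting intervals), so no difference there. One small point worth stating explicitly in your write-up: the right-endpoint orientation and the greedy selection both require a tie-breaking rule when two intervals share a right endpoint, but that is a standard and harmless perturbation.
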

\begin{proof}
    Let $J$ be the largest independent set of intervals in
    $\Graph$. The intervals of $J$ divide the circle into
    $2\cardin{J}$ (atomic) circular arcs.  Consider such an arc
    $\arc$, and let $K(\arc)$ be the set of all intervals of $\IntSet$
    that are fully contained in $\arc$. All the intervals of $K(\arc)$
    are pairwise intersecting, as otherwise one could increase the
    size of the independent set. As such, all the intervals of
    $K(\arc)$ must contain a common intersection point. It follows
    that $\cardin{K(\arc)} \leq \clique$.

    Let $K'(\arc)$ be the set of all intervals intersecting
    $\arc$. This set might contain up to $2\clique$ additional
    intervals (that are not contained in $\arc$), as each such
    additional interval must contain at least one of the endpoints of
    $\arc$. Namely, $\cardin{K'(\arc)} \leq 3 \clique$. In particular,
    any two intervals intersecting inside $\arc$ both belong to
    $K'(\arc)$. As such, the total number of edges contributed by
    $K'(\arc)$ to $\Graph$ is at most
    $\binom{3\clique}{2} = O(\clique^2)$.  Since there are at most
    $2 \indep$ arcs under consideration, the total number of
    edges in $\Graph$ is bounded by $O(\indep \clique^2)$, which
    implies the claim.

    The lower bound is easy to see by taking an independent set of
    intervals of size $\indep$, and replicating every interval
    $\clique$ times.
\end{proof}

\begin{lemma}
    \lemlab{segments:intersect}%
    Let $\PS$ be a set of $n$ points in the plane lying above the
    $x$-axis, $\cpnt$ be a centerpoint of $\PS$, and
    $\SegSet = \binom{\PS}{2}$ be set of all segments induced by
    $\PS$. Next, consider any point $\pntB$ on the $x$-axis. Then,
    the segment $\cpnt \pntB$ intersects at least $n^2/36$ segments of
    $\SegSet$.
\end{lemma}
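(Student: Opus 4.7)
The plan is to find, besides the line $\Line_0$ through $\cpnt$ and $\pntB$, a second line $\Line^*$ through $\cpnt$ that ``balances'' how $\PS$ splits across $\Line_0$ inside the halfplane bounded by $\Line^*$ that contains $\pntB$. Once balanced, any pair consisting of one point left of $\Line_0$ and one point right, both lying in that halfplane, will force a segment crossing $\cpnt\pntB$, giving at least $(n/6)^2 = n^2/36$ such segments.

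Concretely, let $\PS^L$ and $\PS^R$ denote the points of $\PS$ in the two open halfplanes bounded by $\Line_0$; the centerpoint property gives $\cardin{\PS^L}, \cardin{\PS^R} \geq n/3$ (in general position, with no point of $\PS$ on $\Line_0$). For a line $\Line$ through $\cpnt$ other than $\Line_0$, write $\Line^-$ for the closed halfplane bounded by $\Line$ that contains $\pntB$, and set $a(\Line) := \cardin{\PS^L \cap \Line^-}$ and $b(\Line) := \cardin{\PS^R \cap \Line^-}$; applying the centerpoint property to $\Line^-$ (a closed halfplane through $\cpnt$) yields $a(\Line) + b(\Line) \geq n/3$. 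The geometric crux is the crossing claim: for any $\pnt_1 \in \PS^L \cap \Line^-$ and $\pnt_2 \in \PS^R \cap \Line^-$, the segment $\pnt_1 \pnt_2$ crosses $\cpnt\pntB$. Indeed, $\pnt_1 \pnt_2$ meets $\Line_0$ at a point $q$; convexity of $\Line^-$ puts $q \in \Line^-$, and both endpoints being strictly above the $x$-axis forces $q$ above the $x$-axis. But $\Line^- \cap \Line_0$ is the ray from $\cpnt$ toward $\pntB$, and the portion of this ray strictly above the $x$-axis is precisely $\cpnt\pntB$ (minus the endpoint $\pntB$), so $q \in \cpnt\pntB$.

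To produce a balanced $\Line^*$, I would rotate $\Line$ about $\cpnt$ (parametrized, say, by its slope relative to $\Line_0$) and track $D(\Line) := a(\Line) - b(\Line)$. Each point of $\PS^L$ enters $\Line^-$ exactly once during the sweep and each point of $\PS^R$ exits exactly once, so $D$ changes by $+1$ at each event and monotonically takes every integer value from $-\cardin{\PS^R}$ to $+\cardin{\PS^L}$. In particular, $D = 0$ at some $\Line^*$, at which $a = b$ and $a + b \geq n/3$, giving $a = b \geq n/6$ and thus $a \cdot b \geq n^2/36$ crossing segments. The main technical obstacle is verifying the sweep monotonicity cleanly and handling degeneracies (points of $\PS$ lying on $\Line_0$ or on the sweeping line $\Line^*$), which a mild general-position assumption resolves.
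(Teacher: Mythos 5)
Your proposal is correct and is essentially the same argument as in the paper: both find a line $\Line^*$ through the centerpoint $\cpnt$ that balances the points on the two sides of $\cpnt\pntB$ within the halfplane containing $\pntB$, yielding at least $n/6$ points on each side, and then observe that any segment joining two such points on opposite sides must cross $\cpnt\pntB$. The only difference is presentational --- you make the rotational sweep and its monotonicity explicit via the tracking quantity $D(\Line) = a(\Line) - b(\Line)$, whereas the paper invokes a slightly looser continuity argument between the two degenerate extremes of the rotation; this is a welcome tightening but not a different route.
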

\begin{proof}
    If the segment $\cpnt \pntB$ intersects the segment
    $\pnt_1 \pnt_2$, for $\pnt_1, \pnt_2 \in \PS$, then we consider
    $\pnt_1$ and $\pnt_2$ to no longer be mutually visible.
    It suffices to lower bound the number of pairs of points
    that lose mutual visibility of each other.

    \begin{figure}[h]
        \centerline{\includegraphics{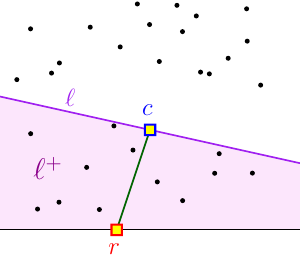}}%
        \caption{}
        \figlab{center:point}
    \end{figure}

    Consider a line $\Line$ passing through the point $\cpnt$, see
    \figref{center:point}..  Let $\Line^+$ be the closed halfspace
    bounded by $\Line$ containing $\pntB$. Note that
    $\cardin{\PS \cap \Line^+} \geq n/3$, since $\cpnt$ is a
    centerpoint of $\PS$, and $\cpnt \in \Line$. Rotate $\Line$ around
    $\cpnt$ until there are $\geq n/6$ points on each side of
    $\pntB\cpnt$ in the halfspace $\Line^+$. To see why this rotation
    of $\Line$ exists, observe that the two halfspaces bounded by the
    line spanning $\pntB\cpnt$, have zero points on one side, and at
    least $n/3$ points on the other side --- a continuous rotation of
    $\Line$ between these two extremes, implies the desired property.

    Observe that points in $\Line^+$ and on opposite sides of the
    segment $\cpnt\pntB$ cannot see each other, as the segment
    connecting them must intersect $\cpnt\pntB$.  Consequently, the
    number of induced segments that $\cpnt\pntB$ intersects is at
    least $n^2/36$.
\end{proof}

For a graph $\Graph$, we let $\EdgesX{\Graph}$ denote the set of edges in $G$,
and let $\cardin{\EdgesX{G}}$ denote the number of edges in $\Graph$.

\begin{lemma}
    \lemlab{depth:reduce}%
    Let $\Graph_i$ be the intersection graph, in the beginning of the
    $i$\th iteration, and let
    $\nEdgesX{i} = \cardin{\EdgesX{\Graph_{i}}}$. After the $i$\th
    iteration of the greedy algorithm, we have
    $\nEdgesX{i+1} \leq \nEdgesX{i} - \clique^2/36$, where
    $\clique = \depthX{\Graph_i}$.
\end{lemma}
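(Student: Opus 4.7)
The plan is to analyze the three branches of the $i$\th iteration and show each removes at least $\clique^2/36$ edges from $\Graph_i$. I would begin by fixing notation: let $\Line$ be the tangent line chosen by the greedy algorithm and let $\USet^+ = \USet_i \cap \Line^+$ be the unclassified points on the far side of $\Line$ from $\inapx_i$. Two observations drive the rest of the proof. First, $\cardin{\USet^+} = \clique$, since the depth of the interval arrangement at direction $v$ is exactly the number of points of $\USet_i$ that can be separated from $\inapx_i$ by a line with outward normal $v$, and the greedy choice maximizes this count. Second, every point of $\USet^+$ has a visibility interval containing the outward normal of $\Line$, so $\USet^+$ induces a clique of size $\clique$ in $\Graph_i$. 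These two facts are the engine behind both sub-cases below.

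The main work is Case B, where the algorithm queries a centerpoint $\cpnt$ of $\USet^+$. In Case B.I ($\cpnt \in \inapx$), after \ExpandOp{}$(\cpnt)$, the segment $\cpnt \pntB$ from $\cpnt$ to a tangent contact point $\pntB \in \Line \cap \inapx_i$ lies in $\inapx_{i+1}$ by convexity of $\CHX{\inapx_i \cup \brc{\cpnt}}$. I would then invoke \lemref{segments:intersect} with $\Line$ playing the role of the $x$-axis, $\PS = \USet^+$ lying above it, $\cpnt$ as the centerpoint, and $\pntB$ on the axis: the segment $\cpnt\pntB$ crosses at least $\cardin{\USet^+}^2/36 = \clique^2/36$ segments of $\binom{\USet^+}{2}$. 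Each crossed pair has its connecting segment cut by $\inapx_{i+1}$, so loses mutual visibility, and the corresponding edge disappears from $\Graph_{i+1}$.

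Case B.II ($\cpnt \notin \inapx$) is handled via the clique structure instead. The oracle returns a separating line $\LineA$ and \RemoveOp{}$(\LineA^+)$ deletes every point of $\USet$ on the $\cpnt$-side of $\LineA$. By the centerpoint property applied to $\USet^+$, at least $\cardin{\USet^+}/3 = \clique/3$ of these deleted points lie in $\USet^+$. Since $\USet^+$ is a clique of size $\clique$ in $\Graph_i$, removing those $\clique/3$ vertices destroys at least $\binom{\clique/3}{2} \geq \clique^2/36$ edges of the clique (the last inequality using $\clique \geq 6$, which can be ensured by choosing the Case A threshold appropriately). Case A ($\cardin{\USet^+} = O(1)$) then reduces to the same clique-removal argument with $\clique$ a small constant: all of $\USet^+$ gets classified in the round, destroying all $\binom{\clique}{2} \geq \clique^2/36$ edges of its clique.

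The hard part will be the geometric bookkeeping in Case B.I: I must pin down the tangent contact point $\pntB \in \Line \cap \inapx_i$ \emph{before} performing \ExpandOp{}$(\cpnt)$ so that $\cpnt\pntB \subseteq \inapx_{i+1}$ is immediate from convexity, and rotate coordinates so that $\Line$ becomes horizontal and $\USet^+$ lies above it, matching exactly the hypotheses of \lemref{segments:intersect}. Once this setup is in place, the count $\clique^2/36$ follows immediately from the lemma, and the three cases together yield $\nEdgesX{i+1} \leq \nEdgesX{i} - \clique^2/36$.
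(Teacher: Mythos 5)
Your proposal takes essentially the same approach as the paper's proof: the \RemoveOp{} branch uses that $\USet^+$ forms a $\clique$-clique in $\Graph_i$ and that at least $\clique/3$ of its vertices are classified (so $\geq \binom{\clique/3}{2}$ edges vanish), while the \ExpandOp{} branch applies \lemref{segments:intersect} to $\USet^+$ with centerpoint $\cpnt$ and a tangent contact point $\pntB \in \Line \cap \inapx_i$, so that $\cpnt\pntB \subseteq \inapx_{i+1}$ cuts visibility for at least $\clique^2/36$ pairs. You are slightly more careful than the paper on two small points the paper elides --- noting that $\binom{\clique/3}{2} \geq \clique^2/36$ requires $\clique \geq 6$ (handled by the Case~A threshold) and explicitly treating Case~A --- but the decomposition, the invocation of \lemref{segments:intersect}, and the clique-removal count are the same.
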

\begin{proof}
    Recall that in the algorithm $\USet^+ = \USet_i \cap \Line^+$ is
    the current set of unclassified points and $\Line$ is the line
    tangent to $\inapx_i$, where $\Line^+$ is the closed halfspace
    that avoids the interior of $\inapx_i$ and contains the largest
    number of unlabeled points of $\USet_i$.  We have that
    $\clique = \cardin{\USet^+}$.

    If a \RemoveOp{} operation was performed in the $i$\th iteration,
    then the number of points of $\USet^+$ that are discarded is at
    least $\clique/3$. In this case, the oracle returned a separating
    line $\LineA$ between a centerpoint $\cpnt$ of $\USet^+$ and the
    inner approximation. For the halfspace $\LineA^+$ containing
    $\cpnt$, we have
    $t_i = \cardin{\USet^+ \cap \LineA^+} \geq \cardin{\USet^+}/3 \geq
    \clique/3$. Furthermore, all the points of $\USet^+$ are pairwise
    mutually visible (in relation to the inner approximation
    $\inapx_i$). Namely,
    $$
        \nEdgesX{i+1}%
        =%
        \cardin{\EdgesX{\Graph_{i} - (\USet^+ \cap \LineA^+)}}%
        \leq%
        \nEdgesX{i} - \binom{t_i}{2}%
        \leq%
        \nEdgesX{i} - \clique^2 /36.
    $$

    If an \ExpandOp{} operation was performed, the centerpoint $\cpnt$
    of $\USet^+$ is added to the current inner approximation
    $\inapx_i$. Let $\pntB$ be a point in $\Line \cap \inapx_i$, and
    let $\cpnt_i$ be the centerpoint of $\USet_i$ computed by the
    algorithm.  By \lemref{segments:intersect} applied to
    $\pntB, \cpnt$ and $\USet^+$, we have that at least $\clique^2/36$
    pairs of points of $\USet^+$ are no longer mutually visible to each
    other in relation to $\inapx_{i+1}$. We conclude, that at least
    $\clique^2/36$ edges of $\Graph_i$ are no longer present in
    $\Graph_{i+1}$.
\end{proof}

\begin{definition}
    \deflab{index}%
    A subset of points $X \subseteq \PS \subseteq \Re^2$ are in
    \emphi{convex position}, if all the points of $X$ are vertices of
    $\CHX{X}$ (note that a point in the middle of an edge is not
    considered to be a vertex). The \emphi{index} of $\PS$, denoted
    by $\indexX{\PS}$, is the cardinality of the largest subset of
    $\PS$ of points that are in convex position.
\end{definition}

\begin{theorem}
    \thmlab{greedy-method}%
    Let $\body$ be a convex body provided via a separation oracle, and
    let $\PS$ be a set of $n$ points in the plane. The greedy
    classification algorithm performs
    $O\bigl((\indexX{\PS}+1) \log n\bigr)$ oracle queries. The
    algorithm correctly identifies all points in $\PS \cap \body$
    and $\PS \setminus \body$.
\end{theorem}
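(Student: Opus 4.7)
The plan is to reduce the theorem to bounding the total number of iterations of the greedy algorithm, since each iteration makes only $O(1)$ oracle queries (one centerpoint query in case B, and $\cardin{\USet^+} = O(1)$ queries in case A). Correctness is immediate: any point marked inside lies in $\inapx \subseteq \body$ by convexity of $\body$, and any point marked outside lies in a closed halfspace disjoint from $\intX{\body}$.

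I would split the iterations into three phases. The initial phase in which $\inapx$ is empty contributes $O(\log n)$ iterations by \lemref{iterations:empty:inapx}. For the remaining iterations I would use the edge count $E_i = \cardin{\EdgesX{\Graph_i}}$ of the visibility graph as a potential, combining the upper bound on $E_i$ from \lemref{int-graphs} with the per-iteration drop from \lemref{depth:reduce}.

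The key combinatorial step, which I expect to be the main obstacle, is establishing that $\indep(\Graph_i) \leq \indexX{\PS}$ at every iteration. Given an independent set $X \subseteq \USet_i$ of $\Graph_i$, the arcs $\IY{i}{\pnt}$ for $\pnt \in X$ are pairwise disjoint on the unit circle. For each $\pnt \in X$, pick a direction $v^*$ in the interior of $\IY{i}{\pnt}$: the tangent to $\inapx_i$ with normal $v^*$ strictly separates $\pnt$ from $\inapx_i$, and since $v^*$ lies in no other arc $\IY{i}{\pntB}$, every other $\pntB \in X$ sits on the $\inapx_i$-side of this tangent. Hence $\pnt$ is a strict vertex of $\CHX{X}$, so $X$ is in convex position and $\cardin{X} \leq \indexX{\PS}$.

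With this claim, \lemref{int-graphs} gives $E_i = O(\indexX{\PS} \cdot \clique_i^2)$ where $\clique_i = \depthX{\Graph_i}$, while \lemref{depth:reduce} gives $E_{i+1} \leq E_i - \clique_i^2/36$. Combining yields $E_{i+1} \leq E_i (1 - 1/(c\,\indexX{\PS}))$ for some constant $c$, so starting from $E_1 = O(n^2)$ a standard geometric-decay argument shows $E_i$ drops below one after $O(\indexX{\PS} \log n)$ iterations, forcing $\Graph_i$ to be edgeless. In this final phase the combinatorial claim gives $\cardin{\USet_i} \leq \indexX{\PS}$, and each iteration has $\clique_i = 1$, classifying one point via case A, contributing at most $\indexX{\PS}$ further iterations. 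Summing the three phases yields $O((\indexX{\PS}+1)\log n)$ total iterations, and hence the same bound on oracle queries.
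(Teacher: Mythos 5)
Your proposal is correct and follows essentially the same strategy as the paper: bound the empty-approximation phase by \lemref{iterations:empty:inapx}, then track the visibility-graph edge count $\nEdgesX{i}$ as a potential, using the upper bound $\nEdgesX{i} = O(\indexX{\PS}\clique_i^2)$ from \lemref{int-graphs} together with the per-iteration drop $\nEdgesX{i+1} \leq \nEdgesX{i} - \clique_i^2/36$ from \lemref{depth:reduce}, and the observation that every independent set in $\Graph_i$ is a subset of $\PS$ in convex position. The only real difference is the final accounting: the paper groups the non-empty iterations into $O(\log n)$ epochs, each defined by a halving of $\depthX{\Graph_i}$, and shows each epoch has $O(\indexX{\PS})$ iterations; you instead fold the two lemmas into the single multiplicative recurrence $\nEdgesX{i+1} \leq \nEdgesX{i}\bigl(1 - 1/(c\,\indexX{\PS})\bigr)$ and run geometric decay directly from $\nEdgesX{1} = O(n^2)$, then handle the residual edgeless phase separately. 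These are two packagings of the same computation and give the same bound; yours is arguably a bit more streamlined. You also spell out the proof that an independent set of $\Graph_i$ is in convex position (by choosing a direction $v^*$ interior to $\IY{i}{\pnt}$ and noting that the corresponding supporting line of $\inapx_i$ strictly separates $\pnt$ from the rest of the independent set), a step the paper merely asserts; this is a worthwhile addition.
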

\begin{proof}
    By \lemref{iterations:empty:inapx}, the number of iterations (and
    also queries) in which the inner approximation is empty is
    $O(\log n)$, and let $\iEmpty = O(\log n)$ be the first iteration
    such that the inner approximation is not empty.  It suffices to
    bound the number of queries made by the algorithm after the inner
    approximation becomes non-empty.

    For $i \geq \iEmpty$, let $\Graph_i = (\USet_i, \Edges_i)$ denote
    the visibility graph of the remaining unclassified points
    $\USet_i$ in the beginning of the $i$\th iteration.  Any
    independent set in $\Graph_i$ corresponds to a set of points
    $X \subseteq \PS$ that do not see each other due to the presence
    of the inner approximation $\inapx_i$. That is, $X$ is in
    convex position, and furthermore $\cardin{X} \leq \indexX{\PS}$.

    For $0 \leq t \leq n$, let $\startX{t}$ be the first iteration
    $i$, such that $\depthX{ \Graph_i} \leq t$. Since the depth of
    $\Graph_i$ is a monotone decreasing function, this quantity is
    well defined.  An \emphi{epoch} is a range of iterations between
    $s(t)$ and $s(t/2)$, for any parameter $t$. We claim that an epoch
    lasts $O( \indexX{ \PS})$ iterations (and every iteration issues
    only one oracle query). Since there are only $O( \log n)$
    (non-overlapping) epochs till the algorithm terminates, as the
    depth becomes zero, this implies the claim.

    So consider such an epoch starting at $i = \startX{t}$. We have
    $m = \nEdgesX{i} = \cardin{\EdgesX{\Graph_i}} = O( \indexX{\PS}
    t^2)$, by \lemref{int-graphs}, since $\indexX{\PS}$ is an
    upper bound on the size of the largest independent set in
    $\Graph_i$.  By \lemref{depth:reduce}, as long as the depth of the
    intervals is at least $t/2$, the number of edges removed from the
    graph at each iteration, during this epoch, is at least
    $\Omega(t^2)$. As such, the algorithm performs at most
    $O(m_i/t^2) = O( \indexX{\PS} )$ iterations in this epoch, till
    the maximum depth drops to $t/2$.
\end{proof}

\subsubsection{Implementing the greedy algorithm}

With the use of dynamic segment trees \cite{mn-dfc-90} we show
that the greedy classification algorithm can be implemented
efficiently.

\begin{lemma}
    \lemlab{impl-greedy}
    Let $\body$ be a convex body provided via a separation oracle, and
    let $\PS$ be a set of $n$ points in the plane. If an oracle query
    costs time $\TX$, then the greedy algorithm can be implemented in
    $O\bigl(n\log^2 n\log\log n + \TX\cdot\indexX{\PS} \log n\bigr)$
    expected time.
\end{lemma}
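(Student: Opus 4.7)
The plan is to augment the greedy algorithm with three dynamic data structures: (i) the inner approximation $\inapx$ maintained as a dynamic convex polygon supporting point insertion and tangent queries in $O(\log n)$ time; (ii) the collection of visibility intervals $\{\IY{i}{\pnt} : \pnt \in \USet\}$ on the unit circle of directions, stored in a dynamic segment tree of Mehlhorn and N\"aher \cite{mn-dfc-90} supporting insertion, deletion, and maximum-depth location in $O(\log n \log\log n)$ amortized time; and (iii) a dynamic halfplane range-reporting structure over $\USet$.

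Each iteration of the greedy algorithm is then executed as follows. A maximum-depth query on the segment tree identifies the best tangent direction $v$ and returns the value $\cardin{\USet^+}$, which is exactly the depth at $v$. If $\cardin{\USet^+}=O(1)$, a stabbing query at $v$ lists the points of $\USet^+$ and they are queried individually. Otherwise, the points of $\USet^+$ are extracted by the same stabbing query, a centerpoint is computed in linear time (Jadhav--Mukhopadhyay), and the oracle is consulted. For a \RemoveOp{}, the points to be deleted are reported by a halfplane range query and removed from every structure. For an \ExpandOp{}, the new vertex is inserted into the dynamic hull, the points newly inside $\inapx$ are reported by a range query and removed, and the visibility intervals of the remaining points whose supporting tangents to $\inapx$ have changed are updated in the segment tree.

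The main technical obstacle is bounding the total number of segment-tree updates across the entire run by $O(n\log n)$ in expectation. Each endpoint of a point $\pnt$'s visibility interval corresponds to a tangent from $\pnt$ to a specific vertex of $\inapx$, and as $\inapx$ grows this tangent vertex can only advance monotonically along the hull boundary, so the interval-update work charged to $\pnt$ is proportional to the number of hull vertices its tangents visit over the lifetime of the algorithm. The plan is to maintain $\inapx$ via a randomized dynamic convex hull structure (in the style of randomized incremental construction) and apply a backwards-analysis argument to show that the expected number of hull events visible to any fixed point $\pnt$ is $O(\log n)$, yielding $O(n\log n)$ expected interval updates. Multiplying by the $O(\log n \log\log n)$ cost per segment-tree operation gives the $O(n \log^2 n \log\log n)$ term; combining with the $O((\indexX{\PS}+1)\log n)$ oracle queries from \thmref{greedy-method}, each costing $\TX$ plus polylogarithmic data-structure overhead, and the $O(n \log n)$ total work for range-reporting point deletions (since each of the $n$ points is deleted at most once), then yields the stated bound.
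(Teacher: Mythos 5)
The proposal correctly identifies the right data structures (a max-depth-augmented dynamic segment tree over visibility intervals, a dynamic convex hull for $\inapx$, point-reporting structures) and correctly isolates the crux: bounding the total number of segment-tree updates, i.e., the number of times a point's visibility interval must be recomputed. But the argument you give to bound this quantity has a fundamental gap.

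You observe that the tangent vertex from a fixed $\pnt$ to $\inapx$ advances monotonically, and then claim that a randomized dynamic hull plus a ``backwards-analysis'' argument yields $O(\log n)$ expected tangent changes per point, for $O(n\log n)$ total interval updates. This does not work. Backward analysis bounds the expected number of hull changes seen by a fixed point when the vertices of the hull are inserted in a \emph{uniformly random} order. Here, the vertices of $\inapx$ are the centerpoints that the algorithm adaptively queries, and the oracle's answers determine both which points get inserted and in what order. The insertion order is therefore highly correlated with the geometry and cannot be permuted --- the state of $\inapx$ after $i$ insertions determines the $(i{+}1)$-th query, so reordering the insertions changes the algorithm entirely. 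Nothing prevents an adversarial (or simply unlucky) sequence in which, say, $\Theta(\indexX{\PS}\log n)$ consecutive expand operations each become the new tangent vertex for the same batch of unclassified points, forcing $\Omega(\indexX{\PS}\log n)$ interval updates for each such point. Absent the random-order assumption, your monotonicity observation only gives $O(n \cdot \indexX{\PS}\log n)$ total interval updates, which plugged into the segment tree yields $O(n\,\indexX{\PS}\log^2 n \log\log n)$ --- strictly worse than the stated bound whenever $\indexX{\PS}=\omega(\log n)$.

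The paper's proof sidesteps this entirely and is worth contrasting. It never attempts to bound the per-point update count; instead, after an \ExpandOp{} it recomputes intervals only for the points of $\USet^+$ (the current max-depth set), and it bounds the \emph{total} work per epoch. The argument partitions the circle of directions into $O(n/t)$ atomic arcs of $t/10$ endpoints each (where $[t/2,t]$ is the current depth range); each arc has at most $O(t)$ intervals crossing it and hence an induced subgraph with $O(t^2)$ edges, and \lemref{depth:reduce} shows that each ``hit'' of an arc deletes a constant fraction of those edges, so each arc is hit $O(1)$ times per epoch before its depth drops below $t/2$. This gives $O(n/t)\cdot O(t)=O(n)$ reinsertions per epoch, hence $O(n\log n)$ total over the $O(\log n)$ epochs, with no appeal to randomness in the insertion order. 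If you want to salvage your route, you would need a genuinely deterministic worst-case bound of $O(\log n)$ tangent changes per point, which is not true in general; the epoch-level amortization is what makes the paper's bound go through.
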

\begin{proof}
    The algorithm follows the proof of \thmref{greedy-method}. We
    focus on efficiently implementing the algorithm once inner
    approximation is no longer empty.  Let $\USet \subseteq \PS$ be
    the subset of unclassified points. By binary searching on the
    vertices of the inner approximation $\inapx$, we can compute the
    collection of visibility intervals $\IntSet$ for all points in
    $\USet$ in $O(\cardin{\USet}\log m) = O(n\log n)$ time (recall
    that $\IntSet$ is a collection of circular intervals on the unit
    circle). We store these intervals in a dynamic segment tree
    $\Tree$ with the modification that each node $v$ in $\Tree$ stores
    the maximum depth over all intervals contained in the subtree
    rooted at $v$.  Note that $\Tree$ can be made fully dynamic to
    support updates in $O(\log n \log \log n)$ time \cite{mn-dfc-90}.

    An iteration of the greedy algorithm proceeds as follows. Start by
    collecting all points $\USet^+ \subseteq \USet$ realizing the
    maximum depth using $\Tree$. When $t = \cardin{\USet^+}$, this
    step can be done in $O(\log n + t)$ time by traversing $\Tree$.
    We compute the centerpoint of $\USet^+$ in $O(t \log t)$ expected
    time \cite{c-oramt-04} and query the oracle using this
    centerpoint. Either points of $\USet$ are classified (and we
    delete their associated intervals from $\Tree$) or we improve the
    inner approximation. The inner approximation (which is the convex
    hull of query points inside the convex body $\body$) can be
    maintained in an online fashion with insert time $O(\log n)$
    \cite[Chapter 3]{ps-cg-85}.  When the inner approximation expands,
    the points of $\USet^+$ have their intervals shrink. As such, we
    recompute $\IX{\pnt}$ for each $\pnt \in \USet^+$ and reinsert
    $\IX{\pnt}$ into $\Tree$.

    As defined in the proof of \thmref{greedy-method}, an epoch is the
    subset of iterations in which the maximum depth is in the range
    $[t/2, t]$, for some integer $t$. During such an epoch, we make
    two claims:
    \smallskip%
    \begin{compactenumi}
      \item there are $\sigma = O(n)$ updates to $\Tree$, and
      \item the greedy algorithm performs $O(n/t)$
      centerpoint calculations on sets of size $O(t)$.
    \end{compactenumi}
    \smallskip%

    Both of these claims imply that a single epoch of the greedy
    algorithm can be implemented in expected time
    $O(\sigma \log n \log\log n + n\log n + \TX\cdot \indexX{\PS})$.
    As there are $O(\log n)$ epochs, the algorithm can be
    implemented in expected time
    $O(n \log^2 n \log\log n + \TX\cdot \indexX{\PS}\log n)$.

    We now prove the first claim. Recall that we have a collection of
    intervals $\IntSet$ lying on the circle of directions. Partition
    the circle into $k$ atomic arcs, where each arc contains $t/10$
    endpoints of intervals in $\IntSet$. Note that $k = 20n/t =
    O(n/t)$. For each circular arc $\arc$, let
    $\IntSet_\arc \subseteq \IntSet$ be the set of intervals
    intersecting $\arc$. As the maximum depth is bounded by $t$, we
    have that $\cardin{\IntSet_\arc} \leq t + t/10 = 1.1t$.  In
    particular, if $\Graph[\IntSet_\arc]$ is the induced subgraph of
    the intersection graph $\Graph$, then $\Graph[\IntSet_\arc]$ has
    at most $\binom{\cardin{\IntSet_\arc}}{2} = O(t^2)$ edges.

    In each iteration, the greedy algorithm chooses a point in an
    arc $\arc$ (we say that $\arc$ is \emph{hit}) and edges
    are only deleted from $\Graph[\IntSet_\arc]$.
    The key observation is that an arc $\arc$ can only be hit
    $O(1)$ times before all points of $\arc$ have depth below
    $t/2$, implying that it will not be hit again until the next
    epoch. Indeed, each time $\arc$ is hit, the number of edges
    in the induced subgraph $\Graph[\IntSet_\arc]$ drops
    by a constant factor (\lemref{depth:reduce}). Additionally,
    when $\Graph[\IntSet_\arc]$ has less than
    $\binom{t/2}{2}$ edges then any point on $\arc$ has
    depth less than $t/2$. These two facts imply that an arc
    is hit $O(1)$ times.

    When an arc is hit, we must reinsert
    $\cardin{\IntSet_\arc} = O(t)$ intervals into $\Tree$. In
    particular, over a single epoch, the total number of hits
    over all arcs is bounded by $O(k)$. As such,
    $\sigma = O(kt) = O(n)$.

    For the second claim, each time an arc is hit, a single
    centerpoint calculation is performed. Since each arc
    has depth at most $t$ and is hit a constant number
    of times, there are $O(k) = O(n/t)$ such
    centerpoint calculations in a single epoch, each costing
    expected time $O(t\log t)$.
\end{proof}

In \secref{applications} we present an application of the
greedy classification algorithm. Namely, we present an efficient
algorithm for computing the discrete geometric median of a point set
(\lemref{discrete:med}).

\subsubsection{The inference dimension and an alternative
   algorithm}
\seclab{inference}

Kane \etal \cite{klmz-accq-17} define the notion of \emph{inference
   dimension}, which in our context is the minimum number of queries
needed to classify all points.

\begin{figure}[h]
    \centerline{\includegraphics{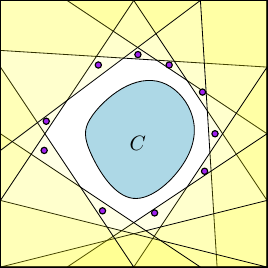}}
    \caption{The minimal external set must be convex.}
    \figlab{outer:conex}
\end{figure}

\begin{lemma}
    Let $\body$ be a convex body provided via a
    separation oracle, and let $\PS$ be a set of $n$ points in
    the plane. There is a set of $2\indexX{\PS}$ oracle queries whose
    answers can be used to classify all points of $\PS$ correctly.
\end{lemma}
\begin{proof}
    We put the at most $\indexX{\PS}$ vertices of
    $\CHX{ \PS \cap \body}$ into a query set. Querying these points is
    enough to label correctly all points inside the body
    $\body$. As for the points of $\PS$ outside $\body$, let
    $\PSA \subseteq \PS \setminus \body$ be the minimum size subset
    such that querying these points correctly labels all
    points outside $\body$. Each point $\pa \in \PSA$ is associated with a
    halfspace $h^+_\pa$ that contains $\body$. Let $H$ be this set
    of halfspaces. Observe that for any point $\pa \in \PSA$, there is
    a point $\mathrm{witness}(\pa)\in \PS \setminus \body$ for which 
    $h^+_\pa$ does not contain $\mathrm{witness}(\pa)$ (as
    otherwise, $\pa$ can be removed from $\PSA$). Let
    $\PSB = \Set{\mathrm{witness}(\pa)}{\pa \in \PSA}$.  The points of
    $U$ are in the faces of the arrangement $\ArrX{H}$ that are
    adjacent to the face $\cap_{\pa \in \PSA} h^+_\pa$, see
    \figref{outer:conex}.

    Since each point of $\PSB$ is separable by a line from the
    remaining points of $\PSB$, it follows that $\PSB$ is convex. As
    such, $\cardin{\PSA} = \cardin{\PSB} \leq \indexX{\PS}$
    which implies the result.
\end{proof}

The above lemma implies that the inference dimension of $\PS$ is
$2\indexX{\PS}$. Plugging this into the algorithm of Kane \etal
\cite{klmz-accq-17} results in an algorithm that labels all points correctly
and performs the same number of queries as \thmref{greedy-method} in
expectation.  The advantage of \thmref{greedy-method} is that it does not
require knowing the value of $\indexX{\PS}$ in advance. However, one could
perform an exponential search for a tight upper bound on $\indexX{\PS}$, and
still use the algorithm of Kane \etal\cite{klmz-accq-17}. We leave the question
of experimentally comparing the two algorithms as an open problem for future
research.

\paragraph{Sketch of the algorithm of \cite{klmz-accq-17}.}
The algorithm of Kane \etal \cite{klmz-accq-17} specialized for our case works
as follows. Start by randomly picking a sample of size $O( \indexX{\PS})$ and
query the oracle with each of these points. Next, stream the unlabeled points
through the computed regions, leaving only the points that are yet to be
labeled.  The algorithm repeats this process $O( \log n )$ times, in each
iteration working on the remaining unlabeled points. By proving that in
expectation at least half of the points are being labeled at each round, it 
follows that $O(\log n)$ iterations suffice.

\subsection{The greedy algorithm in 3D}
\seclab{greedy:3d}

Consider the 3D variant of the 2D problem: Given a set of points
$\PS$ in $\Re^3$ and a convex body $\body$ specified via a
separation oracle, the task at hand is to classify, for all the points
of $\PS$, whether or not they are in $\body$, using the fewest
oracle queries possible.

The greedy algorithm naturally extends, where at each iteration $i$ a
plane $\LineB_i$ is chosen that is tangent to the current inner
approximation $\inapx_i$, such that it's closed halfspace (which
avoids the interior of $\inapx_i$) contains the largest number of
unclassified points from the set $\USet_i$. If the queried centerpoint
is outside, the oracle returns a separating plane and as such points
can be discarded by the \RemoveOp{} operation. Similarly, if the
centerpoint is reported inside, then the algorithm calls the
\ExpandOp{} and updates the 3D inner approximation $\inapx_i$. %

\subsubsection{Analysis}

Following the analysis of the greedy algorithm in 2D, we
(conceptually) maintain the following set of objects: For a point
$\pnt \in \USet_i$, let $\pdisk_i(\pnt)$ be the set of all unit length
directions $v \in \Re^3$ such that a plane perpendicular to $v$
separates $\pnt$ from $\inapx_i$. Let
$\PDSet_i = \Set{\pdisk_i(\pnt)}{\pnt \in \USet_i}$. A set of objects
form a collection of \emphi{pseudo-disks} if the boundary of every
pair of them intersect at most twice. The following claim shows that
$\PDSet_i$ is a collection of pseudo-disks on $\sphereC$, where
$\sphereC$ is the sphere of radius one centered at the origin.

\begin{lemma}
    The set
    $\PDSet_i = \Set{\pdisk_i(\pnt) \subseteq \sphereC}{\pnt \in
       \USet_i}$ is a collection of pseudo-disks.
\end{lemma}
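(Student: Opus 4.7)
The plan is to rewrite each $\pdisk_i(\pnt)$ as an explicit sub-level set on $\sphereC$ and then reduce the boundary intersection count to counting supporting planes of $\inapx_i$ that contain a particular line. First I would introduce the support function $h(v) = \max_{x \in \inapx_i} \langle x, v \rangle$ of the inner approximation and observe that, with the natural orientation convention that $v$ is the outward normal pointing from $\inapx_i$ toward $\pnt$,
\[
    \pdisk_i(\pnt) = \Set{v \in \sphereC}{\langle \pnt, v \rangle \geq h(v)},
\]
so that $\partial \pdisk_i(\pnt) = \Set{v \in \sphereC}{\langle \pnt, v \rangle = h(v)}$. A direction $v$ on this boundary is exactly the outward normal of a supporting plane of $\inapx_i$ that passes through $\pnt$.

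The second step is the main geometric reduction. Given two distinct points $\pnt, \pntq \in \USet_i$, a direction $v$ lies in $\partial \pdisk_i(\pnt) \cap \partial \pdisk_i(\pntq)$ iff the supporting plane of $\inapx_i$ with outward normal $v$ contains both $\pnt$ and $\pntq$, equivalently contains the line $L$ through them. The planes containing $L$ form a one-parameter pencil obtained by rotating around $L$, so the task reduces to counting the supporting planes of $\inapx_i$ that lie in this pencil. If $L$ meets the interior of $\inapx_i$, then no plane through $L$ can be a supporting plane (both open halfspaces bounded by such a plane meet the interior), so the intersection is empty. Otherwise, projecting $\inapx_i$ orthogonally onto a plane perpendicular to $L$ collapses $L$ to a single point $L'$ lying outside the projected convex body; the supporting planes of $\inapx_i$ through $L$ then correspond bijectively to the tangent lines from $L'$ to the projection, of which there are at most two. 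Each such plane has a unique outward normal on $\sphereC$, giving the desired bound of at most two boundary intersection points.

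The main obstacle I anticipate is the degenerate configuration in which $L$ is tangent to $\inapx_i$ (for instance, $L$ touches $\partial \inapx_i$ without entering the interior, or lies along an edge or flat facet), since then the pencil can contain one, or even a continuum, of supporting planes through $L$, producing a one-dimensional overlap of the two pseudo-disk boundaries. I would dispose of this either via a standard general-position perturbation (replace $\inapx_i$ by a strictly convex smooth $\eps$-inflation and take $\eps \to 0$) or by noting that such overlaps correspond to tangential, non-crossing contacts of $\partial\pdisk_i(\pnt)$ and $\partial\pdisk_i(\pntq)$, which are harmless for the subsequent applications of the pseudo-disk property.
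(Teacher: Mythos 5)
Your proof is correct and follows essentially the same argument as the paper: both characterize a boundary intersection point as the normal of a supporting plane of $\inapx_i$ passing through the line $L$ joining the two points, and then count at most two such planes in the pencil through $L$. Your version is somewhat more rigorous than the paper's (which merely asserts that "the family of planes passing through $\Line$ has only two tangent planes"), since you supply the projection argument for the tangent-plane count and explicitly dispose of the degenerate case where $L$ is tangent to $\inapx_i$.
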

\begin{proof}
    Fix two points $\pnt, \pntB \in \USet_i$ such that the boundaries
    of $\pdisk_i(\pnt)$ and $\pdisk_i(\pntB)$ intersect on
    $\sphereC$. Let $\Line$ be the line in $\Re^3$ passing through
    $\pnt$ and $\pntB$. Consider any plane $\LineB$ such that $\Line$
    lies on $\LineB$. Since $\Line$ is fixed, $\LineB$ has one degree
    of freedom. Conceptually rotate $\LineB$ until becomes tangent to
    $\inapx_i$ at point $\pntC'$. The direction of the normal to this
    tangent plane, is a point in
    $X = \partial\pdisk_i(\pnt) \cap \partial\pdisk_i(\pntB)$. Note
    that this works also in the other direction --- any point in $X$
    corresponds to a tangent plane passing through $\Line$.  The
    family of planes passing through $\Line$ has only two tangent
    planes to $\body$. It follows that $\cardin{X}=2$. As such, any
    two regions in $\PDSet_i$ intersect as pseudo-disks.
\end{proof}

We need the following two classical results that follows from the
Clarkson-Shor \cite{cs-arscg-89} technique.

\begin{lemma}
    \lemlab{num:vertices:depth}%
   Let $\PDSet$ be a collection of $n$ pseudo-disks, and let
   $\vDY{\depthk}{\Arr}$ be the set of all vertices of depth at most
   $\depthk$ in the arrangement $\Arr = \ArrX{\PDSet}$.  Then
   $\cardin{\vDY{\depthk}{\Arr}} = O(n\depthk)$.%
\end{lemma}
\begin{proof}
    Let $\SSet \subseteq \vertices$ be a random sample where each
    pseudo-disk is independently placed into $\SSet$ with probability
    $1/\depthk$. For each $\pnt \in \vDY{\depthk}{\Arr}$, let
    $\event_\pnt$ be the event that $\pnt$ is a vertex in the union
    $\unionX{\SSet}$ of this random subset of pseudo-disks.  The
    probability that $\pnt$ is part of the union is at least the
    probability that both pseudo-disks defining $\pnt$ in $\Arr$ are
    sampled into $\SSet$ and the remaining $\depthk-2$ objects
    containing $\pnt$ are not in $\SSet$. Thus,
    \begin{align*}
      \Prob{\event_\pnt}
      \geq \frac{1}{\depthk^2} \pth{1 - \frac{1}{\depthk}}^{\depthk}
      \geq \frac{1}{e^2 \depthk^2},
    \end{align*}
    since $1 - 1/x \geq e^{-2/x}$ for $x \geq 2$. If
    $\cardin{\unionX{\SSet}}$ denotes the number of vertices on the
    boundary of the union, then linearity of expectations imply
    $\Ex{\cardin{\unionX{\SSet}}} \geq
    \cardin{\vDY{\depthk}{\Arr}}/(e^2 \depthk^2)$. On the other hand,
    it is well known the union complexity of a collection of $n$
    pseudo-disks is $O(n)$ \cite{klps-ujrcf-86}. Therefore,
    $\Ex{\cardin{\unionX{\SSet}}} \leq \Ex{c \cardin{\SSet}} \leq
    cn/\depthk$, for some appropriate constant $c$. Putting both
    bounds on $\Ex{\cardin{\unionX{\SSet}}}$ together, it follows that
    $cn/\depthk \geq \cardin{\vDY{\depthk}{\Arr}}/(e^2 \depthk^2) \iff
    \cardin{\vDY{\depthk}{\Arr}} = O(n\depthk)$.
\end{proof}

\begin{lemma}
    \lemlab{num:edges:depth}%
   Let $\PDSet$ be a collection of $n$ pseudo-disks. For two integers
   $0 < t \leq k$, a subset $X \subseteq \PDSet$ is a
   \emphi{$( t,k)$-tuple} if
   \begin{compactenumi*}
       \item $\cardin{X} \leq t$,
       \item $\exists \pnt \in \cap_{\pdisk \in X} \pdisk$, and
       \item $\depthY{\pnt}{\PDSet} \leq \depthk$.
   \end{compactenumi*}
   Let $\tuplesZ{t}{\depthk}{n}$ be the set of all $(\leq t,k)$-tuples
   of $\PDSet$.  Then
   $\cardin{\tuplesZ{t}{\depthk}{n}} = O(n t k^{t-1})$. %
\end{lemma}
\begin{proof}
    Let $\Sample \subseteq \PDSet$ be a random sample, where each
    pseudo-disk is independently placed into $\Sample$ with
    probability $1/k$. Consider a specific $(t,k)$-tuple $X$, with a
    witness point $\pnt$ of depth $\leq k$. Without loss of
    generality, by moving $\pnt$, one can assume $\pnt$ is a vertex of
    $\ArrX{\PDSet}$.

    Let $\event_{X}$ be the event that $\pnt$ is of depth exactly $t$
    in $\ArrX{\Sample}$, and $X \subseteq \Sample$. For $\event_{X}$
    to occur, all the objects of $X$ need to be sampled into
    $\Sample$, and each of the at most $k-t$ pseudo-disks containing
    $\pnt$ in its interior are not in $\Sample$. Therefore
    \begin{equation*}
        \Prob{\event_{X}}%
        \geq%
        \frac{\pth{1-1/\depthk}^{\depthY{\pnt}{\PDSet} - |X|}}{k^{|X|}}%
        \geq%
        \frac{\pth{1-1/\depthk}^k}{k^{t}}%
        \geq%
        \frac{1}{e^2 \depthk^t}.
    \end{equation*}
    Note, that a vertex of depth $\leq k$ in $\ArrX{\Sample}$
    corresponds to at most one such an event happening. We thus have,
    by linearity of expectations, that
    \begin{equation*}
        \frac{\cardin{\tuplesZ{t}{\depthk}{n}}}{e^2 k^t}%
        \leq %
        \Ex{\bigl.\cardin{\vDY{t}{\ArrX{\Sample}}}}%
        =%
        O(tn/k),
    \end{equation*}
    by \lemref{num:vertices:depth}.
\end{proof}

\begin{lemma}
    \lemlab{num:edges:pdisks}%
    Let $\Graph_i = (\PDSet_i, E_i)$ be the intersection graph of the
    pseudo-disks of $\PDSet_i$ (in the $i$\th iteration).  If
    $\ArrX{\PDSet_i}$ has maximum depth $\depthk$, then
    $\cardin{E_i} = O(n\depthk)$. Furthermore,
    $\indep(\Graph_i) = \Omega(n/\depthk)$, where $\indep(\Graph_i)$
    denotes the size of the largest independent set in $\Graph_i$.
\end{lemma}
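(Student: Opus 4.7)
The plan is to derive both bounds from \lemref{num:edges:depth}, specialized to $t=2$, together with a standard Tur\'an-type argument for the independent set.

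\textbf{Bounding $\cardin{E_i}$.} An edge of $\Graph_i$ is a pair $\{\pdisk, \pdisk'\} \subseteq \PDSet_i$ with $\pdisk \cap \pdisk' \neq \emptyset$. Pick any witness point $\pnt \in \pdisk \cap \pdisk'$; since the maximum depth of $\ArrX{\PDSet_i}$ is $\depthk$, we have $\depthY{\pnt}{\PDSet_i} \leq \depthk$, so $\{\pdisk, \pdisk'\}$ is a $(\leq 2, \depthk)$-tuple. Hence $\cardin{E_i} \leq \cardin{\tuplesZ{2}{\depthk}{n}}$, and \lemref{num:edges:depth} (with $t=2$) gives $\cardin{\tuplesZ{2}{\depthk}{n}} = O(n \cdot 2 \cdot \depthk^{2-1}) = O(n\depthk)$.

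\textbf{Bounding $\indep(\Graph_i)$.} The average degree of $\Graph_i$ is $\avgdeg = 2\cardin{E_i}/n = O(\depthk)$. By Tur\'an's theorem (in its standard form for independent sets), any graph on $n$ vertices with average degree $\avgdeg$ admits an independent set of size at least $n/(\avgdeg+1)$. Applying this to $\Graph_i$ yields $\indep(\Graph_i) = \Omega(n/(\depthk+1)) = \Omega(n/\depthk)$ (the bound is trivial when $\depthk = 0$, in which case the graph is edgeless and the whole vertex set is independent).

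\textbf{Where the real work sits.} Both steps here are essentially bookkeeping on top of \lemref{num:edges:depth}; the technical content has already been pushed into the Clarkson--Shor style bound on low-depth $t$-tuples (proved in the referenced appendix). The only subtlety in the edge count is to verify that a witness point of intersection exists for every edge — which is immediate from the definition of the intersection graph — and that its depth is automatically bounded by the global maximum depth $\depthk$. The independent-set bound is then purely combinatorial, with no additional geometric input.
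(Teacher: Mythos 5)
Your proof is correct and follows essentially the same route as the paper: reduce the edge count to the $(\leq 2,\depthk)$-tuple bound of \lemref{num:edges:depth}, then apply Tur\'an's theorem to the average degree. The only additions are minor bookkeeping remarks (the witness point, the $\depthk=0$ case) that don't change the argument.
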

\begin{proof}
    The first claim follows from \lemref{num:edges:depth}.
    Indeed, $\cardin{E_i} = \tuplesZ{2}{\depthk}{n} = O(n\depthk)$ ---
    since every intersecting pair of pseudo-disks induces a
    corresponding $(2,\depthk)$-tuple.

    For the second part, \Turan's Theorem states that any graph has an
    independent set of size at least $n/\pth{\avgdeg(\Graph_i) + 1}$,
    where $\avgdeg(\Graph_i) = 2\cardin{E_i}/n \leq c \depthk$ is the
    average degree of $\Graph_i$ and $c$ is some constant. It follows
    that $\indep(\Graph_i) \geq n/(c\depthk + 1) = \Omega(n/\depthk)$.
\end{proof}

The challenge in analyzing the greedy algorithm in 3D is that
mutual visibility between pairs of points is not necessarily lost as
the inner approximation grows.  As an alternative, consider the
\emph{hypergraph} $\Hgraph_i = (\PDSet_i, \hedges_i)$, where a triple
of pseudo-disks $\pdisk_1, \pdisk_2, \pdisk_3 \in \PDSet_i$ form a
hyperedge $\brc{\pdisk_1,\pdisk_2,\pdisk_3} \in \hedges_i$ $\iff$
$\pdisk_1 \cap \pdisk_2 \cap \pdisk_3 \neq \varnothing$ (this is
equivalent to the condition that the corresponding triple of points
span a triangle which does not intersect $\inapx_i$).

As in the analysis of the algorithm in 2D, we first bound the
number of edges in $\Hgraph_i$ and then argue that enough progress
is made in each iteration.

\begin{lemma}
    \lemlab{num:triples:pdisks}%
    Let $\Hgraph_i = (\PDSet_i, \hedges_i)$ be the hypergraph in
    iteration $i$, and let $\Graph_i$ be the corresponding
    intersection graph of $\PDSet_i$.  If $\ArrX{\PDSet_i}$ has
    maximum depth $\depthk$, then
    $\cardin{\hedges_i} = O(\indep(\Graph_i) \depthk^3)$.
\end{lemma}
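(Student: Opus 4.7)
The plan is to combine two bounds already available. First, I would apply \lemref{num:edges:depth} with parameter $t=3$ and depth bound $\depthk$ equal to the maximum depth of $\ArrX{\PDSet_i}$. Every hyperedge $\brc{\pdisk_1,\pdisk_2,\pdisk_3} \in \hedges_i$ witnesses a common intersection point, and since the maximum depth in the arrangement is $\depthk$, that witness point has depth at most $\depthk$. Consequently each hyperedge induces a $(\leq 3, \depthk)$-tuple of $\PDSet_i$, so
\[
\cardin{\hedges_i} \;\leq\; \cardin{\tuplesZ{3}{\depthk}{n}} \;=\; O\pth{n\depthk^{\,2}},
\]
where $n = \cardin{\PDSet_i}$.

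Next I would invoke the second half of \lemref{num:edges:pdisks}, which gives a \Turan-type lower bound on the independent set: $\indep(\Graph_i) = \Omega(n/\depthk)$. Rearranging yields $n = O(\indep(\Graph_i)\,\depthk)$. Substituting back,
\[
\cardin{\hedges_i} \;=\; O\pth{n\depthk^{\,2}} \;=\; O\pth{\indep(\Graph_i)\,\depthk \cdot \depthk^{\,2}} \;=\; O\pth{\indep(\Graph_i)\,\depthk^{\,3}},
\]
as claimed. This is the higher-dimensional analogue of \lemref{int-graphs}, where in the 2D case we had $\cardin{E} = O(\indep\clique^2)$; the extra factor of $\depthk$ in 3D reflects the one extra dimension in the Clarkson--Shor exchange argument.

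The only thing to double-check is that the hypothesis of \lemref{num:edges:depth} is actually met for the families $\PDSet_i$. But the preceding lemma already established that $\PDSet_i$ is a collection of pseudo-disks on $\sphereC$, and the statement of \lemref{num:edges:depth} applies to any pseudo-disk family together with the $(t,k)$-tuple depth condition, which is automatic here since \emph{every} point of $\sphereC$ has depth at most $\depthk$. Thus there is no real obstacle; the lemma is essentially a one-line combination of the two technical ingredients assembled just above it.
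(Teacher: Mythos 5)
Your proof is correct and is essentially identical to the paper's: both invoke \lemref{num:edges:depth} with $t=3$ to get $\cardin{\hedges_i} = O(f_i \depthk^2)$ and then use the \Turan-type bound $\indep(\Graph_i) = \Omega(f_i/\depthk)$ from \lemref{num:edges:pdisks} to replace $f_i$ by $O(\indep(\Graph_i)\depthk)$.
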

\begin{proof}
    \lemref{num:edges:pdisks} implies that $\Graph_i$ has an
    independent set of size $\Omega(f_i/k)$, where
    $f_i = \cardin{\PDSet_i}$. \lemref{num:edges:depth} implies that
    $\cardin{\hedges_i} \leq \cardin{\tuplesZ{3}{\depthk}{f_i}} =
    O(f_i\depthk^2) = O(\indep(\Graph_i) \depthk^3)$.
\end{proof}

The following is a consequence of the Colorful \Caratheodory Theorem
\cite{b-gct-82}, see Theorem 9.1.1 in \cite{m-ldg-02}.

\begin{theorem}
    \thmlab{cpnt:many:simplices}%
    Let $\PS$ be a set of $n$ points in $\Re^d$ and $\cpnt$ be the
    centerpoint of $\PS$. Let $\SSet = \binom{\PS}{d+1}$ be the
    set of all $d+1$ simplices induced by $\PS$. Then for
    sufficiently large $n$, the number of simplices in $\SSet$ that
    contain $\cpnt$ in their interior is at least $c_d n^{d+1}$, where
    $c_d$ is a constant depending only on $d$.
\end{theorem}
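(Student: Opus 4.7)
The plan is to follow the classical proof of the First Selection Lemma, combining Tverberg's theorem with the Colorful \Caratheodory Theorem. First I would invoke Tverberg's theorem to partition $\PS$ into $r = \lfloor (n-1)/(d+1)\rfloor + 1 = \Theta(n)$ pairwise disjoint parts $X_1, \ldots, X_r$ whose convex hulls share a common point $a$. Such a Tverberg point is automatically a centerpoint of $\PS$, since every closed halfspace containing $a$ must meet each $\CHX{X_i}$ and therefore contain at least one point from each of the $r \geq n/(d+1)$ parts. So I would identify the centerpoint $\cpnt$ with $a$; if the statement's centerpoint is a different one, the same halfspace/Tverberg machinery still applies to it and only the constant $c_d$ changes.

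Next, for each choice of $d+1$ distinct indices $1 \leq i_1 < \cdots < i_{d+1} \leq r$, I would apply the Colorful \Caratheodory Theorem to the color classes $X_{i_1}, \ldots, X_{i_{d+1}}$---each of whose convex hulls contains $\cpnt$---to obtain a rainbow simplex $\brc{\pnt_1,\ldots,\pnt_{d+1}}$ with $\pnt_j \in X_{i_j}$ and $\cpnt \in \CHX{\brc{\pnt_1,\ldots,\pnt_{d+1}}}$. Because the $X_i$'s are disjoint, different index tuples yield distinct $(d+1)$-subsets of $\PS$, so the number of simplices in $\SSet$ whose closure contains $\cpnt$ is at least
\[
\binom{r}{d+1} \;=\; \Omega(n^{d+1}),
\]
which gives the required bound $c_d n^{d+1}$ for $n$ sufficiently large in terms of $d$.

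Finally, the theorem asks for containment in the \emph{interior} of the simplex. I would dispatch this via a standard general-position argument: after an arbitrarily small perturbation of $\PS$ (or by restricting attention to simplices for which $\cpnt$ is not on an affine flat spanned by at most $d$ input points), every simplex counted above contains $\cpnt$ in its interior; the degenerate simplices with $\cpnt$ on the boundary number only $O(n^d)$ and are harmlessly absorbed into the constant. The main obstacle I expect is really just this last, largely cosmetic, interior-versus-closure gap together with matching the Tverberg point to the particular centerpoint named in the statement---both are routine and do not affect the $\Omega(n^{d+1})$ asymptotics.
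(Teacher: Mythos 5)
The paper does not actually prove this statement: it is cited as a consequence of B\'ar\'any's Colorful \Caratheodory Theorem, with a pointer to Theorem~9.1.1 (the First Selection Lemma) in \Matousek{}'s book. What you wrote is a reconstruction of exactly the classical proof of that theorem---Tverberg's theorem to manufacture $r = \Theta(n)$ disjoint classes whose hulls share a point, then Colorful \Caratheodory on each $(d+1)$-tuple of classes, then a general-position cleanup for the interior-vs-closure distinction. So your route matches the one the paper is implicitly invoking, and the main argument is correct. You are also right that the Tverberg point is automatically a centerpoint.

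The one spot that is looser than you make it sound is the parenthetical ``if the statement's centerpoint is a different one, the same machinery still applies with only a change of constant.'' As written this is not a routine remark: the Tverberg partition's common point is the specific Tverberg point $a$, and there is no reason the hulls $\CHX{X_i}$ should all contain an \emph{arbitrary} centerpoint $\cpnt$. To handle a general centerpoint you should replace the single application of Tverberg by a greedy extraction: while fewer than $n/(d+1)^2$ rounds have elapsed, every closed halfspace through $\cpnt$ still contains a surviving point of $\PS$ (by the centerpoint property, since at most $t(d+1) < n/(d+1)$ points have been removed), hence $\cpnt$ remains in the convex hull of the survivors, and by \Caratheodory{} you can peel off a set of at most $d+1$ points whose hull contains $\cpnt$. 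This yields $\Omega_d(n)$ pairwise disjoint color classes all containing $\cpnt$ in their hulls, after which Colorful \Caratheodory gives $\binom{r}{d+1} = \Omega_d(n^{d+1})$ simplices as before. For the paper's application this subtlety is immaterial---the algorithm is free to compute a Tverberg point---but if you want the theorem to hold for \emph{every} centerpoint as the statement suggests, the greedy version is what you need, not just a rescaled constant.
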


Next, we argue that in each iteration of the greedy algorithm, a
constant fraction of the edges in $\Hgraph_i$ are removed. The
following is the higher dimensional version of
\lemref{segments:intersect}.

\begin{lemma}
    \lemlab{cpnt:many:simplex:faces}%
    Let $\PS$ be a set of $n$ points in $\Re^3$ lying above the
    $xy$-plane, $\cpnt$ be the centerpoint of $\PS$ and
    $\TSet = \binom{P}{3}$ be the set of all triangles induced by
    $\PS$. Next, consider any point $\pntB$ on the $xy$-plane.  Then
    the segment $\cpnt\pntB$ intersects at least $\Omega(n^3)$
    triangles of $\TSet$.
\end{lemma}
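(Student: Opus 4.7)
The plan is to derive the bound from \thmref{cpnt:many:simplices} applied with $d = 3$, which guarantees a constant $c_3 > 0$ such that, for $n$ large enough, the centerpoint $\cpnt$ lies in the interior of at least $c_3 n^4$ tetrahedra spanned by $4$-subsets of $\PS$. The remainder is a short ``exit from tetrahedron'' argument combined with elementary double counting; in particular, no rotation or balancing argument analogous to the one in \lemref{segments:intersect} is needed.

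For any tetrahedron $T' = \CHX{\brc{\pnt_1,\pnt_2,\pnt_3,\pnt_4}}$ with vertices in $\PS$ that contains $\cpnt$ in its interior, the segment $\cpnt\pntB$ starts at an interior point of $T'$ and must therefore leave $T'$ through its boundary. The key observation is that $\pntB \notin T'$: since all four vertices of $T'$ lie strictly above the $xy$-plane, one has $T' \subseteq \brc{z > 0}$, while $\pntB$ has $z = 0$. Hence $\cpnt\pntB$ crosses the boundary of $T'$, which is the union of four triangular faces, and so meets at least one of them. Each such face is an element of $\TSet = \binom{\PS}{3}$, so every tetrahedron containing $\cpnt$ contributes at least one triangle of $\TSet$ intersected by $\cpnt\pntB$.

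To pass from the per-tetrahedron count to a count of \emph{distinct} triangles, observe that any fixed triangle in $\TSet$ is a face of at most $n - 3$ tetrahedra from $\binom{\PS}{4}$ (one per choice of the fourth vertex). Combining this with the preceding step, the number of distinct triangles of $\TSet$ intersected by $\cpnt\pntB$ is at least $c_3 n^4/(n-3) = \Omega(n^3)$, as claimed. The only place where care is required is the exit argument: it relies crucially on the hypothesis that $\PS$ lies above the $xy$-plane and $\pntB$ lies on it, since this is what forces $\pntB \notin T'$ and therefore guarantees that the segment must cross a face of $T'$. Everything else is a direct application of \thmref{cpnt:many:simplices} or routine counting.
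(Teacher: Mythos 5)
Your proposal is correct and follows essentially the same route as the paper's proof: apply \thmref{cpnt:many:simplices} to obtain $\Omega(n^4)$ tetrahedra containing $\cpnt$, charge each such tetrahedron to a face of it that the segment $\cpnt\pntB$ crosses, and divide by the at-most-$(n-3)$ tetrahedra sharing any fixed triangular face. The only difference is that you spell out explicitly why $\pntB\notin T'$ (the $z$-coordinate argument), which the paper leaves implicit in the sentence ``observe the segment $\cpnt\pntB$ must intersect at least one of the triangular faces.''
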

\begin{proof}
    Let $\SSet = \binom{\PS}{d+1}$ be the set of all simplices
    induced by $\PS$.  \thmref{cpnt:many:simplices} implies that the
    centerpoint $\cpnt$ is contained in $n^4/\constA$ simplices of
    $\SSet$ for some constant $\constA > 1$. Let $\simplex$ be a
    simplex that contains $\cpnt$ and observe the segment $\cpnt\pntB$
    must intersect at least one of the triangular faces $\tau$ of
    $\simplex$. As $\simplex \in \SSet$, charge this simplex
    $\simplex$ to the triangular face $\tau$.  Applying this counting
    to all the simplices containing $\cpnt$, implies that at least
    $n^4/\constA$ charges are made. On the other hand, a triangle
    $\tau$ can be charged at most $n-3$ times (because a simplex can
    be formed from $\tau$ and one other additional point of
    $\PS$). It follows that $\cpnt\pntB$ intersects at least
    $(n^4/\constA)/ (n-3) = \Omega(n^3)$ triangles of $\TSet$.
\end{proof}

\begin{lemma}
    \lemlab{depth:reduce:3d}%
    In each iteration of the greedy algorithm, the number of edges in
    the hypergraph $\Hgraph_i = (\PDSet_i, \hedges_i)$ decreases by at
    least $\Omega(\depthk^3)$, where $\depthk$ is the maximum depth of
    any point in $\ArrX{\PDSet_i}$.
\end{lemma}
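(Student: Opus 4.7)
The plan is to show that in either of the two possible outcomes of iteration $i$---a \RemoveOp{} or an \ExpandOp{}---at least $\Omega(\depthk^3)$ hyperedges of $\Hgraph_i$ are destroyed. Before splitting into cases, I would record one common setup: by the greedy rule that picks the tangent plane $\LineB$ with the most unclassified points on the far side of $\inapx_i$, the set $\USet^+ = \USet_i \cap \LineB^+$ has cardinality equal to the maximum depth $\depthk$ of $\ArrX{\PDSet_i}$ (the outward normal to $\LineB$ is a deepest point of the arrangement of pseudo-disks). Moreover, $\LineB$ separates all of $\USet^+$ from $\inapx_i$, so every triangle spanned by three points of $\USet^+$ avoids $\inapx_i$; hence every $3$-subset of $\USet^+$ is a hyperedge of $\Hgraph_i$.

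In the \RemoveOp{} case, the oracle supplies a plane $\LineA$ separating the centerpoint $\cpnt$ of $\USet^+$ from $\body$. The centerpoint property in $\Re^3$ gives that the closed halfspace $\LineA^+$ containing $\cpnt$ contains at least $\cardin{\USet^+}/4 = \depthk/4$ points of $\USet^+$, and \RemoveOp{} labels and deletes all of them. Since every $3$-subset of $\USet^+$ was a hyperedge, every $3$-subset of the deleted set is now a destroyed hyperedge, which amounts to $\binom{\depthk/4}{3} = \Omega(\depthk^3)$ hyperedges.

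In the \ExpandOp{} case, the algorithm forms $\inapx_{i+1} = \CHX{\inapx_i \cup \brc{\cpnt}}$. Let $\pntB$ be any point in $\LineB \cap \inapx_i$, which is non-empty since $\LineB$ is tangent to $\inapx_i$. After an affine normalization that places $\LineB$ as the $xy$-plane with $\USet^+$ above it, \lemref{cpnt:many:simplex:faces} applied to $\USet^+$, with centerpoint $\cpnt$ and base point $\pntB$, gives that the segment $\cpnt\pntB$ pierces the interiors of $\Omega(\depthk^3)$ triangles of $\binom{\USet^+}{3}$. For each such triangle $T$, I claim no plane separates $T$ from $\inapx_{i+1}$: since both endpoints $\cpnt$ and $\pntB$ lie in $\inapx_{i+1}$, the whole segment does too; if a plane $H$ separated $T$ from $\inapx_{i+1}$, then the piercing point $z \in T \cap \cpnt\pntB$ would lie in both closed halfspaces of $H$, hence on $H$ itself, contradicting the fact that $z$ lies in the relative interior of $T$ and therefore in the open halfspace containing $T$. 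So each such $T$ ceases to be a hyperedge in $\Hgraph_{i+1}$.

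The main obstacle is the expand case: translating the combinatorial conclusion of \lemref{cpnt:many:simplex:faces} (many triangles pierced by a segment) into the separation-theoretic condition defining hyperedges (no plane separating the triangle from $\inapx_{i+1}$). The non-separation argument above is the geometric bridge. Degenerate situations where $\cpnt\pntB$ meets a triangle only on its boundary rather than transversally must be excluded; a standard general-position perturbation absorbs these into lower-order terms, leaving the $\Omega(\depthk^3)$ bound intact.
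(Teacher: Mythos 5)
Your proposal follows essentially the same route as the paper's proof: it splits on whether the iteration ends in a \RemoveOp{} (using the $\Re^3$ centerpoint ratio of $1/4$ to destroy $\binom{\depthk/4}{3}=\Omega(\depthk^3)$ hyperedges) or an \ExpandOp{} (invoking \lemref{cpnt:many:simplex:faces} on $\USet^+$ with the centerpoint $\cpnt$ and a point $\pntB\in\LineB\cap\inapx_i$). The only difference is that you spell out, via a short separation argument, why each pierced triangle of $\binom{\USet^+}{3}$ ceases to be a hyperedge of $\Hgraph_{i+1}$ — a step the paper leaves implicit (a triangle crossed by the segment $\cpnt\pntB\subseteq\inapx_{i+1}$ now intersects $\inapx_{i+1}$, so by definition it is no longer a hyperedge).
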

\begin{proof}
    Recall that $\USet^+ = \USet_i \cap \LineB^+$ is the current set
    of unclassified points and $\LineB$ is the plane tangent to
    $\inapx_i$, where $\LineB^+$ is the closed halfspace that avoids
    the interior of $\inapx_i$ and contains the largest number of
    unlabeled points.  Note that $|\USet^+| \geq \depthk$.

    In a \RemoveOp{} operation, arguing as in \lemref{depth:reduce},
    implies that the number of points of $\USet^+$ that are discarded
    is at least $t_i \geq \depthk/4$. Since all of the discarded points are in
    a halfspace avoiding $\inapx_i$, it follows that all the triples
    they induce are in $\Hgraph_i$. Namely, at least
    $\binom{t_i}{3} = \Omega(k^3)$ hyperedges get discarded.

    In an \ExpandOp{} operation, the centerpoint $\cpnt$ of $\USet^+$
    is added to the current inner approximation $\inapx_i$.  Since all
    of the points of $\USet^+$ lie above the plane $\LineB$, applying
    \lemref{cpnt:many:simplex:faces} on $\USet^+$ with the centerpoint
    $\cpnt$ and a point lying on the plane $\LineB$ inside the
    (updated) inner approximation, we deduce that at least
    $\Omega(\depthk^3)$ hyperedges are removed.
\end{proof}

\begin{theorem}
    \thmlab{greedy-method-3d}%
    Let $\body \subseteq \Re^3$ be a convex body provided via a
    separation oracle, and let $\PS$ be a set of $n$ points in
    $\Re^3$.  The greedy classification algorithm performs
    ${O\bigl((\indexX{\PS}+1) \log n\bigr)}$ oracle queries. The
    algorithm correctly identifies all points in $\PS \cap \body$ and
    $\PS \setminus \body$.
\end{theorem}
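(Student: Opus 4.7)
The plan is to mimic the structure of the 2D proof (\thmref{greedy-method}) closely, replacing the visibility graph of intervals with the pseudo-disk intersection graph $\Graph_i$ and the hypergraph $\Hgraph_i$ of mutually visible triples, and replacing the edge-counting ingredients of the 2D argument (\lemref{int-graphs} and \lemref{depth:reduce}) with their 3D analogs (\lemref{num:triples:pdisks} and \lemref{depth:reduce:3d}). The first step is to handle the initial phase: exactly as in \lemref{iterations:empty:inapx}, while the inner approximation is empty we query the centerpoint of $\USet_i$; the $d+1$-centerpoint property (here $d=3$) guarantees a constant fraction of the unclassified points gets removed whenever the oracle reports ``outside'', so only $\iEmpty = O(\log n)$ such iterations occur before $\inapx_\iEmpty \neq \varnothing$.

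From $i \geq \iEmpty$ onwards, I would track the hypergraph $\Hgraph_i = (\PDSet_i, \hedges_i)$ of triples of mutually visible points and partition the remaining iterations into epochs based on the maximum depth of the arrangement $\ArrX{\PDSet_i}$. As in the 2D proof, define $\startX{t}$ to be the first iteration at which the maximum depth drops to $\leq t$, and call an epoch the block of iterations between $\startX{t}$ and $\startX{t/2}$; there are $O(\log n)$ such epochs since the depth monotonically decreases from at most $n$ down to $0$. For an epoch beginning at iteration $i = \startX{t}$ with depth $\depthk \leq t$, \lemref{num:triples:pdisks} gives
\[
   \cardin{\hedges_i} \;=\; O\bigl(\indep(\Graph_i)\,\depthk^3\bigr)
   \;=\; O\bigl(\indexX{\PS}\, t^3\bigr),
\]
where the second bound uses the key fact (as in the 2D case) that any independent set in the pseudo-disk intersection graph $\Graph_i$ corresponds to a set of points in $\USet_i$ that are pairwise not mutually visible through $\inapx_i$, hence must be in convex position, giving $\indep(\Graph_i) \leq \indexX{\PS}$.

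Next, throughout the epoch the depth stays $\geq t/2$, so \lemref{depth:reduce:3d} guarantees that every iteration destroys $\Omega((t/2)^3) = \Omega(t^3)$ hyperedges of $\Hgraph_i$. Dividing, the epoch lasts at most $O(\indexX{\PS} t^3 / t^3) = O(\indexX{\PS})$ iterations, and every iteration issues exactly one oracle query. Summing over the $O(\log n)$ epochs, plus the $O(\log n)$ iterations from the initial empty-$\inapx$ phase, yields a total of $O\bigl((\indexX{\PS}+1)\log n\bigr)$ queries. Correctness is immediate from the semantics of \ExpandOp{} and \RemoveOp{}: every point removed from $\USet$ is correctly classified, and the loop terminates only when $\USet = \varnothing$.

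The main obstacle I anticipate is the convex-position claim $\indep(\Graph_i) \leq \indexX{\PS}$: in 3D, ``pairwise not mutually visible through $\inapx_i$'' has to be translated carefully into ``vertices of the convex hull'' (no point can lie in the convex hull of the others, else a segment between two hull neighbors of that point would avoid $\inapx_i$, contradicting pairwise non-visibility). Everything else is essentially a change of counting units from edges of an interval graph to hyperedges of a triple-visibility hypergraph, with the needed combinatorial bounds supplied by the Clarkson--Shor style lemmas \lemref{num:vertices:depth} and \lemref{num:edges:depth}.
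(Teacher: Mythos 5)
Your proposal is correct and follows essentially the same route as the paper's proof: bound the initial empty-$\inapx$ phase by $O(\log n)$ via centerpoints, track the hypergraph $\Hgraph_i$ of visible triples, bound $\cardin{\hedges_i}$ by $O(\indexX{\PS}\,\depthk^3)$ using \lemref{num:triples:pdisks} together with the observation that an independent set in $\Graph_i$ is in convex position, and use \lemref{depth:reduce:3d} to halve the depth after $O(\indexX{\PS})$ iterations, giving $O(\indexX{\PS}\log n)$ iterations in total. Your epoch bookkeeping is just a more explicit version of the paper's ``after $O(\indexX{\PS})$ iterations the maximum depth is halved'' sentence, and the one place you flag as a potential obstacle (the convex-position claim) is indeed asserted tersely in the paper too; the cleanest justification is directional rather than via ``hull neighbors'': if $\pnt$ lies in the convex hull of the others, any direction $v \in \pdisk_i(\pnt)$ has some other point $\pntq$ at least as extreme in direction $v$, so the separating plane for $\pnt$ also separates $\pntq$, forcing $\pdisk_i(\pnt) \cap \pdisk_i(\pntq) \neq \varnothing$ and contradicting independence.
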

\begin{proof}
    The proof is essentially the same as \thmref{greedy-method}.
    Arguing as in \lemref{iterations:empty:inapx} implies that there
    are at most $O(\log n)$ iterations (and thus also oracle queries)
    in which the inner approximation is empty.

    Now consider the hypergraph $\Hgraph_1 = (\PDSet_1, \hedges_1)$ at
    the start of the algorithm execution.  As the algorithm
    progresses, both vertices and hyperedges are removed from the
    hypergraph. Let $\Hgraph_i = (\PDSet_i, \hedges_i)$ denote the
    hypergraph in the $i$\th iteration of the algorithm.  Recall that
    $\PDSet_i$ is a set of pseudo-disks associated with each of the
    points yet to be classified. Observe that any independent set of
    pseudo-disks in the corresponding {intersection graph} $\Graph_i$
    corresponds to an independent set of points with respect to the
    inner approximation $\inapx_i$, and as such is a subset of points
    in convex position. Therefore, the size of any such independent
    set is bounded by $\indexX{\PS}$.

    Let $\depthk_i$ denote the maximum depth of any vertex in the
    arrangement $\ArrX{\PDSet_i}$.  \lemref{num:triples:pdisks}
    implies that
    $\cardin{\hedges_i} = O\pth{\indexX{\PS} \depthk_i^3}$.
    \lemref{depth:reduce:3d} implies that the number of hyperedges in
    the $i$\th iteration decreases by at least
    $\Omega(\depthk_i^3)$. Namely, after $O( \indexX{\PS})$
    iterations, the maximum depth is halved.  It follows that after
    $O( \indexX{\PS} \log n)$ iterations, the maximum depth is zero,
    which implies that all the points are classified. Since the
    algorithm performs one query per iteration, the claim follows.
\end{proof}

\section{An instance-optimal approximation in two dimensions}
\seclab{improved:2d}

Before discussing the improved algorithm, we present a lower bound on
the number of oracle queries performed by any algorithm that
classifies all the given points. We then present the improved
algorithm, which matches the lower bound up to a factor of $O(\log^2n)$.

\subsection{A lower bound}
\seclab{lower:bound}

Given a set $\PS$ of points in the plane, and a convex body $\body$,
the \emphi{outer fence} of $\PS$ is a closed convex polygon $\Fout$
with minimum number of vertices, such that $\body \subseteq \Fout$ and
$\body \cap \PS = \Fout \cap \PS$. Similarly, the \emphi{inner
   fence} is a closed convex polygon $\Fin$ with minimum number of
vertices, such that $\Fin \subseteq \body$ and
$\body \cap \PS = \Fin \cap \PS$. Intuitively, the outer fence
separates $\PS \setminus \body$ from $\partial \body$, while the
inner fence separates $\PS \cap \body$ from $\partial \body$.  The
\emphi{separation price} of $\PS$ and $\body$ is
\begin{equation*}
    \priceY{\PS}{\body} = \nVX{ \Fin} + \nVX{ \Fout},
\end{equation*}
where $\nVX{F}$ denotes the number of vertices of a polygon $F$.  See
\figref{easy:not:easy} for an example.

\begin{figure}[t]
    \hfill%
    \includegraphics[page=1,width=0.3\linewidth]{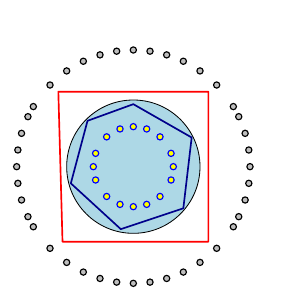}%
    \hfill%
    \includegraphics[page=2,width=0.3\linewidth]{figs/splitting}%
    \hfill%
    \includegraphics[page=3,,width=0.3\linewidth]{figs/splitting}
    \hfill%
    \phantom{}
    \caption{The separation price, for the same point set, is
       different depending on how ``tight'' the body is in relation to
       the inner and outer point set.}
    \figlab{easy:not:easy}
\end{figure}

\begin{lemma}
    \lemlab{lower:bound}
    Let $\body$ be a convex body provided via a separation oracle,
    and let $\PS$ be a point set in the plane. Any algorithm that
    classifies the points of $\PS$ in relation to $\body$, must perform
    at least $\priceY{\PS}{\body}$ separation oracle queries.
\end{lemma}
\begin{proof}
    Consider the set $Q$ of queries performed by the optimal algorithm
    (for this input), and split it, into the points inside and outside
    $\body$. The set of points inside, $\Qin = Q \cap \body$
    has the property that $\Qin \subseteq \body$, and furthermore
    $\CHX{\Qin} \cap \PS = \body \cap \PS$ --- otherwise, there
    would be a point of $\body \cap \PS$ that is not
    classified. Namely, the vertices of $\CHX{\Qin}$ are vertices of a
    fence that separates the points of $\PS$ inside $\body$ from the
    boundary of $\body$. As such, we have that
    $\cardin{\Qin} \geq \nVX{\CHX{\Qin}} \geq \nVX{\Fin}$.

    Similarly, each query in $\Qout = Q \setminus \Qin$ gives rise to
    a separating halfplane. The intersection of the corresponding
    halfplanes is a convex polygon $H$ that contains $\body$, and
    furthermore contains no point of $\PS \setminus \body$. Namely,
    the boundary of $H$ behaves like an outer fence. As such, we have
    $\cardin{\Qout} \geq \nVX{H} \geq \nVX{\Fout}$.

    Combining, we have that
    $\cardin{Q} = \cardin{\Qin} + \cardin{\Qout} \geq \nVX{\Fin} +
    \nVX{\Fout} = \priceY{\PS}{\body}$, as claimed.
\end{proof}

\begin{remarks}
\begin{compactenumi}
\item
Naturally the separation price, and thus the proof of the lower bound,
generalizes to higher dimensions. See \defref{lower:bound:high}
and \lemref{lower:bound:high}.

\item The lower bound only holds for $d \geq 2$. In 1D, the problem
can be solved using $O(\log n)$ queries with binary search. The above
would predict that any algorithm needs $\Omega(1)$ queries. However it
is not hard to argue a stronger lower bound of $\Omega(\log n)$.

\item
In \apndref{ex:sep:pr}, we show that when $\PS$ is a set of $n$ points
chosen uniformly at random from a square and $\body$ is a smooth convex
body, $\Ex{\priceY{\PS}{\body}} = O(n^{1/3})$. Thus, when the
points are randomly chosen, one can think of $\priceY{\PS}{\body}$ as
growing sublinearly in $n$.
\end{compactenumi}
\end{remarks}

\subsection{Useful operations}
\seclab{useful}

We start by presenting some basic operations that the new algorithm
will use.

\subsubsection{A directional climb}
\seclab{dir:climb}

Given a direction $v$, a \emphi{directional climb} is a sequence of
iterations, where in each iteration, the algorithm finds the extreme
line $\ell$ perpendicular to $v$, that is tangent to the inner approximation
$\inapx$. The algorithm then performs an iteration with $\ell$, as
described in \secref{round}, which we now recall. Specifically,
the algorithm computes the centerpoint $\query$ of all points in the halfspace
bounded by $\ell$ that avoids $\body$. Depending on whether $\query \in \body$,
we either perform $\ExpandOp{}(\query)$ or $\RemoveOp{}(\ell)$ (see
\secref{operations}). We then classify points accordingly and recompute $\ell$
with the updated inner approximation $\inapx$. See \figref{directional:climb}
for an illustration. The directional climb ends when the outer halfspace
induced by this line contains no unclassified point.

\begin{figure}[t]
    \centerline{%
       \includegraphics%
       {figs/directional_climb}
    }%
    \caption{A directional climb. An iteration is done using the line
       $\Line$. After updating $\inapx$ to include the query $q$, the
       algorithm chooses a new extreme line $\LineA$ tangent to
       $\inapx$ in the direction of $v$.}
  \figlab{directional:climb}
\end{figure}

\begin{lemma}
    \lemlab{dir:climb}
    A directional climb requires $O( \log n)$ oracle queries.
\end{lemma}
\begin{proof}
    Consider the tangent to $\inapx$ in the direction of $v$. At each
    iteration, we claim the number of points in this halfplane is
    reduced by a factor of $1/3$. Indeed, if the query (i.e.,
    centerpoint) is outside $\body$ then at least a third of these
    points got classified as being outside. Alternatively, the tangent
    halfplanes moves in the direction of $v$, since the query point is
    inside $\body$. But then the new halfspace contains at most $2/3$
    fraction of the previous point set --- again, by the centerpoint
    property.
\end{proof}

\subsubsection{Line cleaning}

\begin{figure}[t]
    \hfill%
    \includegraphics[page=1]%
    {figs/pocket}%
    \hfill%
    \includegraphics[page=2]%
    {figs/pocket}%
    \hfill\phantom{}%
    \caption{Unclassified points and their pockets.}
    \figlab{pockets}%
\end{figure}

A \emphi{pocket} is a connected region of
$\CHX{\USet \cup \inapx} \setminus \inapx$, see \figref{pockets}.  For
the set $\PS$ of input points, consider the set of all lines
\begin{equation}
    \LinesX{\PS}%
    =%
    \Set{\mathrm{line}(\pnt, \pntB)}{\pnt,\pntB \in \PS}%
    \eqlab{lines:x}%
\end{equation}
they span.

Let $\Line$ be a line that splits a pocket $\Pocket$ into two regions,
and furthermore, it intersects $\inapx$. Let
$\Interval = \Line \cap \Pocket$, and consider all the intersection
points of interest along $\Interval$ in this pocket. That is,
\begin{equation*}
    \IPSetZ{\Pocket}{\Line}{\PS}%
    =%
    \Interval \cap \LinesX{\PS}%
    =%
    \Set{\bigl. (\Pocket \cap \Line) \cap \LineA }{
       \LineA \in \LinesX{\PS}}.
\end{equation*}
In words, we take all the pairs of points of $\PS$ (each such pair
induces a line) and we compute the intersection points of these lines
with the interval $\Interval$ of interest.  Ordering the points of
this set along $\Line$, a prefix of them is in $\body$, while the
corresponding suffix are all outside $\body$. One can easily compute
this prefix/suffix by doing a binary search, using the separation
oracle for $\body$ --- see the lemma below for details. Each answer
received from the oracle is used to update the point set, using
\ExpandOp{} or \RemoveOp{} operations, as described in
\secref{operations}. We refer to this operation along $\Line$ as
\emphi{cleaning} the line $\Line$. See \figref{clean}.

\begin{figure}[h]
    \centerline{%
       \hfill%
       \begin{minipage}{0.3\linewidth}
           \includegraphics[page=1,width=0.99\linewidth]{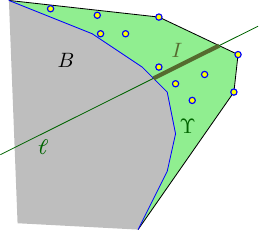}%
       \end{minipage}
       \hfill%
       \begin{minipage}{0.3\linewidth}
           \includegraphics[page=3,width=0.99\linewidth]{figs/ps_2}%
       \end{minipage}
       \hfill%
       \begin{minipage}{0.3\linewidth}
           \includegraphics[page=4,width=0.92\linewidth]{figs/ps_2}\\%
           \includegraphics[page=5,width=0.92\linewidth]{figs/ps_2}%
       \end{minipage}
       \hfill\phantom{}%
    } \vspace*{-0.25cm}%
    \caption{Line cleaning. All the intersection points of interest
       along $\Line$ are classified. The binary search results in the
       oracle returning a line $\LineA$ that separates the points
       outside from the points inside. }
    \figlab{clean}%
\end{figure}

\begin{lemma}
    Given a pocket $\Pocket$, and a splitting line $\Line$, one can
    clean the line $\Line$ --- that is, classify all the points of
    $\IPSet = \IPSetZ{\Pocket}{\Line}{\PS}$ using
    $O\pth{ \log n \bigr.}$ oracle queries.  By the end of this
    process, $\Pocket$ is replaced by two pockets, $\Pocket_1$ and
    $\Pocket_2$ that do not intersect $\Line$. The pockets $\Pocket_1$
    or $\Pocket_2$ may be empty sets.
\end{lemma}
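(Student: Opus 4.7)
The plan is to exploit the convexity of $\body$ to reduce cleaning of $\Line$ to two one-dimensional binary searches using the separation oracle. First I would observe that because $\Line$ passes through the interior of $\inapx\subseteq\body$, the set $\Interval=\Line\cap\Pocket$ decomposes into at most two sub-segments $\Interval_1,\Interval_2$, one on each side of $\inapx$, each having one endpoint on $\BX{\inapx}$ and the other on $\BX{\CHX{\USet\cup\inapx}}$. Since $\body$ is convex, $\body\cap\Line$ is an interval on $\Line$, so along each $\Interval_j$ the points of $\IPSet\cap\Interval_j$, ordered by distance from $\inapx$, form an in-body prefix followed by an out-of-body suffix. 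Moreover, every line of $\LinesX{\PS}$ contributes at most one point to $\IPSet$, giving $\cardin{\IPSet}\leq\binom{n}{2}=O(n^2)$.

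Next, for each sub-segment $\Interval_j$ I would binary-search for the prefix/suffix split using the separation oracle. At each step, pick the median intersection point $m$ among the still-undetermined points on $\Interval_j$ and query it. If $m\in\body$, invoke \ExpandOp{}$(m)$, which extends $\inapx$ to contain the sub-chord of $\Line$ from $\BX{\inapx}$ through $m$; by convexity all closer points of $\IPSet\cap\Interval_j$ then lie in the updated $\inapx$, so I recurse on the outer half. If $m\notin\body$, the oracle returns a separating line $\LineA$ and I invoke \RemoveOp{}$(\LineA)$; by convexity, all farther points of $\IPSet\cap\Interval_j$ lie outside $\body$, so I recurse on the inner half. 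Each search halts in $O(\log\cardin{\IPSet})=O(\log n)$ oracle queries, totalling $O(\log n)$.

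Finally, to establish the pocket split, let $m_j^*$ denote the farthest in-body point found on $\Interval_j$; the corresponding \ExpandOp{} call absorbs the sub-chord from $\BX{\inapx}$ through $m_j^*$ into $\inapx$, while the \RemoveOp{} calls made during the search remove every $\PS$ point lying on $\Interval_j$ outside $\body$ from $\USet$. Consequently, after cleaning, $\Line$ meets $\CHX{\USet\cup\inapx}$ only inside the enlarged $\inapx$, so the residual portion of $\Pocket$ consists of at most two connected components $\Pocket_1,\Pocket_2$ on the two sides of $\Line$, either of which may be empty. I expect the main technical obstacle to be precisely this last bookkeeping: verifying that once the binary search terminates on both sides, no strip of $\CHX{\USet\cup\inapx}$ along $\Line$ survives outside the expanded $\inapx$. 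This reduces to checking that the farthest in-body and innermost out-of-body intersection points on each $\Interval_j$ are either absorbed into $\inapx$ via \ExpandOp{} or disposed of via \RemoveOp{}, after which the $O(\log n)$ query bound from the preceding paragraph immediately yields the claim.
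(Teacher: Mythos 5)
Your binary-search machinery matches the paper's: order the points of $\IPSet$ along $\Line$, observe that convexity of $\body$ gives an in/out prefix/suffix structure, bound $\cardin{\IPSet}=O(n^2)$, and resolve the split point in $O(\log n)$ oracle queries while feeding each answer into \ExpandOp{} or \RemoveOp{}. That part is sound (modulo the harmless overcaution of treating $\Interval$ as possibly two sub-segments; in the paper's setup $\Line\cap\Pocket$ is a single segment with one endpoint on $\partial\inapx$, so one search suffices).

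The gap is exactly the step you flag as ``the main technical obstacle'' and then leave unresolved. Proving that the cleaned pocket really splits --- i.e.\ that no portion of $\CHX{\USet\cup\inapx}$ along $\Line$ survives outside the enlarged inner approximation --- is the actual content of the lemma, and reducing it to ``the farthest in-body and innermost out-of-body intersection points get handled'' does not close it. The paper's argument is a short contradiction you would still need to supply: suppose after cleaning there remain unclassified points $\pnt,\pntB$ on opposite sides of $\Line$ with $\pnt\pntB\cap\inapx'=\emptyset$, and let $\pntC=\pnt\pntB\cap\Line$. Then $\pntC\in\IPSet$, so it was classified. If $\pntC$ was classified inside, it is in $\inapx'$, contradicting $\pnt\pntB\cap\inapx'=\emptyset$. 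If $\pntC$ was classified outside, the separating open halfplane $h^-$ (disjoint from $\body$) contains $\pntC$; since $\pntC$ lies on segment $\pnt\pntB$, at least one of $\pnt,\pntB$ also lies in $h^-$ and was therefore removed, contradicting that both are still unclassified. You need this dichotomy over \emph{all} of $\IPSet$, not just the two extremal points. There is also a confusion worth fixing: $\IPSet$ consists of intersection points with $\Line$, not input points of $\PS$, so \RemoveOp{} never ``removes $\PS$ points lying on $\Interval_j$''; what matters is the separating halfplanes returned by the oracle, which is precisely what the contradiction argument leans on.
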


\begin{proof}
    First, we describe the line cleaning procedure in more detail.
    The algorithm maintains, in the beginning of the $i$\th iteration,
    an interval $\IntervalA_i$ on the line $\Line$ containing all the
    points of $\IPSet$ that are not classified yet. Initially,
    $\IntervalA_1 = \Pocket \cap \Line$. One endpoint, say
    $\pnt_i \in \IntervalA_i$ is on $\partial \inapx_i$, and the
    other, say $\pnt_i'$, is outside $\body$, where $\inapx_i$ is the
    inner approximation in the beginning of the $i$\th iteration.

    In the $i$\th iteration, the algorithm computes the set
    $\IPSet_i = \IntervalA_i \cap \IPSet$. If this set is empty, then
    the algorithm is done. Otherwise, it picks the median point
    $\pntC_i$, in the order along $\Line$ in $\IPSet_i$, and queries
    the oracle with $\pntC_i$. There are two possibilities: %
    \medskip%
    \begin{compactenumA}
        \item If $\pntC_i \in \body$ then the algorithm sets
        $\IPSet_{i+1} = \IPSet_i \setminus [\pnt_i, \pntC_i)$, and
        $\IntervalA_{i+1} = \IntervalA_i \setminus [\pnt_i,\pntC_i)$.

        \smallskip%
        \item If $\pntC_i \notin \body$, then the oracle provided a
        closed halfspace $h^+$ that contains $\body$. Let $h^-$ be the
        complement open halfspace that contains $\pntC_i$. The
        algorithm sets $\IPSet_{i+1} = \IPSet_{i} \setminus h^-$ and
        $\IntervalA_{i+1} = \IntervalA_i \cap h^+$.
    \end{compactenumA}
    \medskip%
    This resolves the status of at least half the points in
    $\IPSet_i$, and shrinks the active interval. The algorithm repeats
    this till $\IPSet_i$ becomes empty.  Since
    $\cardin{\IPSet} = O(n^2)$, this readily implies that the
    algorithm performs $O( \log n)$ iterations.

    We now argue that the pocket is split --- that is, $\Pocket_1$ and
    $\Pocket_2$ do not intersect $\Line$. Assume that it is false, and
    let $\inapx'$ be the inner approximation after this procedure is
    done. Let $L$ (resp.~$R$) be the points of
    $\USet_\Pocket = \USet \cap \Pocket$
    that are unclassified on one side (resp.~other side) of
    $\Line$.  If the pocket is not split, then there are two points
    $\pnt \in L$ and $\pntB \in R$, such that
    $\pnt\pntB \cap \inapx' = \emptyset$, and
    $\partial \CHX{\inapx' \cup L \cup R}$ intersects $\Line$ at the
    point $\pntC = \pnt \pntB \cap \Line$.  However, by construction,
    the point $\pntC \in \IPSet$. As such, the point $\pntC$ is now
    classified as either being inside or outside $\body$, as it is a
    point in $\IPSet$. If $\pntC$ is outside, then the halfplane $h^-$
    that classified it as such, must had classified either $\pnt$ or
    $\pntB$ as being outside $\body$, which is a contradiction. The
    other option, is that $\pntC$ is classified as being inside, but
    then, it is in $\inapx'$, which is again a contradiction, as it
    implies that $\inapx'$ intersects the segment $\pnt \pntB$.
\end{proof}

\subsubsection{Vertical pocket splitting}
\seclab{pocket:split}%

Consider a pocket $\Pocket$ such that all of its points lie
vertically above $\inapx$, and the bottom of $\Pocket$ is part of a
segment of $\partial \inapx$, see \figref{v:pocket}. Such a pocket can
be viewed as being defined by an interval on the $x$-axis
corresponding to its two vertical walls.  Let $\USet_\Pocket$ be the
set of unclassified points in this pocket. In each iteration, the
algorithm computes the centerpoint $\query$ of $\USet_\Pocket$, and queries
the separation oracle for the label of $\query$. As long as the query
point is outside $\body$, the algorithm performs a \RemoveOp{} operation
using the returned separating line.

When the oracle returns that the query point $\query$ is inside
$\body$, the algorithm computes the vertical line $\Line_\query$
through $\query$.  The algorithm now performs line cleaning on this
vertical line.  This operation splits $\Pocket$ into two sub-pockets.
Crucially, since $\query$ was a centerpoint for $\USet_\Pocket$,
the number of points in each of the two sub-pockets is at most
$2\cardin{\USet_\Pocket}/3$.  See \figref{v:pocket}.

\begin{figure}[t]
    \phantom{}\hfill%
    \includegraphics[page=1,width=0.4\linewidth]%
    {figs/optimal-2d}%
    \hfill%
    \includegraphics[page=2,width=0.4\linewidth]%
    {figs/optimal-2d}%
    \hfill
    \caption{Vertical pocket splitting. In this example, the
       centerpoint $\query$ lies inside $\body$. Thus we construct the
       vertical line $\ell_\query$ through $\query$ (left). Next, we
       perform a line cleaning operation on $\ell_\query$. This splits
       the original pocket $\Pocket$ into two new pockets $\Pocket_1$,
       $\Pocket_2$, while classifying some points in the process
       (right). Observe that the unclassified points in $\Pocket_1$
       and $\Pocket_2$ are no longer mutually visible to each other
       after the line cleaning operation.}%

    \figlab{v:pocket}%
\end{figure}

\subsection{The algorithm}
\seclab{improved:alg}

The algorithm starts in the same way as the greedy algorithm of
\secref{round}, which we restate for convenience.
Recall that $\USet$ is the set of unclassified points (initially
$\USet = \PS$). At all times, the algorithm maintains
the inner approximation $\inapx \subseteq \body$.
At the beginning, $\inapx$ is uninitialized. The algorithm computes
the centerpoint $\query$ of $\USet$ and queries the
oracle for the label of $\query$. While $\query$ is outside,
we classify the appropriate set of points as outside (according to
the separating hyperplane returned from the oracle), update
$\USet$, and repeat. As soon as the computed centerpoint $\query$
lies in $\body$, we set $\inapx = \query$ and continue to the next stage.

Next, the algorithm performs two directional climbs (\secref{dir:climb})
in the positive and negative directions of the $x$-axis. This uses
$O( \log n)$ oracle queries by \lemref{dir:climb} and results in a
computed segment $\pntD \pntD' \subseteq \body$, where $\pntD, \pntD'$
are vertices of the inner approximation $\inapx$, such that all
unclassified points lie in the strip induced by the vertical line through
$\pntD$ and the vertical line through $\pntD'$, see also \figref{v:pocket}.

The algorithm now handles all points of $\USet$ lying above
$\pntD \pntD'$ (the points below the line are handled in a
similar fashion). Let $\inapx^+$ be the set of vertices of $\inapx$ in
the top chain. Note that $\inapx^+$ consists of at most $O(\log n)$
vertices.  For each vertex $v$ of $\inapx^+$, the algorithm performs
line cleaning on the vertical line going through $v$.  This results in
$O(\log n)$ vertical pockets, where all vertical lines passing
originally through $\inapx^+$ are now clean.

The algorithm repeatedly picks a vertical pocket. If the pocket
contains less than three points the algorithm queries the oracle for the
classification of these points, and continues to the next pocket.
Otherwise, the algorithm performs a vertical pocket splitting
operation, as described in \secref{pocket:split}. The algorithm stops
when there are no longer any pockets (i.e., all the points above the
segment $\pntD \pntD'$ are classified). The algorithm then runs the
symmetric procedure below this segment $\pntD \pntD'$.

\subsection{Analysis}

\begin{figure}[t]
    \centerline{%
       \hfill%
       \begin{minipage}{0.3\linewidth}
           \centering%
           \includegraphics[page=1,width=0.99\linewidth]%
           {figs/inner_fence}%
       \end{minipage}
       \hfill%
       \begin{minipage}{0.4\linewidth}
          \centering
          \includegraphics[page=2,width=0.99\linewidth]%
          {figs/inner_fence}\\%
      \end{minipage}
       \hfill%
       \phantom{}
    }
    \caption{Constructing the polygon $\PolygonA$ from an inner
    fence $\Polygon$.}%
    \figlab{inner:fence}%
\end{figure}

\begin{lemma}
    \lemlab{separate:but}%
    Given a point set $\PS$, and a convex polygon $\Polygon$ that is
    an inner fence for $\PS \cap \body$; that is,
    $\PS \cap \body \subseteq \Polygon \subseteq \body$. Then, there
    is a convex polygon $\PolygonA$, such that
    \begin{compactenumA}
        \item
        $\PS \cap \body \subseteq \PolygonA \subseteq \Polygon$.
        \item $\nVX{\PolygonA} \leq 2\nVX{\Polygon}$ (where $\nVX{Q}$
        denotes the number of vertices of the polygon $Q$).
        \item Every edge of $\PolygonA$ lies on a line of
        $\LinesX{\PS}$, see \Eqref{lines:x}.
    \end{compactenumA}
\end{lemma}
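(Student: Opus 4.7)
The plan is to construct $\PolygonA$ as the intersection of carefully chosen supporting halfplanes of $K := \CHX{\PS \cap \Polygon}$, using at most two halfplanes per edge of $\Polygon$.

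First I would observe that since $\Polygon$ is an inner fence, $\PS \cap \Polygon = \PS \cap \body$, hence $K \subseteq \Polygon$ and $\PS \cap \body \subseteq K$. Every edge of $K$ is spanned by two of its vertices, which are points of $\PS$; so every edge of $K$ lies on a line of $\LinesX{\PS}$, and any convex polygon whose edges each lie on a line through an edge of $K$ will automatically satisfy (C).

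Next, for each edge $e_i$ of $\Polygon$ with outward unit normal $u_i$, let $p_i$ be a vertex of $K$ maximizing $\langle x, u_i \rangle$ over $x \in K$, and let $f_i^-, f_i^+$ be the two edges of $K$ incident to $p_i$, with outward unit normals $v_i^-, v_i^+$. Let $H_i^{\pm}$ be the closed halfplane bounded by the line through $f_i^{\pm}$ that contains $K$, and set
\[
    \PolygonA \;=\; \bigcap_{i=1}^{k}\bigl(H_i^- \cap H_i^+\bigr),
    \qquad k := \nVX{\Polygon}.
\]
Since $\PolygonA$ is defined by at most $2k$ halfplanes, it is a convex polygon with at most $2k$ edges and vertices, yielding (B); each edge of $\PolygonA$ lies on some $f_i^{\pm}$, yielding (C); and each halfplane in the intersection contains $K$, so $\PolygonA \supseteq K \supseteq \PS \cap \body$, which is half of (A). Degenerate configurations (e.g.\ $\PS \cap \body$ empty, or $K$ a single point or a segment) I would handle separately by letting $\PolygonA$ be an appropriate degenerate convex set.

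The main obstacle is proving $\PolygonA \subseteq \Polygon$, which reduces to showing that each wedge $W_i := H_i^- \cap H_i^+$ lies in the halfplane $H_{e_i}$ bounded by the line through $e_i$ that contains $\Polygon$. The key geometric observation is that because $p_i$ maximizes $\langle \cdot, u_i \rangle$ over $K$, the direction $u_i$ lies in the normal cone of $K$ at $p_i$, which is exactly the cone of nonnegative combinations of $v_i^-$ and $v_i^+$. Writing $u_i = \alpha v_i^- + \beta v_i^+$ with $\alpha, \beta \geq 0$, for any $x \in W_i$ we have $\langle x - p_i, v_i^{\pm} \rangle \leq 0$, and taking the same nonnegative combination gives $\langle x - p_i, u_i \rangle \leq 0$; combined with $p_i \in K \subseteq \Polygon \subseteq H_{e_i}$, this forces $x \in H_{e_i}$. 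Intersecting over all $i$ yields $\PolygonA \subseteq \bigcap_i H_{e_i} = \Polygon$. This conic-combination step is the heart of the argument, and is precisely what makes two edges of $K$ per edge of $\Polygon$ sufficient.
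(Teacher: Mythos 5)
Your argument is correct, and it takes a genuinely different route from the paper's proof. The paper proceeds locally: it slides edges of $\Polygon$ inward until they touch a point of $\PS$, splits any edge containing a single point of $\PS$ into two, and then iteratively deletes or nudges the remaining non-$\PS$ vertices until every edge passes through two points of $\PS$; properties (A)--(C) are read off from the modification steps. You instead build $\PolygonA$ globally as an intersection of supporting halfplanes of $K = \CHX{\PS \cap \body}$, choosing for each edge $e_i$ of $\Polygon$ the two supporting lines of $K$ at the vertex $p_i$ of $K$ that is extremal in the outward-normal direction of $e_i$. The normal-cone decomposition $u_i = \alpha v_i^- + \beta v_i^+$ with $\alpha,\beta \ge 0$ is the new ingredient relative to the paper, and it is exactly what makes two supporting lines of $K$ per edge of $\Polygon$ suffice to stay inside $\Polygon$. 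Your route has the advantage that (B) is immediate from counting halfplanes and (A) is airtight by construction (each halfplane contains $K$, and the conic argument gives $\PolygonA \subseteq \Polygon$), whereas in the paper's local approach one must check that the edge and vertex moves never pass a point of $\PS \cap \body$ and that the moves at most double the edge count. Both proofs leave the degenerate cases (when $K$ is empty, a point, or a segment) to the reader.
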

\begin{proof}
    Any edge $\edge$ of $\Polygon$ that does not contain any point of
    $\PS$ on it can be moved parallel to itself into the polygon
    until it passes through a point of $\PS$. Next, split the edges
    that contain only a single point of $\PS$, by adding this point
    as a vertex.

    Consider a vertex $v$ of the polygon that is not in $\PS$ ---
    and consider the two adjacent vertices $u,w$, which must be in
    $\PS$. If $\triangle uvw \setminus uw$ contains no point of $\PS$,
    then we delete $v$ from the polygon and replace it by the edge
    $uw$. Otherwise, move $v$ towards $u$, until the edge $vw$ hits a
    point of $\PS$. Next, move $v$ towards $w$, till the edge $vu$
    hits a point of $\PS$. See \figref{inner:fence}.

    Repeating this process so that all edges contain two points of
    $\PS$ means that properties (A) and (C) are met.
    Additionally, the number of edges of the new polygon $\PolygonA$
    is at most twice the number of edges of $\Polygon$,
    implying property (B).
\end{proof}

Consider the inner and outer fences $\Fin$ and
$\Fout$ of $\PS$ in relation to $\body$. Applying
\lemref{separate:but} to $\Fin$, results in a convex polygon
$\PolygonA$ that separates $\PS \cap \body$ from
$\partial \body$, that has at most $2 \nVX{\Fin}$ vertices. Let
$\VSet$ be the set of all vertices of the polygons $\Fin, \Fout$
and $\PolygonA$.

The following two Lemmas state that if a vertical pocket $\Pocket$
containing no vertex of $\VSet$, then all points in $\Pocket$ can be
classified using $O(\log n)$ oracle queries. Finally, we analyze the
scenario when $\Pocket$ contains at least one vertex of $\VSet$.

\begin{lemma}
  \lemlab{no:vertex:outside}
  Let $\Pocket$ be a vertical pocket created during the algorithm with
  current inner approximation $\inapx$. Suppose that
  $\VSet \cap \Pocket = \varnothing$, then all points in
  $\PS \cap \Pocket$ are outside $\body$.
\end{lemma}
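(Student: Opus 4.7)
The plan is to argue by contradiction. Suppose some $\pnt \in \PS \cap \Pocket$ satisfies $\pnt \in \body$. Then $\pnt \in \PolygonA$ by \lemref{separate:but}(A), since $\PS \cap \body \subseteq \PolygonA$. Because $\inapx$ is disjoint from $\Pocket$, while every classified-inside $\PS$-point already lies in $\inapx$, we have $\pnt \in \USet$; in particular $\pnt$ lies strictly above the top chain of $\inapx$ inside the vertical strip $[\Line_L,\Line_R]$ of $\Pocket$, where $\Line_L,\Line_R$ are its two vertical walls.

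The first main step is a \emph{wall impermeability} claim: no edge of $\PolygonA$ crosses the interior of a wall $\Line \in \brc{\Line_L,\Line_R}$ of $\Pocket$. Indeed, by \lemref{separate:but}(C), every edge of $\PolygonA$ lies on some line $\ell \in \LinesX{\PS}$, so any crossing $\pntC = \ell \cap \Line$ of such an edge with $\Line$ belongs to the set $\IPSet$ of $\LinesX{\PS}$-intersection points along $\Line$ that were processed when $\Line$ was cleaned. As $\pntC \in \PolygonA \subseteq \body$, the cleaning procedure classified $\pntC$ as inside $\body$ and absorbed it into $\inapx$. But the interior of the wall of $\Pocket$ lies strictly outside $\inapx$, a contradiction.

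The second main step combines wall impermeability with the convexity of $\inapx$. Let $h(x)$ be the $y$-coordinate of the top chain of $\inapx$; since $\inapx$ is convex, $h$ is concave on $[\Line_L,\Line_R]$. Consider the edge $\edge^+$ of $\PolygonA$ realising its upper envelope at $x = \pnt_x$; since $\pnt \in \PolygonA$, $\edge^+(\pnt_x) \geq \pnt_y > h(\pnt_x)$. The endpoints of $\edge^+$ are vertices of $\PolygonA$, so by the hypothesis $\VSet \cap \Pocket = \varnothing$ they are not in $\Pocket$. Analysing each endpoint --- outside the strip, or inside $\inapx$ within the strip --- wall impermeability forces any crossing of $\edge^+$ with a wall to occur inside $\inapx$, i.e., with $y$-coordinate at most $h(\cdot)$ at its $x$; linearity of $\edge^+$ combined with concavity of $h$ then gives $\edge^+(\pnt_x) \leq h(\pnt_x)$, contradicting the lower bound above.

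The main obstacle is the remaining sub-case where some vertex of $\PolygonA$ lies above $\CHX{\USet \cup \inapx}$ within the strip --- here the interpolation above breaks. I plan to preempt this by first replacing $\PolygonA$ with the refined convex polygon $\PolygonA^* := \PolygonA \cap \CHX{\USet \cup \inapx}$. Since $\PS \cap \body \subseteq \USet \cup \inapx$, this refinement still contains $\PS \cap \body$ (and so $\pnt$), while every new vertex lies on $\partial \CHX{\USet \cup \inapx}$ --- that is, on the top chain forming $\partial \Pocket$, not in the open interior of $\Pocket$. Wall impermeability passes to $\PolygonA^*$ (type ``original $\PolygonA$'' edges inherit it, while ``top chain'' edges of $\PolygonA^*$ run along $\partial \CHX{\USet \cup \inapx}$ and do not cross walls), so the concavity argument of the preceding paragraph applies verbatim to $\PolygonA^*$, yielding the desired contradiction.
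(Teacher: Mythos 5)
Your proof follows the same basic approach as the paper's: every edge of $\PolygonA$ lies on a line of $\LinesX{\PS}$, so any crossing with a cleaned wall has already been classified and, being in $\PolygonA\subseteq\body$, must lie inside $\inapx$; concavity of the top chain of $\inapx$ then pushes the top chain of $\PolygonA$ below all unclassified points of the pocket. Your ``wall impermeability'' claim is a reformulation of the paper's observation that the intersection points $\pntC_L,\pntC_R$ of the top edge of $\PolygonA$ with the two walls lie in $\inapx$, and the by-contradiction framing versus the paper's direct argument is only cosmetic.

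However, your last paragraph has a genuine gap. You replace $\PolygonA$ with $\PolygonA^* = \PolygonA\cap\CHX{\USet\cup\inapx}$ to kill vertices of $\PolygonA$ that sit above the pocket, and then assert the concavity argument ``applies verbatim'' to $\PolygonA^*$. It does not. Your concavity step needs the endpoints of the top edge $\edge^+$ at $x=\pnt_x$ to lie at height at most $h(\cdot)$ (i.e.\ in $\inapx$ or at a wall crossing that is in $\inapx$), and you justify this by noting no vertex of $\PolygonA^*$ is in the \emph{open interior} of $\Pocket$. But that is the wrong condition: the new vertices you introduce lie on $\partial\CHX{\USet\cup\inapx}$, i.e.\ precisely on the \emph{top} boundary of $\Pocket$, which is strictly above $h$. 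If $\edge^+$ is such a ``top-chain'' edge of $\PolygonA^*$ (or ends at such a vertex), then $\edge^+(\pnt_x)$ equals (or is controlled by) the top of the pocket at $\pnt_x$, which is $> h(\pnt_x)$ and also $\geq \pnt_y$, so no contradiction results. To close the gap you need to argue that the ``poking out'' case cannot occur in the first place --- that the top chain of $\PolygonA$ across the strip really is the single segment $\pntC_L\pntC_R$, both endpoints in $\inapx$. That is exactly the step the paper asserts (``there is a single edge of $\PolygonA$ which intersects the two vertical lines bounding $\Pocket$'') and which your cut-off construction does not establish.
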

\begin{proof}
  Assume without loss of generality that $\Pocket$ lies above $\inapx$.
  Let $\USet = \PS \cap \Pocket$ be the set of unclassified points in
  the pocket. Note that $\Pocket$ is bounded by two vertical lines that
  were previously cleaned.

  By assumption, $\Pocket$ does not contain any vertex of $\PolygonA$.
  It follows that there is a single edge of $\PolygonA$ that
  intersects the two vertical lines bounding $\Pocket$. Let
  $\pntC_L, \pntC_R$ be these two intersection points, one lying on
  each line. By definition, we have $\pntC_L, \pntC_R \in \body$.
  Furthermore, $\pntC_L, \pntC_R$ lie on lines of $\LinesX{\PS}$ by
  construction of $\PolygonA$. Since both vertical lines bounding
  $\Pocket$ were cleaned, it must be that the segment
  $\pntC_L \pntC_R \subseteq \inapx$. Since all points of $\USet$ are
  above $\inapx$, this implies that $\USet$ lies above
  $\pntC_L \pntC_R$ and thus above $\PolygonA$. Namely, all points of
  $\USet$ are outside $\body$.
\end{proof}

\begin{lemma}
  \lemlab{no:vertex:classify} Let $\Pocket$ be a vertical pocket with
  $\VSet \cap \Pocket = \varnothing$. Then during the vertical pocket
  splitting operation of \secref{pocket:split} applied to $\Pocket$,
  all oracle queries are outside $\body$. In particular, all points of
  $\PS \cap \Pocket$ are classified after $O(\log n)$ oracle queries.
\end{lemma}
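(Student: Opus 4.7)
The plan is to show that every oracle query made during the vertical pocket splitting of $\Pocket$ is answered ``outside $\body$'', after which the centerpoint property implies that each such query removes a constant fraction of the surviving unclassified points, yielding the $O(\log n)$ bound.

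For the first claim, begin by applying \lemref{no:vertex:outside}: the set $\USet_\Pocket = \PS \cap \Pocket$ is disjoint from $\body$. Since $\Fout \cap \PS = \body \cap \PS$ by definition of the outer fence, $\USet_\Pocket$ is also disjoint from $\Fout$. Assume without loss of generality that $\Pocket$ sits vertically above a base segment of $\partial \inapx$ bounded by two vertical walls, and that $\USet_\Pocket \neq \varnothing$ (else there is nothing to do). Because $\VSet \cap \Pocket = \varnothing$, no vertex of $\Fout$ lies in $\Pocket$, while the base of $\Pocket$ lies inside $\inapx \subseteq \body \subseteq \Fout$. If $\partial \Fout$ did not cross $\Pocket$, then $\Pocket$ would be entirely contained in $\Fout$, contradicting $\USet_\Pocket \cap \Fout = \varnothing$. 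Hence $\partial \Fout$ enters and exits $\Pocket$ through its two vertical walls as a single edge $e_{\mathrm{out}}$ of $\Fout$. Let $\Line_{\mathrm{out}}$ be the line supporting $e_{\mathrm{out}}$ and $\Line_{\mathrm{out}}^+$ the open halfplane on the side disjoint from $\Fout$; then $\body \subseteq \Fout$ gives $\body \cap \Line_{\mathrm{out}}^+ = \varnothing$. Every point of $\USet_\Pocket$ is in $\Pocket$, above the base, and outside $\Fout$, which within $\Pocket$ forces it to lie in $\Line_{\mathrm{out}}^+$. Hence $\CHX{\USet_\Pocket} \subseteq \Line_{\mathrm{out}}^+$, so the centerpoint $\cpnt$ of $\USet_\Pocket$ lies in $\Line_{\mathrm{out}}^+$ and is therefore outside $\body$.

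With this in hand, the oracle's response at $\cpnt$ is a closed halfplane $h^+$ with $\cpnt \in h^+$ and $\intX{\body} \cap h^+ = \varnothing$. The centerpoint property gives $\cardin{h^+ \cap \USet_\Pocket} \geq \cardin{\USet_\Pocket}/3$, and the subsequent \RemoveOp{}$(h)$ classifies all those points as outside $\body$. Since the pocket splitting subroutine subdivides $\Pocket$ only when the queried centerpoint is reported inside $\body$, and by the preceding paragraph this never occurs here, every iteration simply shrinks $\cardin{\USet_\Pocket}$ by at least a factor of $1/3$, so $O(\log n)$ queries suffice. The main obstacle is the geometric step isolating the single edge $e_{\mathrm{out}}$: it requires combining the absence of $\Fout$-vertices in $\Pocket$ with $\inapx \subseteq \Fout$ and with $\USet_\Pocket \cap \Fout = \varnothing$, together with care for degenerate configurations (such as $e_{\mathrm{out}}$ coinciding with the base of $\Pocket$), none of which affects the final halfplane-separation argument.
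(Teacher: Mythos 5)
Your proof is correct and reaches the same conclusion, but by a genuinely more direct route than the paper's. The paper argues by contradiction: supposing some centerpoint query $\query$ lands inside $\body$, it invokes the centerpoint-in-a-triangle property (a Carath\'eodory-type fact, cf.\ \thmref{cpnt:many:simplices}) to produce a triangle of $\USet'$-points containing $\query$, deduces that two of these points are not mutually visible with respect to $\body$, and from this concludes that $\Fout$ must have a vertex inside $\Pocket$ --- contradicting $\VSet\cap\Pocket=\varnothing$. You instead argue directly: since the base of $\Pocket$ lies inside $\Fout$, the upper boundary of $\Fout$ restricted to the pocket's vertical strip is a concave piecewise-linear function, and absence of $\Fout$-vertices in $\Pocket$ means the part of $\partial\Fout$ cutting through $\Pocket$ is a single edge; all unclassified points of the pocket then lie strictly in the open halfplane on the far side of that edge's supporting line, and hence so does their centerpoint (being in their convex hull), so it is outside $\Fout\supseteq\body$. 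This sidesteps the Carath\'eodory step entirely and replaces the ``two invisible points give a vertex'' claim with the cleaner ``no vertex gives a single separating edge'' claim, which is what the paper's step implicitly relies on anyway (in contrapositive form). One small inaccuracy: you assert that $\partial\Fout$ enters and exits through the two \emph{vertical walls} specifically; it could also exit through the top boundary of the pocket, but this does not affect the argument --- what matters is that $\partial\Fout\cap\Pocket$ is contained in a single line, which your reasoning establishes. The $O(\log n)$ bound via the centerpoint property matches the paper.
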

\begin{proof}
  Let $\USet = \PS \cap \Pocket$. By \lemref{no:vertex:outside}, all
  points of $\USet$ lie outside $\body$. Assume that the first
  statement of the Lemma is false, and let $\USet' \subseteq \USet$ be
  the set of unclassified points such that $\query$ was the centerpoint
  for $\USet'$ and $\query \in \body$. Now $\query$ is inside a
  triangle induced by three points of $\USet'$. Namely, there are (at
  least) two points outside $\body$ in this pocket that are not
  mutually visible to each other with respect to $\body$. But this implies
  that $\Fout$ must have a vertex somewhere inside the vertical pocket
  $\Pocket$, which is a contradiction.

  Hence, all oracle queries made by the algorithm are outside $\body$.
  Each such query results in a constant reduction in the size of
  $\USet$, since the query point is a centerpoint of the unclassified
  points. It follows that after $O(\log\cardin{\USet}) = O(\log n)$
  queries, all points in $\Pocket$ are classified.
\end{proof}

\begin{theorem}
  \thmlab{improved:alg:2d}
  Let $\body$ be a convex body provided via a separation oracle, and
  let $\PS$ be a set of $n$ points in the plane. The improved
  classification algorithm performs
  \begin{math}
      O\pth{ \bigl[ 1 +\priceY{\PS}{\body}\bigr] \log^2 n}
  \end{math}
  oracle queries. The algorithm correctly identifies all points
  in $\PS \cap \body$ and $\PS \setminus \body$.
\end{theorem}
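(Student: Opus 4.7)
The plan is to separate the query cost into a setup phase and a vertical-pocket-splitting phase, and to charge the splits to vertices of an explicit witness set of size $O(\priceY{\PS}{\body})$.

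First, I would account for the setup. By \lemref{iterations:empty:inapx} the initial greedy rounds use $O(\log n)$ queries to make $\inapx$ non-empty; the two directional climbs along $\pm x$ add another $O(\log n)$ queries and leave the top chain $\inapx^+$ with $O(\log n)$ vertices; line-cleaning each of the $O(\log n)$ vertical lines through these vertices costs $O(\log n)$ queries per line, giving $O(\log^2 n)$ in total for the setup. A symmetric analysis handles the region below the segment $\pntD\pntD'$.

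Second, I would invoke \lemref{separate:but} on the inner fence $\Fin$ to obtain a polygon $\PolygonA$ with $\nVX{\PolygonA} \leq 2\nVX{\Fin}$ whose edges lie on $\LinesX{\PS}$, and take $\VSet$ to be the union of the vertices of $\Fin$, $\Fout$, and $\PolygonA$, so that $\cardin{\VSet} = O(\priceY{\PS}{\body})$. This is exactly the vertex set that \lemref{no:vertex:classify} refers to.

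Third, I would analyze the binary tree of pocket splits. A split happens precisely when a centerpoint query is reported inside $\body$; the ensuing line-cleaning on the vertical line through that query costs $O(\log n)$ queries, and by the centerpoint property each of the two resulting sub-pockets contains at most $(2/3)\cardin{\USet_\Pocket}$ unclassified points, so the split tree has depth $O(\log n)$. Within a single pocket, each consecutive ``outside'' query triggers a \RemoveOp{} eliminating at least a $1/3$ fraction of its points, so at most $O(\log n)$ such removes occur before either a split happens or the pocket is emptied. Hence processing any one pocket costs $O(\log n)$ queries.

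The crux is to bound the number of pockets ever created by $O(\priceY{\PS}{\body} \log n)$. By \lemref{no:vertex:classify}, a pocket $\Pocket$ with $\VSet \cap \Pocket = \varnothing$ yields only ``outside'' queries, never splits, and is finished in $O(\log n)$ queries. Thus every internal node of the split tree contains at least one vertex of $\VSet$. Since a vertical splitting line places any vertex of $\VSet$ into at most one of the two sub-pockets, each such vertex follows a single root-to-leaf path of length $O(\log n)$, and so is present in $O(\log n)$ pockets. Summing over $\VSet$, the number of pockets is $O(\cardin{\VSet}\log n) = O(\priceY{\PS}{\body} \log n)$. Multiplying by the $O(\log n)$ per-pocket cost and adding the setup yields the claimed $O\bigl((1+\priceY{\PS}{\body})\log^2 n\bigr)$ bound; correctness follows from the correctness of the \ExpandOp{}, \RemoveOp{}, and line-cleaning primitives already established.

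The main obstacle will be the charging step: one must carefully check that a vertex of $\VSet$ really descends into at most one sub-pocket at each split (handling degenerate placements of vertices on the splitting line), and that \lemref{no:vertex:classify} applies uniformly to every later-generation pocket produced during the recursion, not just to those formed during the initial line-cleaning of $\inapx^+$.
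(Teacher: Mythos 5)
Your proof is correct and takes essentially the same approach as the paper: both account for the setup by $O(\log^2 n)$, both invoke \lemref{separate:but} to build the witness set $\VSet$, and both use \lemref{no:vertex:classify} to argue that every pocket that gets split must contain a vertex of $\VSet$, with the centerpoint-driven $2/3$-decrease giving an $O(\log n)$ depth bound on how many times any one vertex of $\VSet$ can recur. You phrase the charging more explicitly as a full binary forest (internal nodes contain $\VSet$ vertices; leaves are bounded by internal nodes plus the $O(\log n)$ initial pockets), whereas the paper simply charges each pocket containing a $\VSet$ vertex to one of its vertices and deduces the count of vertex-free pockets; these are the same accounting.
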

\begin{proof}
    The initial stage involves two directional climbs and $O( \log n)$
    line cleaning operations, and thus requires $O( \log^2 n)$
    queries.

    A vertical pocket that contains a vertex of $\VSet$ is charged
    arbitrarily to any such vertex. Since the number of points in a
    pocket reduces by at least a factor of $1/3$ during a split
    operation, this means that a vertex of $\VSet$ is charged at most
    $O(\log n)$ times. Each time a vertex gets charged, it has to pay
    for the $O(\log n)$ oracle queries that were issued in the
    process of creating this pocket, and later on for the price of
    splitting it. Thus, we only have to account for queries performed
    in vertical pockets that do not contain a vertex of $\VSet$.
    By \lemref{no:vertex:classify}, such a pocket will have all
    points inside it classified after $O(\log n)$ oracle queries.

    However, the above implies that there are at most
    $O([1+\priceY{\PS}{\body}] \log n)$ vertical pockets with no
    vertex of $\VSet$ throughout the algorithm execution.
    Since handling such a pocket requires $O( \log n)$ queries, the
    bound follows.
\end{proof}

\section{On emptiness variants in two dimensions}
\seclab{emptiness:2d}

Here, we present two instance-optimal approximation algorithms for
solving the following two variants:
\begin{compactenumA}
    \smallskip%
    \item Emptiness: Find a point $\pnt \in \PS \cap \body$, or
    using as few queries as possible, verify that
    $\PS \cap \body = \varnothing$.

    \smallskip%
    \item Reverse emptiness: Find a point
    $\pnt \in \PS \setminus (\PS \cap \body)$, or using as few
    queries as possible, verify that $\PS \cap \body = \PS$.
\end{compactenumA}
\smallskip%
For both variants we present $O( \log n )$ approximation
(the algorithm for emptiness is randomized), improving
over the general approximation algorithm of \secref{improved:2d}
which provides a $O( \log^2 n)$ approximation.

\subsection{Emptiness: Are all the points outside?}

Here we consider the problem of verifying that all the given points
are outside the convex body.

\myparagraph{Algorithm.}
The algorithm is a slight modification of the algorithm of
\secref{round}. Recall the two operations \ExpandOp{} and \RemoveOp{}
that the algorithm will need (\secref{operations}).

Initially, let $\USet = \PS$ be the set of unclassified points.
At every round, if the inner approximation $\inapx$ is empty, then the
algorithm sets $\USet^+ = \USet$. Otherwise, the algorithm picks a
line $\Line$ that is tangent to $\inapx$ with the largest number of
points of $\USet$ on the other side of $\Line$ than $\inapx$. Let
$\Line^-$ and $\Line^+$ be the two closed halfspace bounded by
$\Line$, where $\inapx \subseteq \Line^-$. The algorithm computes the
point set $\USet^+ = \USet \cap \Line^+$.  We have two cases:
\medskip%
\begin{compactenumA}[label=\Alph*.]
    \item Suppose $\cardin{\USet^+}$ is of constant size. The
    algorithm queries the oracle for the status of each of these
    points. If there exists a point $\pnt \in \USet^+$ which lies in $\body$,
    then we return $\pnt$ as the witness. Otherwise, for each
    $\pnt \in \USet^+$ we receive a separating line $\Line_\pnt$ from the
    oracle, and the algorithm executes $\RemoveOp{}(\Line_\pnt)$.

    \medskip%
    \item Otherwise, $\cardin{\USet+}$ does not have constant size.
    The algorithm chooses a point $\query \in \USet^+$ at random
    and queries the oracle using $\query$. If $\query \in \body$,
    we return $\query$ as the witness. Otherwise, we perform a
    \RemoveOp{} operation on the separating line returned.

    Next, we compute the centerpoint $\query$ of $\USet^+$ and
    query the oracle for the label of $\query$. Depending on the
    label of $\query$, the algorithm either executes
    $\ExpandOp{}(\query)$ or $\RemoveOp{}(\ell)$, where $\ell$
    is the separating line in the instance that $\query \not\in \body$.
\end{compactenumA}

\myparagraph{Analysis.}
Let $\Graph_i$ be the intersection graph (see \defref{visi:graph})
over the points outside $\body$ in the beginning of the $i$\th
iteration. We need the following technical Lemma.

\begin{lemma}
    \lemlab{indep:fin}%
   Suppose $\PS \cap \body = \varnothing$. Then at any iteration
   $i$, the largest independent set in the visibility graph $\Graph_i$
   is at most $\cardin{\Fout}$.
\end{lemma}
\begin{proof}
    For the body $\body$ and point set $\PS$, define the set
    $\RSet \subseteq \PS$ to be the maximum set of points such that
    no two points in $\RSet$ are visible with respect to $\body$.
    Observe that $\RSet$ corresponds to the maximum independent set
    in the visibility graph for $\PS$ with respect to the body
    $\body$. We claim $\cardin{\RSet} \leq \cardin{\Fout}$.  Suppose
    that $\cardin{\RSet} > \cardin{\Fout}$. Given the polygon
    $\Fout$, for each edge $e$ of $\Fout$ consider the line $\Line_e$
    through $e$ and let $\LineA_e^+$ be the halfspace bounded by
    $\Line_e$ that does not contain $\body$ in its interior. Then
    $\Set{\LineA_e^+}{e \in \Fout}$ covers the space
    $\Re^2 \setminus \intX{\body}$. By the hypothesis, one halfspace
    $\LineA_e^+$ must contain at least two points of $\RSet$. But
    then these two such points are visible with respect to $\body$,
    contradicting the definition of $\RSet$.

    We know that the size of the largest independent set (with respect
    to the current inner approximation $\inapx_i$) is monotone
    increasing over the iterations. Hence each independent set can be
    of size at most $\cardin{R} \leq \cardin{\Fout}$.
\end{proof}

\begin{lemma}
    \lemlab{greedy-method-empty}%
    Let $\body$ be a convex body provided via a separation oracle, and
    let $\PS$ be a set of $n$ points in the plane. The randomized greedy
    classification algorithm for emptiness performs
    ${O\bigl((\cardin{\Fout}+1) \log n\bigr)}$ oracle queries
    with high probability. The algorithm always correctly
    verifies that $\PS \cap \body = \varnothing$ or
    finds a witness point of $\PS$ inside $\body$.
\end{lemma}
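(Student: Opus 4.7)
The plan is to mimic the proof of \thmref{greedy-method}, substituting the bound $\indep(\Graph_i) \le \indexX{\PS}$ on the independent set of the visibility graph with the sharper bound $\indep(\Graph_i) \le \nVX{\Fout}$ supplied by \lemref{indep:fin}. The randomization in the algorithm is used only to hunt for a witness quickly, while the quantitative per-round progress is still furnished by the deterministic centerpoint query of the embedded regular iteration.

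The first step is to apply \lemref{iterations:empty:inapx} to absorb the $O(\log n)$ oracle queries spent before the inner approximation becomes non-empty. From there I would work with the visibility graph $\Graph_i$ of \defref{visi:graph}, understood as being defined only over those unclassified points of $\USet_i$ that lie outside $\body$ (an inside query would immediately terminate the algorithm with a witness, so no such points survive in $\Graph_i$ during the analyzed iterations). On this graph, \lemref{indep:fin} gives $\indep(\Graph_i) \le \nVX{\Fout}$; the proof of that lemma extends verbatim to the restricted graph since it uses only the fact that $\Fout$ separates the outside points from $\body$.

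With the independent-set bound in hand I would re-run the epoch machinery from \thmref{greedy-method} essentially unchanged: by \lemref{int-graphs}, at the start of an epoch with depth $\le t$ the graph has $O(\nVX{\Fout}\, t^2)$ edges; by \lemref{depth:reduce} each regular iteration either terminates (inside centerpoint $\Rightarrow$ witness) or erases $\Omega(t^2)$ edges. Hence each halving of the depth costs $O(\nVX{\Fout})$ iterations, and summing over the $O(\log n)$ epochs yields the claimed $O((\nVX{\Fout}+1)\log n)$ oracle queries, since every iteration issues only $O(1)$ queries.

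The delicate point---and what I expect to be the main obstacle---is recovering the \emph{high-probability} claim in the case $\PS \cap \body \ne \varnothing$ when random picks keep missing the (possibly few) inside points. I would handle this by observing that the deterministic analysis above still terminates the algorithm: after $O((\nVX{\Fout}+1)\log n)$ regular iterations every outside point has been certified by some \RemoveOp{}, so whatever remains unclassified is necessarily inside $\body$, and the next random pick exposes a witness. A standard union bound of the failure probabilities over the $O(\log n)$ epochs converts this into the high-probability statement on the query count. Correctness is then immediate: a witness is reported only when the oracle confirms membership in $\body$, and emptiness is declared only when every input point has been labelled outside.
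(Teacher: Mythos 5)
The empty case ($\PS \cap \body = \varnothing$) of your argument matches the paper: substitute $\indep(\Graph_i)\le\nVX{\Fout}$ from \lemref{indep:fin} into the epoch analysis of \thmref{greedy-method}. The gap is in the non-empty case, and it is exactly where you flag ``the main obstacle.''

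Your claim that ``after $O((\nVX{\Fout}+1)\log n)$ regular iterations every outside point has been certified by some \RemoveOp{}'' is not established by the deterministic analysis, for two reasons. First, your parenthetical ``inside centerpoint $\Rightarrow$ witness'' is wrong: the centerpoint $\cpnt$ of $\USet^+$ is in general not a point of $\PS$, so $\cpnt\in\body$ does not give a witness --- it just triggers an \ExpandOp{} which may cover no input point at all. Second, and more seriously, \lemref{depth:reduce} in the \ExpandOp{} case cuts $\Omega(|\USet^+|^2)$ pairs \emph{from $\USet^+$}, but when $\PS\cap\body\ne\varnothing$ the set $\USet^+$ contains never-removed inside points; if $\USetin{+}$ is a constant fraction of $\USet^+$, the cut pairs may overwhelmingly involve an inside endpoint, so the edge count of your restricted outside-only graph $\Graph_i$ need not drop by $\Omega(t^2)$. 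The epoch bookkeeping therefore does not go through, and nothing in the deterministic argument prevents the algorithm from spending $\omega((\nVX{\Fout}+1)\log n)$ iterations stuck in a regime where each centerpoint query only cuts inside--outside pairs.

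The paper resolves this with a dichotomy you are missing. It splits iterations according to whether $\cardin{\USetout{+}}>\cardin{\USet^+}/2$. When the outside points dominate, the centerpoint progress can be charged to the outside-only graph (the \RemoveOp{} case always deletes $\ge|\USet^+|/3$ outside points, since the separating halfspace avoids $\body$; the \ExpandOp{} case is claimed to work with smaller constants), and these iterations are bounded by $O((\nVX{\Fout}+1)\log n)$. When instead the inside points are at least half of $\USet^+$, the \emph{random} pick $\query\in\USet^+$ lands in $\USetin{+}$ with probability $\ge 1/2$ and terminates immediately. The high-probability bound then comes from a geometric-series argument: $O(\log n)$ independent ``inside-dominates'' iterations each failing with probability $\le 1/2$ gives failure probability $1/n^{\Omega(1)}$. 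This is not a union bound over epochs; invoking one, as your last paragraph does, does not describe any event you have set up. You need to explicitly bring the random query into the per-iteration progress argument, not append it at the end after a deterministic phase that was never shown to terminate.
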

\begin{proof}
    Suppose $\PS \cap \body = \varnothing$. Then \lemref{indep:fin}
    along with the proof of \thmref{greedy-method} implies the result,
    by replacing the quantity $\indexX{\PS}$ with $\cardin{\Fout}$.
    If $\PS \cap \body \neq \varnothing$, let $\USet^+$ be a set of
    points in the current iteration,
    $\USetin{+} = \USet^+ \cap \body$, and
    $\USetout{+} = \USet^+ \setminus \USetin{+}$. Observe that
    $\USetin{+}$ remains the same throughout the algorithm execution,
    while $\USetout{+}$ shrinks.  If
    $\cardin{\USetout{+}} > \cardin{\USet^+}/2$, then by
    \lemref{depth:reduce} the number of edges removed from $\Graph_i$
    is $\Omega\pth{\cardin{\USetout{+}}^2}$ (though the hidden
    constants will be smaller). Thus, after at most
    $O\bigl((\cardin{\Fout}+1)\log n \bigr)$ iterations, we must
    encounter an iteration in which there is a set of points
    $\USet^+$ with $\cardin{\USetout{+}} < \cardin{\USet^+}/2$.  Now
    the probability that our randomly sampled point lies in
    $\USetin{+}$ is at least 1/2. In particular, after an additional
    $O(\log n)$ iterations, the probability that we fail to find a
    witness point is at most $1/n^{\Omega(1)}$, thus implying the bound
    on the number of queries.
\end{proof}

\subsection{Reverse emptiness: Are all the points inside?}

Here we consider the problem of verifying that all the given points
are inside the convex body.

\subsubsection{Algorithm}

\myparagraph{Initialization.}
Let $\ch = \CHX{\PS}$. Define $\pntD, \pntD' \in \PS$ to be the
extreme left and right vertices of $\ch$. For the sake of exposition,
by a rotation of the space, we assume without loss of generality
that the segment $\pntD \pntD'$ is parallel to the $x$-axis.
Let $\pntD_1$ and $\pntD_2$ be the vertices adjacent to $\pntD$ on $\ch$.
Similarly define $\pntD'_1$ and $\pntD'_2$ for $\pntD'$. The algorithm
asks the oracle for the status of $\pntD$, $\pntD_1$, $\pntD_2$,
$\pntD'$, $\pntD'_1$, and $\pntD'_2$. If any of them are outside, the
algorithm halts and reports the witness found. Otherwise, all points must
lie either above or below the horizontal segment $\pntD\pntD'$. We now
describe how to handle the points above $\pntD\pntD'$ (the below case is
handled similarly).

Let $\ChUp$ be the polygonal chain that is the portion of $\ch$
contained inside the region bounded by the segment $vv'$ and the two
vertical lines passing through $\pntD$ and $\pntD'$.
Label the edges along $\ch^+$ by $\edgeB_1, \ldots, \edgeB_k$
clockwise from $\pntD$ to $\pntD'$. For $1 \leq i < j \leq k$, let
$\ChRangeY{i}{j}$ be the polygonal chain consisting of the consecutive
edges $\edgeB_i, \ldots, \edgeB_j$.  The algorithm now invokes the
following recursive procedure.

\myparagraph{Recursive procedure.}
A recursive call is described by two indices $(i,j)$, the goal is
to verify that all the points of $\PS$ lying below $\ChRangeY{i}{j}$
are inside $\body$.

For a given recursive instance $(i,j)$, the algorithm proceeds as
follows. Begin by computing the lines $\Line_i$ and $\Line_j$ through
the edges $\edgeB_i$ and $\edgeB_j$ respectively. Let
$\query = \Line_i \cap \Line_j$ be the point of intersection. The
algorithm asks the oracle for the status of $\query$. If $\query$ is
inside, then all points below $\ChRangeY{i}{j}$ must also be in
$\body$. The algorithm classifies the appropriate points and returns.
Otherwise $\query$ is outside, and generates two recursive calls. Let
$\ell = \floor{(i + j)/2}$ and $\edgeB_\ell = (x,y)$ be the middle edge
in the chain $\ChRangeY{i}{j}$. The algorithm queries the oracle with
$x$ and $y$. If either $x$ or $y$ is outside, the algorithm returns
the appropriate witness found. Otherwise $x$ and $y$ are both
inside. The algorithm recurses on the instances $(i, \ell)$ and
$(\ell, j)$.

\subsubsection{Analysis}

The analysis will use the polygon $\PolygonA$, as defined in
\lemref{separate:but}, applied to $\Fin$. Specifically, it is an inner
fence where $\cardin{\PolygonA} = O(\cardin{\Fin})$ and every edge of
$\PolygonA$ lies on a line of $\LinesX{\PS}$, see \Eqref{lines:x}. Note
that $\ch \subseteq \PolygonA$ and every edge of $\ch$ lies on a line
of $\LinesX{\PS}$. For each edge $\edge$ of $\PolygonA$, let
$\Line_\edge \in \LinesX{\PS}$ be the line containing $\edge$. We can
match every edge $\edge$ of $\PolygonA$ with the edge $\edgeB(\edge)$
of $\ch$ that lies on $\Line_\edge$. If an edge $\edgeB$ of $\ch$ is
matched to some edge of $\PolygonA$, we say that $\edgeB$ is
\emphi{active}. A recursive call $(i,j)$ is \emphi{alive} if the query
$\query = \Line_i \cap \Line_j$ generated is outside $\body$.

\begin{lemma}
    \lemlab{num:recursive:calls}%
    The number of alive recursive calls at the same recursive depth
    is at most $\cardin{\PolygonA} = O(\cardin{\Fin})$.
\end{lemma}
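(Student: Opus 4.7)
The plan is to charge each alive recursive call at depth $d$ to an active edge of $\ch$ lying within its range $[i,j]$. The recursion's dyadic halving ensures that the edge ranges of two distinct calls at the same depth share at most one edge, so each active edge is charged by at most two calls per depth. Since the matching $\edge \mapsto \edgeB(\edge)$ is injective, the total number of active edges is at most $\cardin{\PolygonA}$, and together these two facts yield the claimed $O(\cardin{\Fin})$ bound.

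The crux is a sub-claim: every alive call $(i,j)$ contains at least one active edge of $\ch$ in its range. I would prove this by contrapositive: assume no active edge lies in $[i,j]$. Since the upper chain of $\ch$ and the corresponding upper chain of $\PolygonA$ are both convex and the matching preserves the order of decreasing slope along them, let $\edge_a$ be the rightmost edge of $\PolygonA$'s upper chain whose matched $\ch$-edge $\edgeB_a$ has index $a<i$, and $\edge_b$ the leftmost with matched edge $\edgeB_b$ of index $b>j$. Under the assumption, $\edge_a$ and $\edge_b$ are consecutive on $\PolygonA$'s upper chain, meeting at a vertex $\pnt_* = \Line_a \cap \Line_b \in \PolygonA \subseteq \body$.

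Next, I would show $\query = \Line_i \cap \Line_j \in \PolygonA$, contradicting aliveness. Over the $x$-range spanned by $\edge_a \cup \edge_b$, the upper boundary of $\PolygonA$ coincides with the lower envelope $\min(\Line_a(x), \Line_b(x))$, since no intermediate edge of $\PolygonA$ separates them. Convexity of $\ch$ forces the strict slope ordering $s_a > s_i > s_j > s_b$ for the four supporting lines, and each pairwise crossing $\Line_\alpha \cap \Line_\beta$ (for $\alpha < \beta$ among these) occurs at an $x$-coordinate between the $x$-ranges of $\edgeB_\alpha$ and $\edgeB_\beta$. Applying this to the pairs $(\Line_a, \Line_i)$ and $(\Line_j, \Line_b)$ gives $x(\query) \in [x(\edgeB_i), x(\edgeB_j)]$ and places $\Line_i(x(\query))$ strictly below both $\Line_a(x(\query))$ and $\Line_b(x(\query))$. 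Combined with $\query$ lying above $\ch$, this yields $\query \in \PolygonA$.

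The main technical obstacle is the slope/position bookkeeping needed to sandwich $\query$ below both supporting lines at $x(\query)$, which I expect to follow cleanly from the pairwise-crossing property once set up. Boundary cases, where $\edge_a$ or $\edge_b$ fails to exist because all active edges lie on one side of $[i,j]$, can be handled by treating the extreme vertices $\pntD$ or $\pntD'$ of $\PolygonA$'s upper chain as substitute endpoints.
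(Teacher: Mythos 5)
Your argument is correct and follows essentially the same route as the paper's: the crux of both is to show, by contradiction, that every alive recursive call $(i,j)$ must span at least one active edge, deducing from the absence of active edges in the range that the two edges of $\PolygonA$ ``bracketing'' $\ChRangeY{i}{j}$ are consecutive and hence meet at a vertex of $\PolygonA$ whose wedge (equivalently, your lower envelope $\min(\Line_a, \Line_b)$) traps the query $\query$ inside $\PolygonA \subseteq \body$, contradicting aliveness. Your slope/coordinate bookkeeping is just an explicit rendering of the paper's wedge argument, and your factor-of-two charging (to account for sibling calls sharing the boundary edge $\edgeB_\ell$) is a slightly more careful bookkeeping than the paper's, but neither changes the $O(\cardin{\Fin})$ bound.
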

\begin{proof}
    Fix an alive recursive call $(i,j)$ with edges
    $\edgeB_i, \ldots, \edgeB_j$ of $\ch$. Suppose that none of these
    edges are active. Because $\PolygonA$ is an inner fence for $\PS$
    and $\body$, there must be a vertex $\pntD$ of $\PolygonA$ lying
    on or above the chain $\ChRangeY{i}{j}$. Let $\edge_1$ and
    $\edge_2$ be the edges adjacent to $v$ in $\PolygonA$. For
    $\ell = 1, 2$, consider $\edgeB(\edge_\ell)$, the edge of $\ch$
    matched to $e_\ell$. Since there are no active edges in
    $\ChRangeY{i}{j}$, we have
    $\edgeB(\edge_\ell) \not\in \{\edgeB_i, \ldots, \edgeB_j\}$ for
    $\ell = 1, 2$.  This readily implies that all vertices in the
    polygonal chain $\ChRangeY{i}{j}$ are contained in the wedge
    formed by $\pntD$ and the two edges $\edge_1$ and $\edge_2$. See
    \figref{edges}.

    \begin{figure}[h]
        \centering
        \includegraphics%
        {figs/active_alive_instance}
        \caption{}
        \figlab{edges}
    \end{figure}

    In particular, the query $\query$ generated is inside $\PolygonA$
    and thus $\body$. Contradicting that the recursive call was alive.
    It follows that each alive recursive call must contain at least
    one active edge. The number of active edges is bounded by
    $\cardin{\PolygonA}$, implying the result.
\end{proof}

\begin{lemma}
    \lemlab{reverse:emptiness}%
    Let $\body$ be a convex body provided via a separation oracle, and
    let $\PS$ be a set of $n$ points in the plane. The
    classification algorithm for reverse emptiness performs
    $O\bigl(\cardin{\Fin} \log n\bigr)$ oracle queries.
    The algorithm correctly verifies that $\PS \cap \body = \PS$
    or finds a witness point of $\PS$ outside $\body$.
\end{lemma}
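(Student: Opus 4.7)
The plan is to charge all oracle queries to the alive recursive calls, whose count per recursion depth is bounded by \lemref{num:recursive:calls}.

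First, I would bound the recursion depth by $O(\log n)$: each call $(i,j)$ subdivides its edge range using $\ell = \floor{(i+j)/2}$, and the initial range has at most $n$ edges of $\ch$. Each recursive call performs at most three oracle queries (one for $q=\Line_i\cap\Line_j$, and, if $q$ is outside, two for the endpoints $x,y$ of $\edgeB_\ell$), while the initialization performs $O(1)$ queries. Any oracle answer of ``outside'' at $\pntD,\pntD_1,\pntD_2,\pntD',\pntD'_1,\pntD'_2$, at $x$ or $y$, halts the algorithm with a valid witness, so we may assume these answers are all ``inside''.

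Next, I would count the recursive calls. Call a recursive instance \emph{alive} if its query $q$ returned outside $\body$ (these are precisely the calls that spawn two children) and \emph{dead} otherwise. Every dead call is either the root or a child of an alive call, so the total number of calls is at most $2A+O(1)$, where $A$ is the number of alive calls. By \lemref{num:recursive:calls}, each recursion depth contains at most $O(\cardin{\Fin})$ alive calls, and since the depth is $O(\log n)$, we have $A = O(\cardin{\Fin}\log n)$. Multiplying by the $O(1)$ queries per call gives the claimed $O(\cardin{\Fin}\log n)$ bound on oracle queries.

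For correctness, suppose the algorithm terminates without reporting a witness. Every call that returned because $q\in\body$ correctly certifies all points of $\PS$ lying below $\ChRangeY{i}{j}$: the endpoints of $\edgeB_i,\ldots,\edgeB_j$ were verified inside $\body$ by ancestor calls, $q$ is inside $\body$, and by convexity of $\body$ the convex hull of these points lies in $\body$. Since $\ch = \CHX{\PS}$, every point of $\PS$ below $\ChRangeY{i}{j}$ is contained in this convex hull, hence in $\body$. Calls that recurse to a single edge have both endpoints verified inside $\body$, and the base region below a single edge of $\ch$ contains no points of $\PS$, so nothing remains to classify. The symmetric recursion handles points below the segment $\pntD\pntD'$. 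The only real subtlety is confirming that a single positive answer at $q$ certifies the entire sub-chain; this is precisely the convexity argument above, and the key combinatorial bound is already supplied by \lemref{num:recursive:calls}.
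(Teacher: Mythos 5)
Your proposal is correct and follows essentially the same route as the paper's proof: bound the recursion depth by $O(\log n)$, invoke \lemref{num:recursive:calls} to cap the number of alive calls per level at $O(\cardin{\Fin})$, and charge $O(1)$ queries per call. You go a bit further than the paper on two points. First, you explicitly account for the dead calls (each is the root or a child of an alive call, so their total is at most $2A+O(1)$), whereas the paper implicitly treats this. Second, you spell out a correctness argument via convexity of $\body$, which the paper omits entirely in the proof. One small imprecision there: it is not the case that every endpoint of $\edgeB_i,\ldots,\edgeB_j$ has been verified inside $\body$ by ancestor calls --- only the two extreme vertices of the range have. That is still sufficient, since those two extreme vertices together with $\query$ (all inside $\body$) span a triangle containing the sub-chain, and combined with $\pntD,\pntD'$ this covers the pocket below $\ChRangeY{i}{j}$; but as written your statement overclaims. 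This does not affect the query bound, which is the substance of the lemma, and your overall argument is sound.
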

\begin{proof}
    Suppose all points of $\PS$ are inside $\body$. By
    \lemref{num:recursive:calls}, there are at most $O(\cardin{\Fin})$
    alive recursive calls at each level of the recursion tree. Since
    the depth of the recursion tree is $O(\log n)$, the number of
    total alive recursive calls throughout the algorithm is
    $O(\cardin{\Fin} \log n)$.  At each alive recursive call of the
    above algorithm, $O(1)$ queries are made. This implies the result.

    Otherwise not all points of $\PS$ are inside $\body$. At least
    one such point outside of $\body$ must be a vertex on the convex
    hull $\ch$. Hence after at most $O(\cardin{\Fin} \log n)$ oracle
    queries, this vertex will be queried and found to be outside
    $\body$.
\end{proof}

\section{Application: Minimizing a convex function}
\seclab{applications}

Suppose we are given a set of $n$ points $\PS$ in the plane and a convex
function $\fn : \Re^2 \to \Re$. Our goal is to compute the point in $\PS$
minimizing $\min_{\pnt \in \PS} \fn(\pnt)$. Given a point
$\pnt \in \Re^2$, assuming that we can evaluate $\fn$ and the derivative
of $\fn$ at $\pnt$ efficiently, we show that the point in $\PS$ minimizing
$\fn$ can be computed using $O(\indexX{\PS} \log^2 n)$ evaluations
to $\fn$ or its derivative.

\begin{definition}
    Let $\fn : \Re^d \to \Re$ be a convex function. For a number
    $c \in \Re$, define the \emphi{level set of $\fn$} as
    $\LvSetY{\fn}{c} = \Set{\pnt \in \Re^d}{\fn(\pnt) \leq c}$.  If
    $\fn$ is a convex function, then $\LvSetY{\fn}{c}$ is a convex set
    for all $c \in \Re$.
\end{definition}
\begin{definition}
    Let $\fn : \Re^d \to \Re$ be a convex (and possibly
    non-differentiable) function. For a point $\pnt \in \Re^d$, a
    vector $v \in \Re^d$ is a \emphi{subgradient} of $\fn$ at $\pnt$
    if for all $\pntq \in \Re^d$,
    $\fn(\pntq) \geq \fn(\pnt) + \DotProdY{v}{\pntq - \pnt}$.  The
    \emphi{subdifferential} of $\fn$ at $\pnt \in \Re^d$, denoted by
    $\partial \fn(\pnt)$, is the set of all subgradients $v \in \Re^d$
    of $\fn$ at $\pnt$.
\end{definition}

It is well known that when the domain for $\fn$ is $\Re^d$ and $\fn$ is
a convex function, then $\partial \fn(\pnt)$ is a non-empty set of all
$\pnt \in \Re^d$ (for example, see \cite[Chapter 3]{f-ina-13}).

Let $\opt = \min_{\pa \in \PS} \fn(\pa)$. We have that
$\LvSetY{\fn}{\opt} \cap \PS = \Set{\pa \in \PS}{\fn(\pa) = \opt}$
and $\LvSetY{\fn}{\opt'} \cap \PS = \varnothing$ for all $\opt' < \opt$.
Hence, the problem is reduced to determining the smallest value $r$
such that $\LvSetY{\fn}{r} \cap \PS$ is non-empty.

\begin{lemma}
  \lemlab{decision}
  Let $\PS$ be a collection of $n$ points in the plane.
  For a given value $r$, let $\body_r = \LvSetY{\fn}{r}$.
  The set $\body_r \cap \PS$ can be computed using
  $O(\indexX{\PS} \log n)$ evaluations to $\fn$ or
  its derivative. If $\TX$ is the time needed to evaluate $\fn$
  or its derivative, the algorithm can be implemented in
  $O(n\log^2 n\log\log n + \TX\cdot \indexX{\PS} \log n)$ expected
  time.
\end{lemma}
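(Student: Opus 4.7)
The plan is to observe that $\body_r$ is a convex body and then simulate a separation oracle for $\body_r$ using only evaluations of $\fn$ and its subdifferential. Once this is done, the result follows immediately by applying \thmref{greedy-method} and \lemref{impl-greedy}, where each oracle query is implemented in $O(\TX)$ time.

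First, I would note that $\body_r = \LvSetY{\fn}{r}$ is convex since $\fn$ is convex. To build the separation oracle, given a query point $\query \in \Re^2$, the algorithm evaluates $\fn(\query)$. If $\fn(\query) \leq r$, then $\query \in \body_r$ and the oracle reports this. Otherwise, $\fn(\query) > r$ and the algorithm computes a subgradient $v \in \partial \fn(\query)$, which is non-empty since $\fn : \Re^2 \to \Re$ is convex and defined everywhere. I claim that the closed halfplane
\begin{equation*}
    h^+ = \Set{\pntx \in \Re^2}{\DotProdY{v}{\pntx - \query} \leq r - \fn(\query)}
\end{equation*}
contains $\body_r$ but not $\query$, and so certifies that $\query \notin \body_r$. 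Indeed, for any $\pntq \in \body_r$, the subgradient inequality gives
\begin{equation*}
    \fn(\pntq) \geq \fn(\query) + \DotProdY{v}{\pntq - \query},
\end{equation*}
and combined with $\fn(\pntq) \leq r$, we get $\DotProdY{v}{\pntq - \query} \leq r - \fn(\query)$, so $\pntq \in h^+$. On the other hand, $\DotProdY{v}{\query - \query} = 0 > r - \fn(\query)$, so $\query \notin h^+$. Thus, this halfplane is a valid separator, and the entire oracle computation uses only $O(1)$ evaluations of $\fn$ or its derivative, each costing $O(\TX)$ time.

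With the separation oracle in hand, the algorithm of \thmref{greedy-method} classifies all points of $\PS$ with respect to $\body_r$ using $O(\indexX{\PS} \log n)$ oracle queries, each requiring $O(\TX)$ time. The implementation result of \lemref{impl-greedy} then gives the total expected running time $O(n\log^2 n\log\log n + \TX \cdot \indexX{\PS} \log n)$, completing the proof. The only conceptual step is verifying that the subgradient yields a valid separating halfplane; the rest of the argument is a direct reduction to previously established results.
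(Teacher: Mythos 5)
Your proof takes essentially the same approach as the paper: reduce to \thmref{greedy-method} and \lemref{impl-greedy} by simulating a separation oracle for $\body_r$ using one evaluation of $\fn$ (and, if needed, one subgradient) per query. The one small difference is in how the separating halfplane is formed: the paper takes the line through $\query$ normal to $v$ (tangent to $\LvSetY{\fn}{\fn(\query)}$ at $\query$), noting in a footnote that this must be shifted infinitesimally so that $\query$ lies strictly outside; you instead shift the halfplane explicitly by the amount $r - \fn(\query) < 0$, so that $\body_r \subseteq h^+$ and $\query \notin h^+$ hold directly by the subgradient inequality --- a slightly cleaner formulation of the same idea.
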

\begin{proof}
The Lemma follows by applying \thmref{greedy-method}.  Indeed, let
$\body_r = \LvSetY{\fn}{r}$ be the convex body of interest. It remains
to design a separation oracle for $\body_r$.

Given a query point $\query \in \Re^2$, first compute
$c = \fn(\query)$. If $c \leq r$, then report that
$\query \in \body_r$.  Otherwise, $c > r$. In this case, compute
some gradient vector $v$ in $\partial \fn(\query)$. Using the vector
$v$, we can obtain a line $\Line$ tangent to the boundary of
$\LvSetY{\fn}{c}$ at $\query$. As
$\LvSetY{\fn}{r} \subseteq \LvSetY{\fn}{c}$, $\Line$ is a separating
line for $\query$ and $\body_r$, as desired.
As such, the number of separation oracle queries needed to determine
$\body_r \cap \PS$ is bounded by $O(\indexX{\PS} \log n)$
by \thmref{greedy-method}.

The implementation details of \thmref{greedy-method} are given in
\lemref{impl-greedy}.
\end{proof}

\myparagraph{The algorithm.}
Let $\alpha = \min_{\pa \in \PS} \fn(\pa)$.
For a given number $r \geq 0$, set $\PS_r = \LvSetY{\fn}{r} \cap \PS$.
We develop a randomized algorithm to compute $\alpha$.

Set $\PS_0 = \PS$. In the $i$\th iteration, the algorithm chooses a
random point $\pa_i \in \PS_{i-1}$ and computes $r_i = \fn(\pa_i)$.
Next, we determine $\PS_{r_i}$ using \lemref{decision}. In doing so,
we modify the separation oracle of \lemref{decision} to store the
collection of queries $S_i \subseteq \PS$ that satisfy $\fn(\ps) = r_i$
for all $\ps \in S_i$. We set $\PS_{i+1} = \PS_{r_i} \setminus S_i$.
Observe that all points $\pa \in \PS_{i+1}$ have $\fn(\pa) < r_i$.
The algorithm continues in this fashion until we reach an iteration
$j$ in which $\cardin{\PS_{j+1}} \leq 1$. If $\PS_{j+1} = \{\pb\}$ for
some $\pb \in \PS$, output $\pb$ as the desired point minimizing $\fn$.
Otherwise $\PS_{j+1} = \varnothing$, implying that
$\PS_{r_j} = S_j$, and the algorithm outputs any point in the set
$S_j$.

\myparagraph{Analysis.}
We analyze the running time of the algorithm. To do so, we argue
that the algorithm invokes the algorithm in \lemref{decision} only
a logarithmic number of times.

\begin{lemma}
  \lemlab{jumps}
  In expectation, the above algorithm terminates after $O(\log n)$
  iterations.
\end{lemma}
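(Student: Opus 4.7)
The plan is to view the algorithm through the lens of deferred decisions, and reduce the analysis to counting strict left-to-right minima in a uniformly random permutation of $\PS$ ordered by $\fn$-value.

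First, I would observe the invariant that the working set is always a strict sub-level set: after iteration $i$, we have $\PS_i = \Set{\pa \in \PS}{\fn(\pa) < r_i}$. Indeed, $\PS_{r_i}$ consists of every $\pa$ with $\fn(\pa) \leq r_i$, and $S_i$ removes exactly the points where equality holds, so $\PS_{i+1}$ is the sub-level set strictly below $r_i$. Because the pivots $r_i$ are the $\fn$-values of sampled points, the sequence $r_1 > r_2 > \cdots > r_T$ is strictly decreasing.

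Next, by the principle of deferred decisions, I would couple the algorithm with a single uniformly random permutation $\pi$ of $\PS$: at iteration $i$ the sampled pivot $\pa_i^\ast$ is the first point of $\pi$ that lies in $\PS_{i-1}$. This is valid because, conditional on $\PS_{i-1}$, the first element of $\pi$ inside $\PS_{i-1}$ is uniformly distributed over $\PS_{i-1}$, which is exactly the distribution of $\pa_i^\ast$.

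Let $v_1 < v_2 < \cdots < v_k$ be the distinct $\fn$-values in $\PS$, let $n_j$ be the number of points with $\fn$-value $v_j$, and let $N_j = n_1 + \cdots + n_j$. The key claim I would prove is
\[
    \Pr\bigl[\text{some } r_i \text{ equals } v_j\bigr] \;=\; \frac{n_j}{N_j}.
\]
To see this, consider the restriction of $\pi$ to $A_j = \Set{\pa}{\fn(\pa) \leq v_j}$, which is a uniformly random permutation of a set of size $N_j$. If the first point of $\pi$ inside $A_j$ has value $v_m$ for some $m < j$, then all earlier points of $\pi$ have $\fn$-value $> v_j \geq v_m$, so the invariant forces that point to be chosen as a pivot before any $v_j$-point is ever sampled; from then on $\PS_i$ contains no points with value $\geq v_m$, so $v_j$ never appears as a pivot. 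Conversely, if the first point of $\pi$ in $A_j$ has value exactly $v_j$, the same argument shows $v_j$ is used as a pivot. The probability of the latter event is $n_j / N_j$.

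Finally, by linearity of expectation the expected number of iterations satisfies
\[
    \Ex{T} \;=\; \sum_{j=1}^{k} \frac{n_j}{N_j}
    \;\leq\; \sum_{j=1}^{k} \sum_{m = N_{j-1}+1}^{N_j} \frac{1}{m}
    \;=\; \sum_{m=1}^{n} \frac{1}{m}
    \;=\; H_n \;=\; O(\log n),
\]
using $(N_j - N_{j-1})/N_j \leq \sum_{m=N_{j-1}+1}^{N_j} 1/m$. The only real subtlety is justifying the deferred-decisions coupling and verifying the sub-level-set invariant in the presence of ties; once those are in place, the rest is the standard harmonic-sum analysis familiar from randomized selection.
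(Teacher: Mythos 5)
Your proof is correct and takes a genuinely different route from the paper. The paper argues by reinterpreting the process as uniformly sampling a \emph{distinct value} from the remaining set $V_{r_{i-1}}$ and then solving the recurrence $T(i) = 1 + \frac{1}{i-1}\sum_{j<i} T(j)$, which yields $T(i) = O(\log i)$. You instead use a deferred-decisions coupling with a random permutation, identify the pivots with the strict left-to-right minima of $\fn$-values along $\pi$, prove the exact identity $\Pr[v_j \text{ is ever a pivot}] = n_j/N_j$, and finish with the harmonic sum. The two differ in a meaningful way: the algorithm actually samples a random \emph{point}, so when ties exist the induced distribution on pivot values is not uniform over $V_{r_{i-1}}$ — it is weighted by multiplicity — and the paper's ``reinterpretation'' quietly changes the distribution without justification. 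Your analysis sidesteps this entirely: the identity $\sum_j n_j/N_j \le H_n$ holds for any multiplicities, so it certifies the $O(\log n)$ bound for the process the algorithm actually runs. The paper's recurrence is shorter once the reinterpretation is granted; your argument is a bit longer but tighter and more transparent about where the randomness is being used (essentially the classic quicksort-pivot-count argument). Both rely on the same unstated assumption that $S_i$ sweeps up \emph{all} points of $\PS_{r_i}$ with $\fn$-value exactly $r_i$, so that $\PS_{i+1}$ is the strict sub-level set; that is inherited from the algorithm description and is not a gap introduced by your proof.
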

\begin{proof}
    Let $V = \Set{\fn(\pa)}{\pa \in \PS}$ and $N = \cardin{V}$. For a
    number $r$, define $V_r = \Set{i \in V}{i \leq r}$. Notice that we
    can reinterpret the algorithm described above as the following
    random process. Initially set $r_0 = \max_{i \in V} i$. In the
    $i$\th iteration, choose a random number $r_i \in
    V_{r_{i-1}}$. This process continues until we reach an iteration
    $j$ in which $\cardin{V_{r_j}} \leq 1$.

  We can assume without loss of generality that
  $V = \{1, 2, \ldots, N\}$. For an integer $i \leq N$,
  let $T(i)$ be the expected number of iterations needed for the
  random process to terminate on the set $\{1, \ldots, i\}$. We have
  that $T(i) = 1 + \frac{1}{i-1} \sum_{j=1}^{i-1} T(i-j)$, with
  $T(1) = 0$. This recurrence solves to $T(i) = O(\log i)$.
  As such, the algorithm repeats this random process
  $O(\log N) = O(\log n)$ times in expectation.
\end{proof}

\begin{lemma}
    \lemlab{discrete:min} Let $\PS$ be a set of $n$ points in $\Re^2$
    and let $\fn : \Re^2 \to \Re$ be a convex function. The point in
    $\PS$ minimizing $\fn$ can be computed using
    $O(\indexX{\PS} \log^2 n)$ evaluations to $\fn$ or its
    derivative. The bound on the number of evaluations holds in
    expectation. If $\TX$ is the time needed to evaluate $\fn$ or its
    derivative, the algorithm can be implemented in
    $O(n\log^3 n\log\log n + \TX\cdot \indexX{\PS} \log^2 n)$ expected
    time.
\end{lemma}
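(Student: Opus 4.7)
The plan is to directly combine \lemref{decision} with \lemref{jumps}. By \lemref{jumps}, the algorithm described above terminates after $O(\log n)$ iterations in expectation. Each iteration performs exactly one call to the procedure of \lemref{decision} in order to compute $\PS_{r_i} = \LvSetY{\fn}{r_i} \cap \PS$, where $r_i = \fn(\pa_i)$. By \lemref{decision}, each such call uses a \emph{deterministic} $O(\indexX{\PS} \log n)$ evaluations of $\fn$ (or its subgradient), and can be implemented in expected time $O(n\log^2 n \log\log n + \TX \cdot \indexX{\PS} \log n)$.

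Since the per-iteration evaluation count is deterministic, linearity of expectation immediately yields the stated bound of $O(\indexX{\PS} \log^2 n)$ evaluations in expectation. For the running-time bound, the randomness used inside \lemref{decision} (coming from the centerpoint computations in \lemref{impl-greedy}) is independent of the randomness used to pick the sample $\pa_i$ at each iteration. Conditioning on the sequence of outer iterations, the expected per-iteration running time is bounded uniformly by $O(n \log^2 n \log\log n + \TX \cdot \indexX{\PS} \log n)$, so by the tower property the overall expected time is at most the expected number of iterations times this quantity, giving $O(n \log^3 n \log\log n + \TX \cdot \indexX{\PS} \log^2 n)$.

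For correctness, I would verify the invariant that every $\pa \in \PS_{i+1}$ satisfies $\fn(\pa) < r_i$, since by construction the points with $\fn(\pa) = r_i$ are peeled off into $S_i$ via the modified separation oracle. Thus the sequence $r_i$ is strictly decreasing and the process terminates with a set $\PS_{j+1}$ of size at most one. If $\PS_{j+1} = \{\pb\}$, then every previously discarded point has strictly larger $\fn$-value, so $\pb$ is a global minimizer; if $\PS_{j+1} = \varnothing$, then every point of $S_j$ attains $r_j = \min_{\pa \in \PS} \fn(\pa)$, and outputting any of them is correct.

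The only mildly delicate step --- and it is really the only one --- is composing the two layers of randomness cleanly. This reduces to a routine application of the tower law once one observes the independence of the outer random choices $\pa_i$ from the internal randomness of \lemref{impl-greedy}. No new geometric arguments beyond those already established in \lemref{decision} and \lemref{jumps} are needed.
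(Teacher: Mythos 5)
Your proof is correct and follows exactly the same route as the paper, which states only "The result follows by combining \lemref{decision} and \lemref{jumps}." You simply flesh out the routine bookkeeping (linearity of expectation for the deterministic per-call evaluation count, and the tower law for the two layers of randomness in the running time), all of which the authors left implicit.
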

\begin{proof}
  The result follows by combining \lemref{decision} and
  \lemref{jumps}.
\end{proof}

\subsection{The discrete geometric median}

Let $\PS$ be a set of $n$ points in $\Re^d$. For all $x \in \Re^d$, define
the function $\fn(x) = \sum_{\pb \in \PS - x} \normX{x - \pb}$.
The \emphi{discrete geometric median} is defined as the point in $\PS$
minimizing the quantity $\min_{\pa \in \PS} \fn(\pa)$.

Note that $\fn$ is convex, as it is the sum of convex functions.
Furthermore, given a point $\pnt$, we can compute $\fn(\pnt)$ and the
derivative of $\fn$ at $\pnt$ in $O(n)$ time. As such, by
\lemref{discrete:min}, we obtain the following.

\begin{lemma}
  \lemlab{discrete:med}
  Let $\PS$ be a set of points in $\Re^2$. Then
  the discrete geometric median of $\PS$ can be computed in
  $O(n\log^2 n \cdot (\log n \log\log n + \indexX{\PS}))$
  expected time.
\end{lemma}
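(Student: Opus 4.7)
The plan is to invoke \lemref{discrete:min} with $\fn$ taken to be the sum-of-distances function $\fn(x) = \sum_{\pb \in \PS} \normX{x - \pb}$ and verify that the two hypotheses of that lemma are met: $\fn$ is convex, and its value (and a subgradient) can be evaluated at a query point in bounded time $\TX$. The arithmetic at the end is the only non-trivial bookkeeping, and it amounts to plugging in $\TX = O(n)$ and collecting logs.

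First I would observe that for any fixed $\pb \in \PS$, the function $x \mapsto \normX{x - \pb}$ is convex (the Euclidean norm is a convex function, and a translate of a convex function is convex). Since a sum of convex functions is convex, $\fn$ is convex, which is the only structural hypothesis needed by \lemref{discrete:min}.

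Next I would bound $\TX$. Given a query point $\pnt \in \Re^2$, evaluating $\fn(\pnt)$ requires computing and summing $n$ Euclidean distances, which takes $O(n)$ time. A subgradient of $\fn$ at $\pnt$ is obtained by summing the unit vectors $(\pnt - \pb)/\normX{\pnt - \pb}$ over those $\pb \in \PS$ with $\pb \neq \pnt$ (and, when $\pnt \in \PS$, adding any vector of length at most $1$ to handle the non-differentiable term); this is again an $O(n)$ computation. Hence $\TX = O(n)$.

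Finally, applying \lemref{discrete:min} with $\TX = O(n)$, the expected running time is
\begin{equation*}
    O\bigl(n\log^3 n \log\log n + n \cdot \indexX{\PS} \log^2 n \bigr)
    = O\bigl(n \log^2 n \cdot (\log n \log\log n + \indexX{\PS})\bigr),
\end{equation*}
which is the claimed bound. There is no real obstacle here; the only thing to be slightly careful about is the subgradient computation when the query point coincides with an input point, but that is handled by the standard subdifferential of the norm at the origin (the closed unit disk).
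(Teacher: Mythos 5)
Your proof is correct and follows exactly the paper's approach: note that $\fn$ is a sum of convex functions and hence convex, observe that $\fn$ and a subgradient can be evaluated in $\TX = O(n)$ time, and plug into \lemref{discrete:min}. Your additional remark about handling the subdifferential of $\normX{\cdot}$ at a coincident point is a detail the paper elides, but it does not change the argument.
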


\begin{remark}
  For a set of $n$ points $\PS$ chosen uniformly at random from the unit
  square, it is known that in expectation $\indexX{\PS} = \Theta(n^{1/3})$
  \cite{ab-lcc-09}.
  As such, the discrete geometric median for such a random set $\PS$
  can be computed in $O(n^{4/3} \log^2 n)$ expected time.
\end{remark}

\section{Conclusion and open problems}

In this paper we presented various algorithms for classifying
points with oracle access to an unknown convex body. As far as the
authors are aware, this exact problem has not been studied within the
computational geometry community previously. However, since the problem
is closely related to active learning and has further applications (such as
discretely minimizing a convex function), we believe that this is an
interesting problem to study. We now pose some open problems.

\begin{compactenumA}
  \item Develop a more natural instance-optimal algorithm in 2D that
  improves upon the $O(\log^2 n)$ approximation. Alternatively, develop
  algorithms in which the number of queries is parameterized by different
  functions of the input instance.
  \smallskip%

  \item An algorithm in 3D that is instance-optimal up to some
  additional factors (see the beginning of \apndref{ex:sep:pr} for
  the definition of the separation price in higher dimensions).
  \smallskip%

  \item Any results beyond three dimensions is unknown. The greedy
  algorithm (\thmref{greedy-method} and \thmref{greedy-method-3d})
  easily extends to $\Re^d$. However the analysis in higher dimensions
  will most likely reveal that the algorithm makes (ignoring logarithmic
  factors) of the order of ${\indexX{\PS}}^{O(d)}$ queries, which is only
  interesting when $\indexX{\PS}$ is much smaller than $n$.
  \smallskip%

  \item In \lemref{discrete:med} we gave a randomized algorithm for computing
  the discrete geometric median in expected time $\tldO(n \cdot \indexX{\PS})$
  (where $\tldO$ hides logarithmic factors in $n$). The bottleneck of the
  algorithm was in the computation of the function and the gradient, which
  naively requires $O(n)$ time. Is it possible to speed up the gradient
  computation by introducing additional randomization or optimization
  techniques? Improving the running time further is an open problem.
\end{compactenumA}

\BibTexMode{%
   \SubmitVer{%
      \bibliographystyle{spmpsci}%
   }%
   \RegVer{%
      \bibliographystyle{alpha}%
   }%
   \bibliography{resistance}%

@string{DCG    = "Discrete Comput. Geom."}

@string{SODA_2004 = "Proc. 15th ACM-SIAM Sympos. Discrete Algs. {\em(SODA)}"}

@string{FOCS_2017 = "Proc. 58th Annu. IEEE Sympos. Found. Comput. Sci. {\em(FOCS)}"}

@string{FOCS_2018 = "Proc. 59th Annu. IEEE Sympos. Found. Comput. Sci. {\em(FOCS)}"}

@STRING{STOC_2016 = "Proc. 48th ACM Sympos. Theory Comput. {\em(STOC)}"}

@string{ICALP_2020="Proc. 47th Int. Colloq. Automata Lang. Prog. {\em(ICALP)}"}

@string{IJCAI_2007="Proc. 20th Int. Joint Conf. on AI {\em(IJCAI)}"}

@article{klps-ujrcf-86,
  author       = {Kedem, K. and Livne, R. and Pach, J. and
                  Sharir, M.},
  title        = {On the union of {Jordan} regions and collision-free
                  translational motion amidst polygonal obstacles},
  journal      = DCG,
  volume       = {1},
  number       = {1},
  pages        = {59--71},
  year         = {1986},
  publisher    = {Springer},
  doi          = {10.1007/BF02187683}
}

@article{b-gct-82,
  author       = {I. B{\'a}r{\'a}ny},
  title        = {A generalization of {Carath{\'e}odory's} theorem},
  journal      = {Discrete Math.},
  volume       = {40},
  number       = {2-3},
  pages        = {141--152},
  year         = {1982},
  publisher    = {Elsevier},
  doi          = {10.1016/0012-365X(82)90115-7}
}

@article{cs-arscg-89,
  author       = {K. L. Clarkson and P. W. Shor},
  title        = {Applications of random sampling in computational
                  geometry, {II}},
  journal      = DCG,
  volume       = {4},
  year         = {1989},
  pages        = {387--421},
  doi          = {10.1007/BF02187740}
}

@Book{m-ldg-02,
  keyent       = {m-ldg-02},
  author       = {J. Matou{\v s}ek},
  title        = {Lectures on Discrete Geometry},
  publisher    = {Springer},
  series       = {Grad. Text in Math.},
  volume       = {212},
  year         = {2002},
  url          = {http://kam.mff.cuni.cz/~matousek/dg.html},
  doi          = {10.1007/978-1-4613-0039-7/}
}

@inproceedings{c-oramt-04,
  author       = {T. M. Chan},
  editor       = {J. Ian Munro},
  title        = {An optimal randomized algorithm for maximum {Tukey}
                  depth},
  booktitle    = SODA_2004,
  pages        = {430--436},
  publisher    = {{SIAM}},
  year         = {2004},
  url          = {http://dl.acm.org/citation.cfm?id=982792.982853}
}

@inproceedings{pavg-eppam-13,
  author       = {F. Panahi and A. Adler and
                  A. F. van der Stappen and K. Goldberg},
  title        = {An efficient proximity probing algorithm for metrology},
  booktitle    = {Int. Conf. on Automation Science and Engineering,
               {CASE} 2013},
  pages        = {342--349},
  year         = {2013},
  url          = {https://doi.org/10.1109/CoASE.2013.6653995},
  doi          = {10.1109/CoASE.2013.6653995},
}

@article{hkmr-sevps-16,
  author       = {S. Har{-}Peled and
                  N. Kumar and
                  D. M. Mount and
                  B. Raichel},
  title        = {Space Exploration via Proximity Search},
  journal      = DCG,
  volume       = {56},
  number       = {2},
  pages        = {357--376},
  year         = {2016},
  url          = {https://doi.org/10.1007/s00454-016-9801-7},
  doi          = {10.1007/s00454-016-9801-7},
}

@unpublished{j-agca-18,
  title        = {An animation of the greedy classification algorithm in 2D},
  date         = {2018},
  organization = {Youtube},
  author       = {S. Har{-}Peled and M. Jones and S. Rahul},
  url          = {https://www.youtube.com/watch?v=IZX0VQdIgNA},
  note         = {\url{https://www.youtube.com/watch?v=IZX0VQdIgNA}}
}

@techreport{s-alls-09,
  author       = {Settles, Burr},
  title        = {Active Learning Literature Survey},
  institution  = {Computer Science, Univ. Wisconsin, Madison},
  number       = {\#1648},
  month        = jan,
  url          =
                  {https://minds.wisconsin.edu/bitstream/handle/1793/60660/TR1648.pdf?sequence=1&isAllowed=y},
  year         = {2009}
}

@inproceedings{gg-oalumi-07,
  author    = {Yuhong Guo and
               Russell Greiner},
  title     = {Optimistic Active-Learning Using Mutual Information},
  booktitle = IJCAI_2007,
  pages     = {823--829},
  year      = {2007},
  url       = {http://ijcai.org/Proceedings/07/Papers/132.pdf}
}

@article{cal-igwal-94,
  author    = {David A. Cohn and
               Les E. Atlas and
               Richard E. Ladner},
  title     = {Improving Generalization with Active Learning},
  journal   = {Machine Learning},
  volume    = {15},
  number    = {2},
  pages     = {201--221},
  year      = {1994},
  url       = {https://doi.org/10.1007/BF00993277},
  doi       = {10.1007/BF00993277}
}

@article{a-qcl-87,
  author    = {Dana Angluin},
  title     = {Queries and Concept Learning},
  journal   = {Machine Learning},
  volume    = {2},
  number    = {4},
  pages     = {319--342},
  year      = {1987},
  url       = {https://doi.org/10.1007/BF00116828},
  doi       = {10.1007/BF00116828}
}

@article{hw-ensrq-87,
  author    = {David Haussler and
               Emo Welzl},
  title     = {{$\varepsilon$}-Nets and Simplex Range Queries},
  journal   = {Discrete {\&} Computational Geometry},
  volume    = {2},
  pages     = {127--151},
  year      = {1987},
  url       = {https://doi.org/10.1007/BF02187876},
  doi       = {10.1007/BF02187876}
}

@article{vc-ucrfe-71,
  keyent       = {vc-ucrfe-71},
  author       = "V. N. Vapnik and A. Y. Chervonenkis",
  title        = "On the uniform convergence of relative frequencies
                  of events to their probabilities",
  journal      = "Theory Probab. Appl.",
  volume       = 16,
  year         = 1971,
  pages        = "264--280"
}

@inproceedings{mw-ncwe-03,
  title       = {New constructions of weak epsilon-nets},
  author      = {Matou{\v s}ek, J. and Wagner, U.},
  booktitle   = {Proceedings of the nineteenth annual symposium on Computational geometry},
  pages       = {129--135},
  year        = {2003},
  organization= {ACM}
}

@inproceedings{r-ibwen-18,
  author       = {Natan Rubin},
  title        = {An Improved Bound for Weak Epsilon-Nets in the
                  Plane},
  editor       = {Mikkel Thorup},
  booktitle    = FOCS_2018,
  pages        = {224--235},
  publisher    = {{IEEE} Computer Society},
  year         = {2018},
  url          = {https://doi.org/10.1109/FOCS.2018.00030},
  doi          = {10.1109/FOCS.2018.00030}
}

@article{ab-lcc-09,
  author	= {G. Ambrus and I. B\'ar\'any},
  title		= {Longest convex chains},
  journal	= {Rand. Struct. \& Alg.},
  volume	= {35},
  number	= {2},
  delpublisher	= {Wiley Subscription Services, Inc., A Wiley Company},
  issn		= {1098-2418},
  url		= {http://dx.doi.org/10.1002/rsa.20269},
  doi		= {10.1002/rsa.20269},
  pages		= {137--162},
  year		= {2009}
}

@book{f-ina-13,
  title        = {An introduction to nonsmooth analysis},
  author       = {Ferrera, Juan},
  year         = {2013},
  publisher    = {Academic Press},
  doi          = {10.1016/C2013-0-15234-8},
  url          = {https://doi.org/10.1016/C2013-0-15234-8},
  address      = "Boston"
}

@article{mn-dfc-90,
  author    = {Kurt Mehlhorn and
               Stefan N{\"{a}}her},
  title     = {Dynamic Fractional Cascading},
  journal   = {Algorithmica},
  volume    = {5},
  number    = {2},
  pages     = {215--241},
  year      = {1990},
  url       = {https://doi.org/10.1007/BF01840386},
  doi       = {10.1007/BF01840386}
}

@book{ps-cg-85,
  author    = {Franco P. Preparata and
               Michael Ian Shamos},
  title     = {Computational Geometry - An Introduction},
  series    = {Texts and Monographs in Computer Science},
  publisher = {Springer},
  year      = {1985},
  url       = {https://doi.org/10.1007/978-1-4612-1098-6},
  doi       = {10.1007/978-1-4612-1098-6}
}

@Inbook{b-rpcba-07,
  editor    = {Weil, Wolfgang},
  title     = {Random Polytopes, Convex Bodies, and Approximation},
  bookTitle = {Stochastic Geometry: Lectures given at the C.I.M.E. Summer School held in Martina Franca, Italy, September 13--18, 2004},
  year      = {2007},
  publisher = {Springer Berlin Heidelberg},
  address   = {Berlin, Heidelberg},
  pages     = {77--118},
  doi       = {10.1007/978-3-540-38175-4_2},
  url       = {https://doi.org/10.1007/978-3-540-38175-4_2}
}

@article{d-mescs-74,
  keyent       = {d-mescs-74},
  author       = {R. M. Dudley},
  title        = {Metric entropy of some classes of sets with
                  differentiable boundaries},
  journal      = {J. Approx. Theory},
  year         = 1974,
  volume       = 10,
  number       = 3,
  pages        = {227--236},
  doi          = "https://doi.org/10.1016/0021-9045(74)90120-8",
  notes        = {correction: Volume 26, Number 2, June 1979,
                  pp.192--193}
}

@article{bf-cvd-87,
  author    = {Imre B{\'{a}}r{\'{a}}ny and
               Zolt{\'{a}}n F{\"{u}}redi},
  title     = {Computing the Volume is Difficulte},
  journal   = DCG,
  volume    = {2},
  pages     = {319--326},
  year      = {1987},
  url       = {https://doi.org/10.1007/BF02187886},
  doi       = {10.1007/BF02187886}
}

@article{es-nqbplha-19,
  author    = {Esther Ezra and
               Micha Sharir},
  title     = {A Nearly Quadratic Bound for Point-Location in Hyperplane Arrangements,
               in the Linear Decision Tree Model},
  journal   = DCG,
  volume    = {61},
  number    = {4},
  pages     = {735--755},
  year      = {2019},
  url       = {https://doi.org/10.1007/s00454-018-0043-8},
  doi       = {10.1007/s00454-018-0043-8}
}

@inproceedings{klmz-accq-17,
  author    = {Daniel M. Kane and
               Shachar Lovett and
               Shay Moran and
               Jiapeng Zhang},
  title     = {Active Classification with Comparison Queries},
  booktitle = FOCS_2017,
  pages     = {355--366},
  year      = {2017},
  url       = {https://doi.org/10.1109/FOCS.2017.40},
  doi       = {10.1109/FOCS.2017.40}
}

@article{k-vcdkvdp-20,
  author    = {Andrey Kupavskii},
  title     = {The VC-dimension of k-vertex d-polytopes},
  journal   = {CoRR},
  volume    = {abs/2004.04841},
  year      = {2020},
  url       = {https://arxiv.org/abs/2004.04841},
  archivePrefix = {arXiv},
  eprint    = {2004.04841}
}

@inproceedings{clmps-gmnlt-16,
  author    = {Michael B. Cohen and
               Yin Tat Lee and
               Gary L. Miller and
               Jakub Pachocki and
               Aaron Sidford},
  editor    = {Daniel Wichs and
               Yishay Mansour},
  title     = {Geometric median in nearly linear time},
  booktitle = STOC_2016,
  pages     = {9--21},
  publisher = {{ACM}},
  year      = {2016},
  url       = {https://doi.org/10.1145/2897518.2897647},
  doi       = {10.1145/2897518.2897647}
}

@InProceedings{hjr-alcbl-20,
  author       = {Sariel Har{-}Peled and Mitchell Jones and Saladi
                  Rahul},
  title        = {Active Learning a Convex Body in Low Dimensions},
  booktitle    = ICALP_2020,
  pages        = {64:1--64:17},
  series       = {Leibniz International Proceedings in Informatics
                  (LIPIcs)},
  ISBN         = {978-3-95977-138-2},
  ISSN         = {1868-8969},
  year         = {2020},
  volume       = {168},
  editor       = {Artur Czumaj and Anuj Dawar and Emanuela Merelli},
  publisher    = {Schloss Dagstuhl--Leibniz-Zentrum f{\"u}r
                  Informatik},
  address      = {Dagstuhl, Germany},
  URL          = {https://drops.dagstuhl.de/opus/volltexte/2020/12471},
  URN          = {urn:nbn:de:0030-drops-124711},
  doi          = {10.4230/LIPIcs.ICALP.2020.64},
  annote       = {Keywords: Approximation algorithms, computational
                  geometry, separation oracles, active learning}
}
}

\BibLatexMode{\printbibliography}

\appendix

\section{Expected separation price for random points}
\apndlab{ex:sep:pr}

We first extend the notion of separation price (see
\secref{lower:bound}) to higher dimensions. For a closed
convex $d$-dimensional polytope $F$, we let $f_k(F)$ denote
the number of $k$-dimensional faces of $F$.

\begin{definition}[Separation price in higher dimensions]
  \deflab{lower:bound:high}
  Let $\PS$ be a set of points and $\body$ be a convex body in
  $\Re^d$. The inner fence $\Fin$ is a closed convex $d$-dimensional
  polytope with the minimum number of vertices, such that
  $\Fin \subseteq \body$ and $\body \cap \PS = \Fin \cap \PS$.
  Similarly, the outer fence $\Fout$ is a closed convex
  $d$-dimensional polytope with the minimum number of facets, such
  that $\body \subseteq \Fout$ and $\body \cap \PS = \Fout \cap \PS$.
  The separation price is defined as
  $\priceY{\PS}{\body} = f_0(\Fin) + f_{d-1}(\Fout)$.
\end{definition}

By extending the argument of \lemref{lower:bound} to use
\defref{lower:bound:high}, one can prove the following.

\begin{lemma}
    \lemlab{lower:bound:high} %
    Given a point set $\PS$ and a convex body $\body$ in $\Re^d$, any
    algorithm that classifies the points of $\PS$ in relation to
    $\body$, must perform at least $\priceY{\PS}{\body}$ separation
    oracle queries.
\end{lemma}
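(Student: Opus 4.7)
The plan is to follow the same two-part argument used in \lemref{lower:bound}, adapting only the combinatorial counts to match the asymmetric definition in \defref{lower:bound:high}: vertices are counted on the inner fence, facets on the outer fence. Fix any correct classification algorithm on the instance $(\PS, \body)$, let $Q$ be its multiset of queries, and partition $Q$ into $\Qin = Q \cap \body$ and $\Qout = Q \setminus \body$ according to the oracle's response.

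For the inner part, I will argue that $\CHX{\Qin}$ is a legitimate inner fence candidate for $(\PS, \body)$. Certifying that a point $p \in \PS \cap \body$ is inside $\body$ under a separation-oracle model is only possible if $p \in \CHX{\Qin}$, because otherwise the convex set $\CHX{\Qin}$ itself would be a body consistent with all oracle answers (each separating halfspace returned for a query in $\Qout$ contains $\CHX{\Qin}$ by convexity) yet excluding $p$. Thus $\PS \cap \body \subseteq \CHX{\Qin} \subseteq \body$, so $\CHX{\Qin}$ is an inner fence, and by minimality $f_0(\Fin) \leq f_0(\CHX{\Qin}) \leq \cardin{\Qin}$.

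For the outer part, each query $\query \in \Qout$ yields a closed halfspace $h_\query^+$ containing $\body$ with $\query$ in the complementary open halfspace. Let $H = \bigcap_{\query \in \Qout} h_\query^+$, a convex polytope containing $\body$. Any point of $\PS \setminus \body$ must be certified outside by lying in the complement of some $h_\query^+$, hence $H \cap \PS \subseteq \body$, so $H$ is an outer fence candidate. The key numerical step is that $H$ is defined as the intersection of exactly $\cardin{\Qout}$ halfspaces, hence has at most $\cardin{\Qout}$ facets, giving $f_{d-1}(\Fout) \leq f_{d-1}(H) \leq \cardin{\Qout}$. Adding the two bounds yields $\cardin{Q} \geq f_0(\Fin) + f_{d-1}(\Fout) = \priceY{\PS}{\body}$.

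There is no serious obstacle beyond bookkeeping here: the whole point of counting vertices of $\Fin$ but facets of $\Fout$ in \defref{lower:bound:high} is precisely to make the bounds $f_0(\CHX{\Qin}) \leq \cardin{\Qin}$ and $f_{d-1}(H) \leq \cardin{\Qout}$ tight, which in 2D collapse to the symmetric vertex-count used in \lemref{lower:bound}. The only point that deserves a careful sentence is why $\PS \cap \body \subseteq \CHX{\Qin}$, since the algorithm might in principle try to infer membership in $\body$ by indirect means; the witness body $\CHX{\Qin}$ itself, being consistent with every oracle response received, rules out any such indirect inference.
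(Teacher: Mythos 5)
Your proof is correct and takes essentially the same approach the paper indicates: it extends the two-part query-splitting argument of \lemref{lower:bound}, with the only new bookkeeping being that $\CHX{\Qin}$ contributes a vertex per query while $H = \bigcap_{\query \in \Qout} h_\query^+$ contributes a facet per query, matching the asymmetric counts $f_0(\Fin)$ and $f_{d-1}(\Fout)$ in \defref{lower:bound:high}. Your explicit adversary phrasing of why $\PS \cap \body \subseteq \CHX{\Qin}$ (that $\CHX{\Qin}$ is itself a convex body consistent with every oracle response) is a slightly more careful rendering of the paper's terse ``otherwise, there would be a point of $\body \cap \PS$ that is not classified,'' but it is the same idea.
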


Informally, for any fixed convex body $\body$ and a set of $n$ points
$\PS$ chosen uniformly at random from the unit cube, the separation
price is sublinear (approaching linear as the dimension increases).

\begin{lemma}
    \lemlab{ex:sep:pr}%
    Let $\PS$ be a set of $n$ points chosen uniformly at random from
    the unit cube $[0,1]^d$, and let $\body$ be a convex body in
    $\Re^d$, with $\VolX{\body} \geq c$ for some constant $c \leq 1$.
    Then $\Ex{\priceY{\PS}{\body}} = O(n^{1 - 2/(d+1)})$, where $O$
    hides constants that depend on $d$ and $\body$.
\end{lemma}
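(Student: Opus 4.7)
The plan is to bound the expected number of vertices of an inner fence and the expected number of facets of an outer fence separately, each by $O(n^{(d-1)/(d+1)})$; since $(d-1)/(d+1) = 1 - 2/(d+1)$, summing the two bounds yields the claim.

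For the inner fence, I would take $\Fin := \CHX{\PS \cap \body}$. This is a valid inner fence: $\Fin \subseteq \body$ by convexity, every point of $\PS \cap \body$ lies in $\Fin$, and no point of $\PS \setminus \body$ lies in $\Fin$. Conditional on the count $m = \cardin{\PS \cap \body} \leq n$, the points of $\PS \cap \body$ are i.i.d.\ uniform samples from the convex body $\body \cap [0,1]^d$. By the classical B\'ar\'any-type upper bound on the expected number of extreme points of a random polytope inscribed in an arbitrary convex body in $\Re^d$, this expectation is $O(m^{(d-1)/(d+1)})$, with constants depending only on $d$ and $\body$. Hence $\Ex{f_0(\Fin)} = O(n^{(d-1)/(d+1)})$.

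For the outer fence, I would apply Bronshtein's outer polytope approximation: for every $\eps > 0$, there exists a polytope $P$ with $O(\eps^{-(d-1)/2})$ facets such that $\body \subseteq P \subseteq \body_\eps := \body + \eps B^d$, where $B^d$ is the unit ball. The set of intruder points $\PS \cap (P \setminus \body) \subseteq \PS \cap (\body_\eps \setminus \body)$ has, by Steiner's formula, expected size $n \cdot \VolX{(\body_\eps \setminus \body) \cap [0,1]^d} = O(n\eps)$, with constant depending on the surface area of $\body$. For each intruder point $\pnt$, I would adjoin to $P$ an extra half-space tangent to $\body$ that separates $\pnt$ from $\body$; intersecting $P$ with these extra half-spaces yields a valid outer fence with at most $O(\eps^{-(d-1)/2}) + O(n\eps)$ facets in expectation. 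Choosing $\eps = \Theta(n^{-2/(d+1)})$ balances the two terms and gives $\Ex{f_{d-1}(\Fout)} = O(n^{(d-1)/(d+1)})$.

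The main obstacle is verifying that the two classical external inputs (B\'ar\'any for inscribed random polytopes, Bronshtein for outer approximation) apply in sufficient generality for arbitrary convex bodies $\body$, and tracking the dependence of the hidden constants on $\body$ (in particular on its diameter and surface area). Both bounds are standard in convex geometry, so this is bookkeeping rather than a genuine difficulty. A minor subtlety is handling edge effects where $\body$ meets $\partial [0,1]^d$: since only the portion of the construction inside $[0,1]^d$ affects the intruder count and the inner hull, one may apply the Steiner estimate to $(\body_\eps \setminus \body) \cap [0,1]^d$, whose volume remains $O(\eps)$.
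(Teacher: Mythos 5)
Your proposal is correct and essentially mirrors the paper's own argument: the inner fence is handled by the standard expected-convex-hull bound for random points in a convex body, and the outer fence by an $O(\eps^{-(d-1)/2})$-facet outer polytope approximation (the paper cites Dudley where you cite Bronshtein, but the content is the same) combined with adding one tangent halfspace per intruder in the $\eps$-shell, balancing at $\eps = \Theta(n^{-2/(d+1)})$. There is no substantive gap.
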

\begin{proof}
    It is known that for convex bodies $\body$, the expected number of
    vertices of the convex hull of $\PS \cap \body$ is
    $O(n^{1 - 2/(d+1)})$.  Indeed, since $\VolX{\body} \geq c$, the
    expected number of points of $\PS$ that fall inside $\body$ is
    $m = \Theta(n)$ (and these bounds hold with high probability by
    applying any Chernoff-like bound). It is known that for $m$ points
    chosen uniformly at random from $\body$, the expected size of the
    convex hull of points inside $\body$ is
    $O(m^{1 - 2/(d+1)}) = O(n^{1 - 2/(d+1)})$ \cite{b-rpcba-07}.  This
    readily implies that $\Ex{f_0(\Fin)} = O(n^{1 - 2/(d+1)})$.

    To bound $\Ex{f_{d-1}(\Fout)}$, we apply a result of Dudley
    \cite{d-mescs-74} which states the following. Given a convex body
    $\body$ and a parameter $\eps > 0$, there exists a
    convex body $D$, which is a polytope
    formed by the intersection of $O(\eps^{-(d-1)/2})$ halfspaces,
    such that $\body \subseteq D \subseteq (1+\eps)\body$, where
    $(1+\eps)\body = \Set{\pnt \in \Re^d}{\exists \pntq \in \body : \|
       \pnt - \pntq \| \leq \eps}$.

    We claim that the number of points of $\PS$ that fall inside $D
    \setminus \body$, plus the number of halfspaces defining $D$, is
    an upper bound on the size of the outer fence. Indeed, for each
    point $\pnt$ that falls in inside $D \setminus \body$, let
    $\pntq$ be its nearest neighbor in $\body$ (naturally $\pntq$ lies
    on $\BX{\body}$). Let $\LineA_\pnt$ be the hyperplane that is
    perpendicular to the segment $\pnt\pntq$ and passing through the
    midpoint of $pq$. Next, let $\LineA^+_\pnt$ be the halfspace
    bounded by $\LineA_\pnt$ such that $\body \subseteq \LineA^+_\pnt$.
    If $H$ is the collection of $O(\eps^{-(d-1)/2})$ halfspaces defining
    $D$, then it is easy to see that the polytope defined by
    \begin{equation*}
        \Bigl({\bigcap_{\pnt \in \PS \cap (D \setminus \body)} \LineA^+_\pnt}\Bigr)
        \ts%
        \bigcap
        \,
        \Bigl({\bigcap_{\LineA^+ \in H} \LineA^+}\Bigr)
    \end{equation*}
    separates the boundary of $\body$ from $\PS \setminus C$ (i.e., it
    is an outer fence). See \figref{demonstration}.

    \begin{figure}[h]
        \includegraphics[page=1,width=0.3\linewidth]%
        {figs/separation}
        \hfill%
        \includegraphics%
        [page=3,width=0.3\linewidth]%
        {figs/separation}
        \hfill%
        \includegraphics%
        [page=4,width=0.3\linewidth]%
        {figs/separation}
        \caption{}
        \figlab{demonstration}
    \end{figure}

    We now bound the size of this inner fence. Since
    $\VolX{D} - \VolX{\body} \leq \VolX{(1+\eps)\body} - \VolX{\body}
    \leq O(\eps)$, we have that
    $\Ex{\cardin{\PS \cap (D \setminus \body)}} = O(\eps n)$.
    Combining both inequalities,
    \begin{align*}
      \Ex{f_{d-1}(\Fout)}
      \leq \Ex{\cardin{\PS \cap (D \setminus \body)}} + O(\eps^{-(d-1)/2})
      = O\pth{\eps n + \frac{1}{\eps^{(d-1)/2}}}.
    \end{align*}
    Choose $\eps = 1/n^{2/(d+1)}$ to balance both terms, so that
    $\Ex{f_{d-1}(\Fout)} = O(n^{1 - 2/(d+1)})$.
\end{proof}

The next Lemma shows that the bound of \lemref{ex:sep:pr} is tight
in the worst case.

\begin{lemma}
    Let $\PS$ be a set of $n$ points chosen uniformly at random from
    the hypercube $[-2,2]^d$, and let $\body$ be a unit radius ball
    centered at the origin. Then
    $\Ex{\priceY{\PS}{\body}} \geq \Ex{f_0(\Fin)} = \Omega(n^{1 - 2/(d+1)})$,
    where $\Omega$ hides constants depending on $d$.
\end{lemma}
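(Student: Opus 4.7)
The plan is to lower bound $f_0(\Fin)$ via a disjoint cap-packing argument on $\BX{\body}$. The key observation is that whenever a point $\pnt \in \PS \cap \body$ lies close to $\BX{\body}$, the containment $\Fin \subseteq \body$ forces $\Fin$ to have a vertex inside a small spherical cap near $\pnt$, and disjoint caps will yield distinct vertices.

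First I would set $r = c\,n^{-2/(d+1)}$ for a small constant $c > 0$, and select a maximal family of pairwise disjoint caps $C_i = \{y \in \body : \langle y, n_i \rangle \geq 1 - r\}$, $i = 1, \ldots, N$, where the unit normals $n_i$ range over a $\Theta(\sqrt{r})$-separated subset of the unit sphere. A standard spherical packing estimate gives $N = \Theta(r^{-(d-1)/2}) = \Theta(n^{(d-1)/(d+1)})$, and the usual cap-volume identity yields $\VolX{C_i} = \Theta(r^{(d+1)/2}) = \Theta(1/n)$. Since $\PS$ consists of $n$ i.i.d.\ points uniform in $[-2,2]^d$ (of volume $4^d$), the probability that a given cap is hit by $\PS$ is $1 - (1 - \Omega(1/n))^n = \Omega(1)$.

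Next I would argue that every occupied cap contributes a distinct vertex to $\Fin$. Fix $i$ with $\pnt_i \in C_i \cap \PS$. Since $\pnt_i \in \PS \cap \body \subseteq \Fin$, we have $\max_{v \in \Fin} \langle v, n_i \rangle \geq \langle \pnt_i, n_i \rangle \geq 1 - r$. As $\Fin$ is a polytope, this maximum is attained at a vertex $v_i$ of $\Fin$; since $v_i \in \Fin \subseteq \body$ we also have $\normX{v_i} \leq 1$, and hence $v_i \in C_i$. Pairwise disjointness of the $C_i$ then makes the selected vertices distinct, yielding $f_0(\Fin) \geq \bigl|\{i : C_i \cap \PS \neq \varnothing\}\bigr|$.

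Finally, linearity of expectation gives $\Ex{f_0(\Fin)} \geq N \cdot \Omega(1) = \Omega(n^{1 - 2/(d+1)})$, which matches the order of the upper bound in \lemref{ex:sep:pr}. The only real subtlety is the packing bookkeeping in the first step: one must choose the $n_i$ so that the packing estimate $N = \Theta(r^{-(d-1)/2})$ and the cap-volume identity $\VolX{C_i} = \Theta(r^{(d+1)/2})$ hold simultaneously, and so that the tuned choice $r = \Theta(n^{-2/(d+1)})$ makes $N$ as large as possible while keeping the per-cap hit probability bounded below by an absolute constant.
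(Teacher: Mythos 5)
Your proposal is correct and follows essentially the same cap-packing approach as the paper's proof: both tile (part of) $\BX{\body}$ with spherical caps tuned so that each has volume $\Theta(1/n)$ (hence is hit by $\PS$ with constant probability), and observe that an occupied cap forces a vertex of $\Fin$ nearby, giving $\Omega(r^{-(d-1)/2}) = \Omega(n^{1-2/(d+1)})$ vertices in expectation.

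If anything, your version is slightly tighter in its bookkeeping. The paper works with a maximal $\delta$-separated net on $\BX{\body}$ and caps of base radius $2\delta$, which \emph{overlap}; it then asserts that each cap contains a distinct vertex of $\Fin$ without explicitly ruling out that a single vertex of $\Fin$ could be counted toward several overlapping caps (a bounded-overlap argument fixes this, but the paper leaves it implicit). You instead insist on \emph{pairwise disjoint} caps $C_i$ and extract the vertex via a supporting-hyperplane argument: maximizing $\langle \cdot, n_i\rangle$ over the polytope $\Fin$ is attained at a vertex $v_i$, and $\pnt_i \in \PS \cap \body \subseteq \Fin$ forces $\langle v_i, n_i\rangle \geq 1 - r$ while $\Fin \subseteq \body$ forces $\|v_i\|\le 1$, so $v_i \in C_i$; disjointness then makes the $v_i$ pairwise distinct and linearity of expectation finishes. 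This makes the ``distinct vertices'' step airtight where the paper is terse. All the quantitative pieces in your write-up (the $\Theta(\sqrt{r})$-separated normals giving $N = \Theta(r^{-(d-1)/2})$, the cap volume $\Theta(r^{(d+1)/2})$, and the choice $r = \Theta(n^{-2/(d+1)})$ making $\VolX{C_i} = \Theta(1/n)$ relative to the cube's $\Theta(1)$ volume) check out.
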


\begin{proof}
    For a parameter $\delta$ to be chosen, let $Q \subseteq \BX{\body}$
    be a maximal set of points such that:
    \begin{compactenumi}
      \item for any $\pnt \in \BX{\body}$, there is a point
      $\pntq \in Q$ such that $\| \pnt - \pntq \| \leq \delta$, and
      \item for any two points
      $\pnt, \pntq \in Q$, $\| \pnt - \pntq \| \geq \delta$.
    \end{compactenumi}
    Note that $\cardin{Q} = \Omega(1/\delta^{d-1})$. For each $\pnt
    \in Q$, we let $\gamma_\pnt$ be the spherical cap that is
    ``centered'' at $\pnt$ (in the sense that the center of the base of
    $\gamma_\pnt$, $\pnt$, and the origin are collinear) and has base
    radius $2\delta$.  Let $\Gamma = \Set{\gamma_\pnt}{\pnt \in Q}$.
    By construction, the caps of $\Gamma$ cover the surface of
    $\body$.

    By setting $\delta = 1/n^{1/(d+1)}$, we claim that for
    each cap $\gamma \in \Gamma$, in expectation $\Omega(1)$ points of
    $\PS$ fall inside $\gamma$. This implies that there must be a
    vertex of the inner fence inside $\gamma$, and this holds for
    all caps in $\Gamma$. As such, the size of the inner fence is at
    least
    $\cardin{Q} = \Omega(1/\delta^{d-1}) = \Omega(n^{1 - 2/(d+1)})$.

    \begin{figure}[h]
        \centerline{%
           \includegraphics{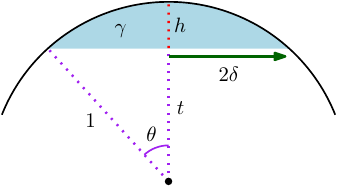}%
        }
        \caption{}%
        \figlab{what:what}
    \end{figure}

    To prove the claim, for all $\gamma \in \Gamma$, we show that
    $\VolX{\gamma} = \Omega(1/n)$, and hence
    $\Ex{\cardin{\PS \cap \gamma}} = \Omega(1)$. By construction, the
    cap has a polar angle of $\theta = \Omega(\delta)$, see
    \figref{what:what}. Indeed, we have that
    $\theta \geq \sin(\theta) = 2\delta$ for $\theta \in [0,\pi/2]$
    (which holds when $n$ is sufficiently large). Let $t$ denote the
    distance from the origin to the center of the base of $\gamma$.
    Then the height $h$ of the spherical cap is
    $h = 1 - t = 1 - \cos(\theta) \geq \theta^2/6 = \Omega(\delta^2)$
    (using the inequality $\cos(x) \leq 1 - x^2/6$).  Since the volume
    of the base of $\gamma$ is $\Omega(\delta^{d-1})$, we have that
    $\VolX{\gamma} = \Omega(h \delta^{d-1}) = \Omega(\delta^{d+1}) =
    \Omega(1/n)$, as required.
\end{proof}

\end{document}